\newcolumntype{L}[1]{>{\raggedright\let\newline\\arraybackslash\hspace{0pt}}m{#1}}
\newcolumntype{C}[1]{>{\centering\let\newline\\arraybackslash\hspace{0pt}}m{#1}}
\newcolumntype{R}[1]{>{\raggedleft\let\newline\\arraybackslash\hspace{0pt}}m{#1}}
\newtheorem{theorem}{Theorem}[section]
\newtheorem{assumption}{Assumption}[section]
\newtheorem{proposition}{Proposition}[section]
\newtheorem{corollary}{Corollary}[section]
\newtheorem{lemma}{Lemma}[section]
\newtheorem{example}{Example}[section]
\newtheorem{remark}{Remark}[section]
\newtheorem{definition}{Definition}	
\patchcmd{\NAT@test}{\else \NAT@nm}{\else \NAT@nmfmt{\NAT@nm}}{}{}
\DeclareRobustCommand\citepos
	\let\NAT@nmfmt\NAT@posfmt
\let\NAT@ctype\z@\NAT@partrue
\let\NAT@orig@nmfmt\NAT@nmfmt
\def\NAT@posfmt#1{\NAT@orig@nmfmt{#1's}}
\begin{document}

\begin{titlepage}
	\title{Inference in Linear Dyadic Data Models \\ with Network Spillovers}
	\author{Nathan Canen\thanks{University of Houston, University of Warwick and Research Economist at NBER. Department of Economics, University of Warwick, Coventry CV4 7AL, U.K. Email: ncanen@uh.edu} \and Ko Sugiura\thanks{Department of Economics, University of Houston, Science Building 3581 Cullen Boulevard Suite 230, Houston, 77204-5019, Texas, United States. Email: ksugiura@uh.edu}}
	\date{\today}
	\maketitle
	\begin{abstract}
		\noindent When using dyadic data (i.e., data indexed by pairs of units), researchers typically assume a linear model, estimate it using Ordinary Least Squares and conduct inference using ``dyadic-robust" variance estimators. The latter assumes that dyads are uncorrelated if they do not share a common unit (e.g., if the same individual is not present in both pairs of data). We show that this assumption does not hold in many empirical applications because indirect links may exist due to network connections, generating correlated outcomes. Hence, ``dyadic-robust'' estimators can be biased in such situations. We develop a consistent variance estimator for such contexts by leveraging results in network statistics. Our estimator has good finite sample properties in simulations, while allowing for decay in spillover effects. We illustrate our message with an application to politicians' voting behavior when they are seating neighbors in the European Parliament.\\ 
		\vspace{0in}\\
		\noindent\textbf{Keywords:} Dyadic data, Networks, Inference, Cross-sectional dependence, Congressional Voting.\\
		\vspace{0in}\\
		\bigskip
	\end{abstract}
	\setcounter{page}{0}
	\thispagestyle{empty}
\end{titlepage}

\pagebreak \newpage



\section{Introduction}\label{sec:Introduction}


Dyadic data is categorized by the dependence between two sets of sampled units (dyads). For example, exports between the U.S. and Canada depend on both countries (and, plausibly, their characteristics). This contrasts to classical data in the social sciences that only depends on a single unit of observation (e.g., the GDP of the U.S., or a politician's vote in a roll-call). 

The empirical relevance of dyadic data is showcased by its widespread use, which has increased over the past two decades (\citet{Graham-2020a} provides an extensive review). For example, applications are found in political economy (correlation in voting behavior in Parliament across seating neighbors, \citet{Harmon_et_al-2019}), international political economy and trade (export-import outcomes across countries, \citet{Anderson_and_van_Wincoop-2003}), international relations (\citet{hoff04}, for a salient example), among many others. In fact, dyadic data is considered to be dominant in quantitative international relations \citep{poast}. In these examples, applied researchers typically model the dependence between dyadic outcomes and observable characteristics using a linear model, which they then estimate using Ordinary Least Squares (OLS). However, inference on such estimators for the linear parameters is more complex. 

The main approach in recent applied work has been the use of the so-called ``dyadic-robust" estimators (e.g., \citet{Cameron_et_al-2011}, \citet{Aronow_et_al-2015}, and \citet{Tabord-Meehan-2019}, among others). Such estimators build on the widely-used assumption in dyadic data that the error terms for dyad $(i, j)$ and for dyad $(k, l)$ can only be correlated if they share a unit (see \citet{Aronow_et_al-2015} and \citet{Tabord-Meehan-2019} for a discussion; \citet{Cameron_and_Miller-2014wp} for a review). 

In this paper, we first argue that such an assumption does not hold in many applications using dyadic data where dyads may be indirectly connected along a network.\footnote{This is a concrete class of applied examples where the assumption fails. The possibility that cross-sectional dependence in dyadic data might be more extensive than assumed has been pointed out by \citet{Cameron_and_Miller-2014wp} and \cite{cranmerdemarais}.} Figure \ref{fig:ExampleOfPolNetwork} presents a simple example in the context of politicians in Congress, whose votes or decisions depend on their seating neighbors. It is completely possible that behavior across dyads $(A,B)$ and $(C,D)$ might be correlated along unobservables because they have many indirect connections (in the figure, through $A$ sitting next to $B$, who sits next to $C$).\footnote{We expand on these examples in the next section. Such spillovers could be further rationalized as individual-level unobserved heterogeneity: e.g., an unmeasured preference for voting Yes, or a preference for trading with a certain country (see \citet{Graham-2020b} and references therein for details).} We show that such spillovers invalidate the assumptions for consistency of existing ``dyadic-robust" variance estimators through generating interdependence, implying that they are biased for the true asymptotic variance when dyads may be correlated even when they do not share a common unit.

\begin{figure}[htbp]
	\begin{center}
	\caption{Hypothetical Example of a Network in Parliament}
	\includegraphics[scale=0.4, trim = 6cm 5cm 12cm 3.5cm, clip]{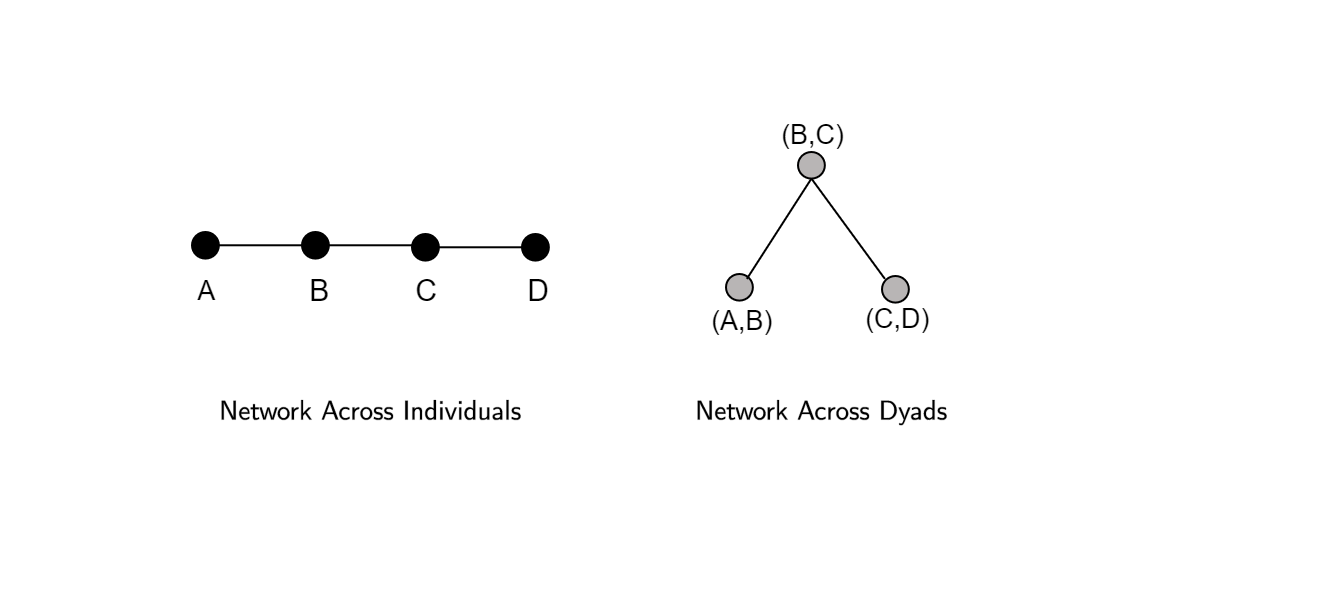}
		\label{fig:ExampleOfPolNetwork}
		\end{center}
		\begin{tablenotes}[para,flushleft]
			\item[] {\footnotesize {\sl Notes}: The left figure shows a hypothetical example of politician networks based on seating arrangements: $A$ sits beside $B$, who sits beside $C$, who sits beside $D$. The right-hand figure illustrates the resulting network among active dyads. As dyads $(A,B)$ and $(B,C)$ share a unit, they are indirectly linked in the dyadic network. However, though dyads $(A,B)$ and $(C,D)$ do not have a politician in common, they might still be correlated through two indirect links: namely, $B$ sits beside $C$, who sits beside $D$. Hence, $D$'s actions can affect politician $A$.}
		\end{tablenotes}
\end{figure}

To deal with these issues, we develop a consistent variance estimator that explicitly accounts for such network spillovers even with dyadic data, thereby complementing existing approaches (e.g., \cite{Aronow_et_al-2015}).\footnote{We provide an extensive comparison of the relative benefits of each approach in the next section. We note here, though, that neither approach subsumes the other, as they depend on different assumptions and may be more appropriate for different applications.} We prove that our proposed variance estimator is consistent for the true variance of the OLS estimator in linear models with dyadic data when the cross-sectional dependence follows an observed (exogenous) network. Our main insight is that the dependence across all dyads, including indirect spillovers, can be rewritten as correlations across a specific network over dyads. This allows us to apply the framework of \citet{Kojevnikov_et_al-2021} to such network random variables, although here it is a network over dyads, rather than individuals. Monte Carlo simulations show that our proposed estimator has good finite sample properties and outperforms other estimators for the relevant contexts.

To help practitioners, we then provide a step-by-step guideline on whether our estimator may be appropriate to their context. As we describe, this choice depends on: (i) whether spillovers from indirectly connected dyads are likely to be present, (ii) whether the researcher observes/constructs the network among dyads through which spillovers propagate, and (iii) whether those spillovers are likely to be persistent. Our variance estimator is consistent for the asymptotic variance of the OLS estimator even under (i)-(iii). And our estimator can account for decay in propagation, as Corollary \ref{coro:ExogenousRegression:NetworkHACEstimators:InconsistencyResult} and Example \ref{example:MaximumAdmissibleBiasInTheDyadic-RobustVarianceEstimator} illustrate. On the other hand, if decay is very high, or spillovers along the network do not exist, then those same results imply that the estimator of \cite{Aronow_et_al-2015} will also be consistent.

Finally, we illustrate the extent to which neglecting network spillovers with dyadic data may bias inference results. Beyond Monte Carlo simulations, we revisit the application in \citet{Harmon_et_al-2019} of voting in the European Parliament. The authors study whether random seating arrangements (based on naming conventions) induce neighboring politicians to agree with one another in policy votes. The outcome, whether politicians $i$ and $j$ vote the same way on a policy, is dyadic in nature. However, $i$ and $j$'s votes may be positively correlated even if they are not neighbors: for instance, $i$ and $j$ may sit on either side of common neighbors $k$ and $l$, who influence them both, and this seating arrangement is observed. This chain of influences is sufficient to induce strong positive correlation across non-dyads. We show that neglecting such higher order spillovers has significant empirical consequences: their estimated variance using the estimator in \citet{Aronow_et_al-2015} is roughly $22\%$ smaller than using our consistent estimator accounting for such spillovers; while the estimate based on the Eicker-Huber-White estimator ignoring spillovers is approximately 73\% smaller than our proposal, consistent with the arguments of \cite{pinto}.


\subsection{Related Literature}\label{subsec:RelatedLiterature}

The use of dyadic data in Political Science has a rich history, particularly in International Relations. However, empirical challenges with such models are well known -- see \cite{poast} for a historical overview. Early on, the concerns were mostly about model specification, including the error term. This includes the 2001 special issue of \textit{International Organization}, mostly focusing on the use of fixed effects. More recently, \cite{pinto} pointed out that ignoring dependence across dyads can lead to erroneous hypothesis testing, as computed standard errors would be too small. \citet{hoff04} and \citet{minhas19, minhas22} suggest including random coefficients and latent variables to account for dependencies across dyads. Our approach explicitly accounts for the whole network of interdependencies across dyads, which can go beyond third-order dependences (assumed in \cite{minhas19,minhas22}). It does so by using asymptotic inference, rather than Bayesian (\citealp{minhas19,minhas22}) or randomized inference (\citealp{pinto}).

As a result, our paper is directly related to the literature on (asymptotic) inference in regression analysis with dyadic random variables. \citet{Aronow_et_al-2015} and \citet{Tabord-Meehan-2019} consider OLS estimation and inference in a linear dyadic regression model. Meanwhile, \citet{Graham-2020a} and \citet{Graham-2020b} explore a likelihood-based approach to dyadic regression models, while \citet{Graham_et_al-2022wp} and \citet{Chiang_and_Tan-2022} provide results for kernel density estimation in dyadic regression models.\footnote{Similarly, our paper is related to the recent literature on inference for multiway clustering, whether OLS estimation with multiway clustering (\citet{Cameron_et_al-2011}); a clustering method in high-dimensional set-ups (\citet{Chiang_et_al-2021}); clustering within the time dimension (\citet{Chiang_et_al-2022wpa}); clustered inference with empirical likelihood (\citet{Chiang_et_al-2022wpb}); bootstrap methods in multiway clustering \citep{Davezies_et_al-2021, Menzel-2021, Mackinnon_et_al-2022wpa}; clustering in the context of average treatment effects (\citet{Abadie_et_al-2022}), to name but a few (see \citet{MacKinnon_et_al-2022} for review). Again, one of the common assumptions in this literature is that individual observations are divided into disjoint groups -- clusters -- and observations in different clusters are not correlated. To that end, \citet{Mackinnon_et_al-2022wpb} propose measures of cluster-level influence that can be used to assess whether the underlying assumption of cluster-robust variance estimation is satisfied.} While useful to allow for correlations along time and within such groups, this separable structure may be inappropriate for environments where spillovers follow a complex form of dependence along a network. 
As alluded to previously, all such papers assume that dyads are uncorrelated unless they share a common unit, thereby ignoring dependence across indirectly linked dyads, while our estimator explicitly captures the higher order correlation across dyads along the dyadic network. 

However, we emphasize that neither approach subsumes the other. The papers cited above leave the dependence within ``clusters" (groups of dyads that share units) unrestricted, but assume independence across such clusters. This is akin to the literature with one-way clustering (e.g., \citet{Hansen_and_Lee-2019}). By comparison, our approach restricts such dependence among groups of dyads that share units (i.e., dependence is assumed to follow the observed network), but allows for dependence across such ``clusters" of dyads along the dyadic network.\footnote{As a result, our approach complements the existing methods similar to how inference with spatial data (e.g., \citet{Conley-1999} and \citet{Jenish_and_Prucha-2009}) complements one-way clustering inference. Our approach still differs from such inference with spatial data, since the latter routinely assumes the index set to be a Euclidean metric space, whose metric relies solely on the nature of the space, and uses it to define the dependence between variables. See also \citet{Ibragimov_and_Muller-2010} and references therein. In our context, however, the index set of dyads alone does not suffice to dictate the dependence structure because indices themselves do not inform us of the network topology. Instead, we first introduce a metric on a network among dyads and our mixing condition is based on dependence as dyads grow further apart along the network.} Which approach to pursue using dyadic data depends on the researchers' applications and how they fit such assumptions.

\section{Set-up}\label{sec:Set-up}


Assume that we observe a cross section of $N\in \mathbbm{N}$ individuals located along a network -- the latter interpretable as politicians, countries, firms, or other observation units depending on the context. The dyads present in the $N$-individual network (i.e., among the $\tbinom{N}{N-1}$ possible dyads), are called active dyads, so that the dyad for two units $i$ and $j$ (e.g., politicians, countries) is denoted as some $m$. The set of active dyads is denoted $\mathcal{M}_{N}$ and $M$ denotes the cardinality of that set.

We assume that each dyad $m$ is endowed with a triplet of dyad-specific variables, forming a triangular array $\{(y_{M, m}, x_{M, m}, \varepsilon_{M, m})\}_{m\in\mathcal{M}_{N}}$ with respect to $M$, where $y_{M,m}\in \mathbbm{R}$ is a one-dimensional observable outcome, $x_{M,m}\in \mathbbm{R}^{K}$ is a $K$-dimensional vector of observable characteristics with $K\in\mathbbm{N}$, and $\varepsilon_{M,m}\in\mathbbm{R}$ is a one-dimensional random error term that is not observable to the researcher. We only consider exogenous network formation and the network is assumed to be observable. These conditions are summarized in the following assumption:

\begin{assumption}[Exogenous and Observable Dyadic Networks]\label{assum:exog_networks}
The network among dyads is assumed to be conditionally independent of $\{\varepsilon_{M,m}\}_{m\in\mathcal{M}_{N}}$. Furthermore, this network among the $N$ individuals is assumed to be observable.
\end{assumption}

While such assumptions are standard in models of dyadic networks, they seem particularly appropriate when units or dyad pairs are linked across geographical, physical, or ex-ante social relations (e.g., family ties). This includes capturing neighboring and regional spillovers across countries, as often done in international relations, or exogenous seating arrangements in Parliament, as illustrated in the examples in the next section. 

The subsequent arguments require us to distinguish between a pair of dyads who share a member (i.e., who are directly linked -- which we call, \textit{adjacent}) and a pair of dyads who are directly or indirectly linked (which we call, simply, \textit{connected}).

\begin{definition}[Adjacent \& Connected Dyads]\label{defn:AdjacentAndConnectedDyads}
	Two active dyads $m$ and $m'$ are said to be {\sl adjacent} if they have an individual in common; and they are called {\sl connected} if they are linked through pairs of adjacent dyads.
 \end{definition}
 
In Figure \ref{fig:ExampleOfPolNetwork}, dyad (A,B) is adjacent to (B,C), and connected with, though not adjacent to, (C,D). Hence, the adjacency relationship constitutes a network structure among active dyads, and thus, a network over individuals can be transformed to one over active dyads. For example, the right-hand side panel of Figure \ref{fig:ExampleOfPolNetwork} provides a network over pairs of voting politicians (i.e., active dyads).\footnote{This corresponds to thinking about the {\sl line graph} of the original graph over individuals.} We define the geodesic distance between two connected dyads $m$ and $m'$ to be the smallest number of adjacent dyads between them. Note that adjacent dyads are a special case of connected dyads with geodesic distance equal to one. 

\subsection{The Linear Model}

\subsubsection{Set-up \& Identification}\label{subsubsec:Set-upAndIdentification}

The cross-sectional model of interest takes the form of the linear network-regression model: for any $N\in\mathbbm{N}$,
\begin{align}\label{eq:ExogenousRegression:PopulationNetworkRegression_a}
	y_{M,m} =  x_{M,m}' \beta + \varepsilon_{M,m} \ \ \ \ \ \forall m\in\mathcal{M}_{N}^{},
\end{align}
where 
\begin{align}\label{eq:ExogenousRegression:PopulationNetworkRegression_b}
	Cov(\varepsilon_{M, m}, \varepsilon_{M, m'}\mid X_{M})
	= 0	\quad \text{unless $m$ and $m'$ are connected},
\end{align}
and $\beta$ is a $K\times 1$ vector of the regression coefficients and $X_{M}$ denotes the $M\times K$ matrix that records the observed dyad-specific characteristics, i.e., $X_{M}\coloneqq[x_{M,1},\ldots, x_{M,M}]'$. 

In this paper, we assume that $\beta$ is identified, which follows from standard assumptions on strict exogeneity, lack of multicollinearity and the existence of finite second moments of $y_{M,m}$ and $x_{M,m}$. (For completeness, see Assumption \ref{assm:ExogenousRegression:IdentificationCondition} and Proposition \ref{prop:ExogenousRegression:Identification} in Appendix). 

We note that equation (\ref{eq:ExogenousRegression:PopulationNetworkRegression_b}) allows for there to be spillovers across the error terms even when dyads $m$ and $m'$ are not adjacent, as long as they are connected through indirect links. By comparison, applied researchers such as \citet{Harmon_et_al-2019} and \citet{Lustig_and_Richmond-2020} (and the estimators of \citet{Aronow_et_al-2015} and \citet{Tabord-Meehan-2019}) consider a variant of the linear regression \eqref{eq:ExogenousRegression:PopulationNetworkRegression_a} under the assumption
	\begin{align}\label{eq:ExogenousRegression:Example:DyadicRegression_b}
		Cov( \varepsilon_{M, d(i,j)}, \varepsilon_{M, d(k,l)}\mid x_{M, d(i,j)}, x_{M, d(k,l)} ) 
		= 0 \hspace{5mm}	\text{unless \quad $\{i,j\}\cap \{k,l\} \neq \emptyset$},
	\end{align}
	with $m = d(i,j)$ representing the dyad between $i$ and $j$. This specific assumption would be equivalent to setting:
	\begin{align}\label{eq:ExogenousRegression:Example:DyadicRegression_c}
		Cov( \varepsilon_{M, m}, \varepsilon_{M, m'}\mid X_{M} )
		= 0 \hspace{5mm}	\text{unless $m$ and $m'$ are adjacent}.
	\end{align}

\subsubsection{Examples}	
Whether to allow indirect spillovers as in (\ref{eq:ExogenousRegression:PopulationNetworkRegression_b}) or not (as in \eqref{eq:ExogenousRegression:Example:DyadicRegression_c}) depends on the researchers' applications. We now present two examples where our approach may be preferable. 

\begin{example}[Gravity Model of Bilateral Trade Flow]\label{ex:GravityModelOfBilateralTradeFlow}
A researcher is studying the trade flow from country $i$ to $j$, with (log) exports from $i$ to $j$ denoted $y_{ij}$. Following the literature, (s)he assumes $y_{ij}$ follows the structural gravity equation (e.g., \cite{Eaton_and_Kortum-2002, Anderson_and_van_Wincoop-2003, Melitz-2003, Helpman_et_al-2008}):
	\begin{align}\label{eq:EndogenousRegression:Example:GravityModel_1}
		y_{ij} = \alpha+\beta z_{ij} + \gamma \sum_{k\neq i}g_{ki} y_{ki} + \eta_{ij},
	\end{align}
where $z_{ij}$ represents a dyadic characteristic of $i$ and $j$, such as the shipping cost, whether both countries are democratic (e.g., \citealp{mansfield00}), or whether both participate in WTO/GATT (e.g., \citealp{gowa}); $\sum_{k}g_{ki} y_{ki}$ is the amount $i$ spends on imports ($g_{ki}$ equals one if country $i$ purchases goods from country $k$ and zero otherwise), and $\eta_{ij}$ captures unobserved heterogeneity pertaining to the trade flow between countries $i$ and $j$.

To see our main point, suppose there are only four countries ($1$, $2$, $3$ and $4$) which trade, where country $1$ exports to country $2$, which in turn exports to country $3$, and country $3$ exports to country $4$. Equation (\ref{eq:EndogenousRegression:Example:GravityModel_1}) then simplifies to: $y_{12} = \alpha + \beta z_{12} + \eta_{12}$, \\ $y_{23} = \alpha + \beta z_{23} + \gamma y_{12} + \eta_{23}$, and henceforth.
	Rearranging these equations implies that the trade flow from country $3$ to $4$ can be written as:
	\begin{align*}
		y_{34}	&= \alpha + \alpha \gamma + \alpha \gamma^2 + \gamma^{2} \beta z_{12} + \gamma \beta z_{23} + \beta z_{34} + \gamma^{2} \eta_{12} + \gamma \eta_{23} + \eta_{34}.
	\end{align*}
	Therefore, $Cov(y_{12}, y_{34}\mid z) = \gamma^{2} Var(\eta_{12}\mid z) \neq 0$, where $z \equiv \{z_{12}, z_{23}, z_{34}\}$. Hence, there can be non-zero correlation between trade flows $y_{12}$ and $y_{34}$ even if they do not have a country in common. This is because an idiosyncratic shock to an upstream country can propagate through the trade network.
\end{example}

\begin{example}[Legislative Voting]\label{ex:LegislativeVoting}
A researcher is interested in whether seating arrangements in legislatures can affect a politician's behavior, $y_i$ (e.g., propensity to vote ``Yes" on a roll-call, as \citet{Harmon_et_al-2019}, or the amount of co-sponsoring, as \cite{Saia, Lowe}, among others). For concreteness, suppose there are four politicians with the seating arrangements given by Figure \ref{fig:ExampleOfPolNetwork}. 

The researchers posit that $i$'s behavior can be influenced by the (average) of its seating neighbors' own voting behavior through a parameter $\gamma$ as follows:
\begin{eqnarray}
	&y_{A} = \alpha + \gamma y_{B} + \eta_{A}, \quad &y_{B} = \alpha + \gamma \frac{y_{A}+y_{C}}{2} + \eta_{B} \\
		&y_{C} = \alpha + \gamma \frac{y_{B}+y_{D}}{2} + \eta_{C}, \quad	&y_{D} = \alpha + \gamma y_{C} + \eta_{D}
\end{eqnarray}
If $\gamma \neq 0$, $A$ is affected by their neighbor $B$, while $B$ is affected by both of its neighbors ($A$ and $C$) and so forth. The researcher is interested in whether neighbors' decisions are more highly correlated than the decisions among non-neighbors. 

Denote $y_{ij}$ as the dyadic outcome of interest (e.g., a measure of correlation between $i$ and $j$'s decisions). Both $y_{AB}$ and $y_{CD}$ involve $y_{B}$ and $y_{C}$, which are themselves a function of $\eta_{B}$ and $\eta_{C}$. Hence, $Cov(y_{AB}, y_{CD}) \neq 0$, even if the two pairs of legislators do not share a common member.

\end{example}

\subsubsection{Estimation}\label{subsubsec:Estimation}
Throughout this paper, we focus on the Ordinary Least Squares (OLS) estimator of $\beta$, denoted by $\hat{\beta}$. Under the assumptions above, we can write
\begin{align}\label{eq:ExogenousRegression:OLSEstimator_2}
	\hat{\beta} - \beta
	&= \Bigg( \sum_{j\in\mathcal{M}_{N}^{}} x_{M, j}x_{M, j}' \Bigg)^{-1}
		\sum_{m\in\mathcal{M}_{N}^{}} x_{M, m} \varepsilon_{M, m}.
\end{align}
It is straightforward to verify that $\hat{\beta}$ is unbiased for $\beta$ under our identification conditions (Assumption \ref{assm:ExogenousRegression:IdentificationCondition}). However, a consistency result is by no means trivial due to the dependence along the network which induces a complex form of cross-sectional dependence, hindering a na\"{i}ve application of the standard theory for independently and identically distributed ($i.i.d.$) random vectors. This is clarified in Section \ref{subsec:NetworkDependentProcesses}. Before that, we provide a heuristic overview of the main point of our paper in this setting.

\subsection{Outline of Our Procedure}

\subsubsection{Inference}

Inference about $\beta$ is based on a normal approximation of the distribution of $\hat{\beta}$ around $\beta$. We focus on hypothesis testing conducted using the expression:
\begin{align}\label{eq:TestStatistics}
	\big(\widehat{Var}(\hat{\beta})\big)^{-\frac{1}{2}} (\hat{\beta}-\beta),
\end{align}
where $\widehat{Var}(\hat{\beta})$ is a consistent estimator of the asymptotic variance of $\hat{\beta}$. Our main result in Section \ref{subsec:NetworkHACVarianceEstimation} is providing such an appropriate estimator, which takes the form:
{\footnotesize
\begin{align}\label{eq:ExogenousRegression:NetworkVarianceEstimator}
	\widehat{Var}(\hat{\beta})
	\coloneqq \Bigg( \sum_{k\in\mathcal{M}_{N}^{}} x_{k}x_{k}' \Bigg)^{-1}
			\Bigg( \sum_{m\in\mathcal{M}_{N}^{}} \sum_{m'\in\mathcal{M}_{N}^{}} \kappa_{m,m'} h_{m,m'} \hat{\varepsilon}_{m}\hat{\varepsilon}_{m'} x_{m}x_{m'}' \Bigg)
			\Bigg( \sum_{k\in\mathcal{M}_{N}^{}} x_{k}x_{k}' \Bigg)^{-1},
\end{align}
}
\noindent where $\kappa_{m,m'}$ is an appropriate kernel function that will formally be defined in Section \ref{subsec:NetworkHACVarianceEstimation}; $h_{m,m'}$ represents an indicator function that takes one if dyads $m$ and $m'$ are connected and zero otherwise; and $\hat{\varepsilon}_{m} \coloneqq y_{m}- x_{m}' \hat{\beta}$. 

This paper derives conditions under which $\widehat{Var}(\hat{\beta})$ is consistent for the asymptotic variance of $\hat{\beta}$. Before doing so, let us compare the variance estimator (\ref{eq:ExogenousRegression:NetworkVarianceEstimator}) with an often used estimator based on one-way clustering of dyad groupings. 

\begin{remark}[Dyadic-Robust Variance Estimator]\label{ex:ExogenousRegression:Dyadic-RobustVarianceEstimator}
An increasing number of applied researchers, such as \citet{Harmon_et_al-2019} and \citet{Lustig_and_Richmond-2020}, estimate model (\ref{eq:ExogenousRegression:PopulationNetworkRegression_a}) and conduct inference, using the following dyadic-robust variance estimators proposed by \citet{Aronow_et_al-2015} and \citet{Tabord-Meehan-2019}:
	{\small
	\begin{align}\label{eq:ExogenousRegression:Example:Dyadic-RobustVarianceEstimator:Estimator_1}
		\widehat{Var}(\hat{\beta})
		\coloneqq \Bigg( \sum_{k\in\mathcal{M}_{N}^{}} x_{k}x_{k}' \Bigg)^{-1}
				\Bigg( \sum_{m\in\mathcal{M}_{N}^{}} \sum_{m'\in\mathcal{M}_{N}^{}} \mathbbm{1}_{m,m'} \hat{\varepsilon}_{m}\hat{\varepsilon}_{m'} x_{m}x_{m'}' \Bigg)
				\Bigg( \sum_{k\in\mathcal{M}_{N}^{}} x_{k}x_{k}' \Bigg)^{-1},
	\end{align}}
	where $\mathbbm{1}_{m,m'}$ equals one if dyads $m$ and $m'$ are adjacent and zero otherwise. 
	
Note that the use of the dyadic-robust variance estimator sets cases in which two dyads are not adjacent, but connected, to zero. Meanwhile, our estimator (\ref{eq:ExogenousRegression:NetworkVarianceEstimator}) accounts for network spillovers by accommodating the correlation across both adjacent and connected dyads.\footnote{See Definition \ref{defn:AdjacentAndConnectedDyads} and the subsequent discussion. The choice of kernel and lag-truncation is discussed in Section \ref{subsec:NetworkHACVarianceEstimation}.} As our examples above suggest,  the structure of the variance estimator  (\ref{eq:ExogenousRegression:Example:Dyadic-RobustVarianceEstimator:Estimator_1}) may not be compatible with indirect spillovers in some settings, which should assume the specification (\ref{eq:ExogenousRegression:PopulationNetworkRegression_b}) instead. This suggests that the dyadic-robust variance estimator may be inconsistent when there is cross-sectional dependence along a network over dyads: i.e., when non-adjacent dyads can still affect the correlation structure and outcomes of dyad $m$.\footnote{Clustering estimators may be inappropriate when the correlation structure has network spillovers as in (\ref{eq:ExogenousRegression:PopulationNetworkRegression_b}), because each agent has a complex (i.e., non-separable) structure of connections, reflected in a non-separable network across dyads. If the network model features positive spillovers, then the dyadic-robust variance estimator will likely underestimate the true variance, leading to conservative hypothesis testing. Meanwhile, it is likely to overstate the true variance when there are negative spillovers. We expand on this point in our numerical simulations.} This conjecture is formally proven in Corollary \ref{coro:ExogenousRegression:NetworkHACEstimators:InconsistencyResult} and illustrated in Monte Carlo simulations in Section \ref{sec:MonteCarloSimulation}. We note that this is a feature of applying such dyadic-robust variance estimators to network spillovers, and not a feature of those estimators per se.
\end{remark}

In order to validate the test statistic (\ref{eq:TestStatistics}) for inference, this paper establishes the consistency and asymptotic normality of $\hat{\beta}$ and develops a consistent network-robust estimator for the asymptotic variance when the correlation structure is given by \eqref{eq:ExogenousRegression:PopulationNetworkRegression_b}.

\subsubsection{A Step-by-Step Guide on Implementation}\label{sec:guide}

Researchers interested in the linear specification (\ref{eq:ExogenousRegression:PopulationNetworkRegression_a}) may find the following guidelines useful when deciding whether to use our proposed estimator (\ref{eq:ExogenousRegression:NetworkVarianceEstimator}).
\begin{enumerate}
\item
The researcher should first ask whether spillovers from indirectly connected dyads are likely to be present (and not decay immediately) in their set-up: i.e., is equation (\ref{eq:ExogenousRegression:PopulationNetworkRegression_b}) a more appropriate assumption than equation (\ref{eq:ExogenousRegression:Example:DyadicRegression_c})? 

While this depends on the specific application, Examples \ref{ex:GravityModelOfBilateralTradeFlow}-\ref{ex:LegislativeVoting} illustrate models where that is likely to be the case. And condition (\ref{eq:ExogenousRegression:NetworkHACEstimators:InconsistencyResult}) provides a notion of how much persistence is needed for a bias to appear. As we show below, these insights are robust to decaying spillover effects (see Corollary \ref{coro:ExogenousRegression:NetworkHACEstimators:InconsistencyResult}, Example \ref{example:MaximumAdmissibleBiasInTheDyadic-RobustVarianceEstimator}, and the associated simulation results).

\item
If such spillovers of unobservables are likely to exist, are they governed by an exogenous and observable network (Assumption \ref{assum:exog_networks}), such as physical, geographical or social (e.g., family ties)?\footnote{Alternatively, one would have to consider recovering unobserved exogenous networks (see \cite{paula19} for an example with panel data) or modeling endogenous network formation, possibly with unobserved links (see \cite{battaglini22, canen22} for two examples in political economy).} If so, the proposed estimator is appropriate under regularity conditions.
\item
One can implement our estimator by: (i) choosing a kernel (e.g., rectangular, see Section \ref{subsec:NetworkHACVarianceEstimation}), (ii) setting the lag-truncation, $b_M$ (either by a known value, or adaptively by $b_M = 2 \log(M)/\log(\max(average~degree, 1.05))$, where $M$ is the number of dyads and we use the average degree of the dyadic network, (iii) plugging-in those choices into equation (\ref{eq:ExogenousRegression:NetworkVarianceEstimator}). \medskip

This estimator is consistent under regularity conditions, even when spillovers decay, and shows good finite-sample properties in the simulations below.
\end{enumerate}


\section{Theoretical Results}\label{sec:MainResults}

\subsection{Network Dependent Processes}\label{subsec:NetworkDependentProcesses}

Let $Y_{M,m}$ be a random vector defined as
\begin{align}\label{eq:ExogenousRegression:DefinitonOfY}
	Y_{M,m}
	\coloneqq x_{M, m} \varepsilon_{M, m},\footnotemark
\end{align}
\footnotetext{By construction, the collection of $Y_{M,m}$'s constitutes a triangular array of random vectors.}
and denote $\mathcal{C}_{M} \coloneqq \{x_{M, m}\}_{m\in\mathcal{M}_{N}}$.\footnote{For the case of stochastic networks, it is defined to include information about the network topology as well as the collection of the dyad-specific attributes $\{x_{M, m}\}_{m\in\mathcal{M}_{N}}$. See \citet{Kojevnikov_et_al-2021}.} From equations \eqref{eq:ExogenousRegression:OLSEstimator_2} and \eqref{eq:ExogenousRegression:DefinitonOfY} we can write
\begin{align}\label{eq:ExogenousRegression:ExpressionOfBetaHat}
	\hat{\beta}-\beta 
	= \Bigg( \frac{1}{M} \sum_{j\in\mathcal{M}_{N}^{}} x_{M, j}x_{M, j}' \Bigg)^{-1} \frac{1}{M} \sum_{m\in\mathcal{M}_{N}} Y_{M,m}.
\end{align}

Our interest lies in proving the asymptotic properties of $\hat{\beta}$ taking advantage of the expression \eqref{eq:ExogenousRegression:ExpressionOfBetaHat}. However, the presence of $\varepsilon_{M,m}$ in $Y_{M,m}$, which is allowed to be correlated along the network over active dyads, renders our approach nonstandard. The canonical results about independently and identically distributed ($i.i.d.$) random variables are by no means applicable due to the complex form of dependence inherent to $\varepsilon_{M,m}$. Dyadic-robust estimators (e.g., \citet{Aronow_et_al-2015} and \citet{Tabord-Meehan-2019}) only account for correlations between adjacent dyads. This may be appropriate in some settings, but not when there may be indirect spillovers, as in specification \eqref{eq:ExogenousRegression:PopulationNetworkRegression_a} -- \eqref{eq:ExogenousRegression:PopulationNetworkRegression_b}. Thence, the right hand side of \eqref{eq:ExogenousRegression:ExpressionOfBetaHat} cannot simply be embedded in other widely-studied variants of dependent random vectors, including spatially-correlated and time-series data.

The main insight of this paper is that the spillovers across connected -- even if not adjacent -- dyads, can be rewritten as the dependence of $Y_{M,m}$'s along the network of active dyads (hereby, referred to as the ``network''). This transformation allows us to embrace such complex cross-sectional dependence and appropriately rewrite the problem so that recent results on network dependent random variables \citep{Kojevnikov_et_al-2021} can be applied. Our results then allow for correlations of random variables given a common shock as explicitly dictated by distances between them on the dyadic network. Our notion of dependence restricts the covariance between the random variables for any two sets of dyads $A$ and $B$, $Y_{M,A}$ and $Y_{M,B}$, which are at a distance $s$ from one another, to be bounded and to decrease as $s$ goes to infinity. Indeed, this covariance is controlled by a sequence of (random) coefficients $\theta_{M,s}$, which capture the strength of covariation net of scaling, analogous to correlation coefficients. As the minimum distance, $s$, between $A$ and $B$ grows, the dependence $\{\theta_{M,s}\}$ between $Y_{M,A}$ and $Y_{M,B}$, tends to zero. A formal description is provided in Appendix \ref{appdx:MathmaticalSetup}.



\subsection{Definitions}

As will become transparent shortly, asymptotic theories for $\hat{\beta}$ rest on tradeoffs between the correlation of the network-dependent random vectors (i.e., the dependence coefficients) and the denseness of the underlying network. To measure the denseness, we first define two concepts of neighborhoods: for each $m\in\mathcal{M}_{N}$ and $s\in\mathbbm{N}\cup \{0\}$,
\begin{align*}
	&\mathcal{M}_{N}(m; s) \coloneqq \{ m'\in\mathcal{M}_{N}: \rho_{M}(m,m') \leq s \}, \\
	&\mathcal{M}_{N}^{\partial}(m; s) \coloneqq \{ m'\in\mathcal{M}_{N}: \rho_{M}(m,m') = s \},
\end{align*}
where $\rho_{M}(m,m')$ denotes the geodesic distance between dyads $m$ and $m'$.\footnote{Recall that we define the geodesic distance between two connected dyads $m$ and $m'$ to be the smallest number of adjacent dyads between them.} The former set collects all the $m$'s neighbors whose distance from $m$ is no more than $s$ (which we call a neighborhood), whilst the latter registers all the $m$'s neighbors whose distance from $m$ is exactly $s$ (which we call a neighborhood shell). 

Next, we define two types of density measures of a network: for $k, r >0$, 
\begin{align}\label{eq:DefinitionsOfNetworkDensities}
	\begin{split}
	&\Delta_{M}^{}(s,r; k) \coloneqq \frac{1}{M} \sum_{m\in\mathcal{M}_{N}} \max_{m'\in\mathcal{M}_{N}^{\partial}(m;s)} |\mathcal{M}_{N}(m;r)\backslash \mathcal{M}_{N}(m'; s-1)|^{k}, \\
	&\delta_{M}^{\partial}(s; k) \coloneqq \frac{1}{M} \sum_{m\in\mathcal{M}_{N}} |\mathcal{M}_{N}^{\partial}(m;s)|^{k},
	\end{split}
\end{align}
where it is assumed that $\mathcal{M}_{N}(m'; -1) = \emptyset$. The former measure gauges the denseness of a network in terms of the average size of a version of the  neighborhood. The latter expresses the denseness of a network as the average size of the neighborhood shell raised by $k$. \citet{Kojevnikov_et_al-2021} show that controlling the asymptotic behavior of an appropriate composite of these two measures (denoted by $c_M$ and defined in Assumption \ref{assm:Kojevnikov_et_al-2021:Assm4.1}) is sufficient for the Law of Large Numbers (LLN) and Central Limit Theorem (CLT) of the network dependent random variables (Condition ND).

It is worth noting that there are two different units at play: the number of sampling units (i.e., individuals), $N$, and the number of dyads, $M$. To bridge these two units, we assume that the number of active dyads also goes to infinity as the number of sampling units is taken to infinity, eliminating the possibility of extremely sparse networks among individuals. This is empirically relevant. For example, in international trade, the entry of a new country/firm to a market will most likely increase the number of trade flows in the economy; in political economy, the more members of parliaments (MEPs) there are, the more pairs of the MEPs sitting next to each other there will be (see Section \ref{sec:EmpiricalIllustration}).\footnote{This assumption is similar in spirit to Assumption 2.3 of \citet{Tabord-Meehan-2019} in which the minimum degree is assumed to grow at some constant rate relative to the number of individuals. It is milder than Assumption 2.3 of \citet{Tabord-Meehan-2019} since the latter does not allow any individual to be isolated, while Assumption \ref{assm:ExogenousRegression:Denseness} merely constrains the average degree. Similarly, this assumption is weaker than the assumption that the maximum degree in a network is bounded even when $N\rightarrow\infty$  (e.g., \citet{Penrose_and_Yukich-2003} and \citet{de_Paula_et_al-2018}).} 

\begin{assumption}\label{assm:ExogenousRegression:Denseness}
	$M \to \infty$ as $N \to \infty$.
\end{assumption}

\subsection{Asymptotic Properties of $\hat{\beta}$}\label{subsec:LimitTheoremsforNetworkDependentProcesses}

We make use of the following two regularity assumptions for the proof of consistency of $\hat \beta$ for $\beta$ (Theorem \ref{theo:ConsistencyofBetaHat}) and to derive its asymptotic distribution (Theorem \ref{theo:ExogenousRegression:CLT}).\footnote{These assumptions are required for Theorem \ref{theo:ExogenousRegression:CLT}, but as usual, the proof of consistency (Theorem \ref{theo:ConsistencyofBetaHat})can be derived under weaker conditions. (See Assumptions \ref{assm:ExogenousRegression:ConditionalFiniteMoment1} and \ref{assm:Kojevnikov_et_al-2021:Assm3.2} and their associated discussion, in Appendix).}   

\begin{assumption}[Conditional Finite Moment of $\varepsilon_{m}$]\label{assm:ExogenousRegression:ConditionalFiniteMoment2}
	There exists $p>4$ such that \\ $\sup_{N\geq 1} \max_{m\in\mathcal{M}_{N}} E\big[ | \varepsilon_{m} |^{p} \mid \mathcal{C}_{M} \big] < \infty$ a.s.
\end{assumption}

\begin{assumption}[\citet{Kojevnikov_et_al-2021}, Assumption 3.4]\label{assm:Kojevnikov_et_al-2021:Assm3.4}
	There exists a positive sequence $r_{M}\rightarrow \infty$ such that for $k=1,2$,
	\begin{align*}
		& \frac{M^{2}\theta_{M,r_{M}}^{1-1/p}}{\sigma_{M}} \stackrel{a.s.}{\rightarrow} 0, \quad \text{and} \quad \frac{M}{\sigma_{M}^{2+k}} \sum_{s\geq 0} c_{n}(s,r_{M};k)\theta_{M,s}^{1-\frac{2+k}{p}} \stackrel{a.s.}{\rightarrow} 0,
	\end{align*}
	as $M\rightarrow \infty$, where $p>4$ is the same as the in Assumption \ref{assm:ExogenousRegression:ConditionalFiniteMoment2}.
\end{assumption}

Assumption \ref{assm:ExogenousRegression:ConditionalFiniteMoment2} requires that the errors are not too large once conditioned on common shocks. Together with the standard full rank assumption that guarantees identification of $\beta$\footnote{See Assumption \ref{assm:ExogenousRegression:IdentificationCondition}(c) in Appendix, for completeness.}, this implies Assumption 3.1 of \citet{Kojevnikov_et_al-2021} for each $u$-th element of $Y_{M,m}^{}$, denoted by $Y_{M,m}^{u}$ with $u\in\{1,\ldots, K\}$.   

Assumption \ref{assm:Kojevnikov_et_al-2021:Assm3.4} is a condition that controls the tradeoff between the denseness of the underlying network and the covariability of the random vectors. If the network becomes dense, then the dependence of the associated random variables has to decay much faster. This requirement is consistent with the typical assumption in the spatial econometrics literature upon which the empirical analyses of the aforementioned examples (i.e., international trade, legislative voting, and exchange rates) are based, because it embodies the idea that spillovers decay as they propagate farther (see, e.g., \citet{Kelejian_and_Prucha-2010}). For instance, \citet{Acemoglu_et_al-2015} assume that network spillovers are zero if agents are sufficiently distantly connected on a geographical network. This decay in the propagation of spillovers through the terms $\theta_{M,\cdot}$. This assumption may be violated for very dense networks with low decay of spillovers. 

\subsubsection{Consistency}

\begin{theorem}[Consistency of $\hat{\beta}$]\label{theo:ConsistencyofBetaHat}
	Under Assumptions \ref{assm:ExogenousRegression:Denseness}-\ref{assm:Kojevnikov_et_al-2021:Assm3.4}, $\| \hat{\beta} - \beta \|_{2} \stackrel{p}{\rightarrow} 0$
	as $N\rightarrow\infty$.
\end{theorem}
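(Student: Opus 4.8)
The plan is to work from the decomposition \eqref{eq:ExogenousRegression:ExpressionOfBetaHat}, writing $\hat\beta-\beta = \big(\tfrac{1}{M}\sum_{j}x_{M,j}x_{M,j}'\big)^{-1}\cdot\tfrac{1}{M}\sum_{m}Y_{M,m}$, and to show that the inverted Gram matrix converges to an invertible limit while the average $\tfrac{1}{M}\sum_{m}Y_{M,m}$ converges to zero in probability; the conclusion then follows by the continuous mapping theorem and Slutsky's lemma. The central tool is a law of large numbers for conditionally $\psi$-dependent triangular arrays — namely the LLN of \citet{Kojevnikov_et_al-2021} — applied coordinate-wise.

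First I would verify the hypotheses of that LLN. Applying it to $Y_{M,m}^{u}=x_{M,m}^{u}\varepsilon_{M,m}$ for each coordinate $u\in\{1,\dots,K\}$ requires (i) a uniform conditional moment bound of order $1+\eta$, which is exactly Assumption~\ref{assm:ExogenousRegression:ConditionalFiniteMoment1} combined with Assumption~\ref{assm:ExogenousRegression:IdentificationCondition}(b) (boundedness of $E\|x_{M,m}\|$) — this is the step the text flags as giving Assumption 3.1 of \citet{Kojevnikov_et_al-2021}; and (ii) the denseness/dependence tradeoff $\tfrac{1}{M}\sum_{s\ge 1}\delta_M^{\partial}(s;1)\theta_{M,s}\overset{a.s.}{\to}0$, which is Assumption~\ref{assm:Kojevnikov_et_al-2021:Assm3.2}. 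Since the network over dyads is conditionally independent of $\{\varepsilon_{M,m}\}$ and $E[\varepsilon_{M,m}\mid X_M]=0$ (Assumption~\ref{assm:ExogenousRegression:IdentificationCondition}(d)), each $Y_{M,m}$ is conditionally mean-zero given $\mathcal{C}_M$, so the LLN yields $\tfrac{1}{M}\sum_{m\in\mathcal{M}_N}Y_{M,m}\overset{p}{\to}0$ (first conditionally given $\{\mathcal{C}_M\}$, then unconditionally by bounded convergence, using Assumption~\ref{assm:ExogenousRegression:Denseness} to translate $N\to\infty$ into $M\to\infty$).

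Next I would handle the Gram matrix $\hat Q_M:=\tfrac{1}{M}\sum_{j}x_{M,j}x_{M,j}'$. The entries $x_{M,j}^{u}x_{M,j}^{v}$ form another conditionally $\psi$-dependent array (products of $\mathcal{C}_M$-measurable quantities are themselves $\mathcal{C}_M$-measurable, so trivially dependent with the same coefficients), with a uniform first-moment bound from Assumption~\ref{assm:ExogenousRegression:IdentificationCondition}(b)–(c); another application of the same LLN (or simply the triangle inequality on a deterministic-given-$\mathcal{C}_M$ average) gives $\hat Q_M - E[\hat Q_M]\overset{p}{\to}0$, with the limit positive definite and bounded away from singularity by Assumption~\ref{assm:ExogenousRegression:IdentificationCondition}(c). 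Hence $\hat Q_M^{-1}$ converges to an invertible matrix, and multiplying the two pieces via Slutsky gives $\|\hat\beta-\beta\|_2\overset{p}{\to}0$.

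The main obstacle is the careful bookkeeping in step (i): one must confirm that Assumption~\ref{assm:ExogenousRegression:ConditionalFiniteMoment1} on $\varepsilon_{M,m}$ together with the moment bounds on $x_{M,m}$ genuinely deliver the uniform $(1+\eta')$-conditional-moment requirement on the \emph{products} $Y_{M,m}^{u}$ — this may require either assuming $x$ bounded or invoking a slightly stronger moment on $x_{M,m}$ and a H\"older argument, since Assumption~\ref{assm:ExogenousRegression:IdentificationCondition}(b) alone controls only the first moment of $\|x_{M,m}\|$. The remaining subtlety is the passage from the conditional-on-$\{\mathcal{C}_M\}$ convergence produced by \citet{Kojevnikov_et_al-2021} to unconditional convergence in probability, which is routine (bounded convergence applied to $P(\|\cdot\|>\epsilon\mid\mathcal{C}_M)$) but should be stated explicitly.
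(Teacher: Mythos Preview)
Your overall plan—apply the \citet{Kojevnikov_et_al-2021} LLN to the score average, then combine with the Gram matrix via Slutsky—is natural, but it departs from the paper's argument in a way that creates a real gap.

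The paper does \emph{not} separate the two factors. Instead it absorbs the inverse Gram matrix into the summands, defining $\tilde Y_{M,m}\coloneqq\bigl(\tfrac{1}{M}\sum_j x_{M,j}x_{M,j}'\bigr)^{-1}Y_{M,m}$, and writes $\hat\beta-\beta=\tfrac{1}{M}\sum_m\tilde Y_{M,m}$. Because the inverse Gram matrix is $\mathcal{C}_M$-measurable, each coordinate $\tilde Y_{M,m}^{u}=D^{u}x_{M,m}\,\varepsilon_{M,m}$ is, conditionally on $\mathcal{C}_M$, just a constant times $\varepsilon_{M,m}$; the LLN of \citet{Kojevnikov_et_al-2021} then applies directly to $\tilde Y_{M,m}^{u}$, after which dominated convergence and Markov's inequality give the unconditional statement. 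The Gram matrix never needs to converge on its own.

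Your route, by contrast, requires $\hat Q_M\to E[\hat Q_M]$ (or at least $\hat Q_M^{-1}=O_p(1)$). Your proposed argument for this—``$x_{M,j}^{u}x_{M,j}^{v}$ is $\mathcal{C}_M$-measurable, so apply the same conditional LLN''—is where the gap lies. The LLN of \citet{Kojevnikov_et_al-2021} controls $\tfrac{1}{M}\sum_m\bigl(W_m-E[W_m\mid\mathcal{C}_M]\bigr)$; for $\mathcal{C}_M$-measurable $W_m$ this difference is identically zero, so the statement is vacuous and says nothing about $\hat Q_M-E[\hat Q_M]$. Establishing unconditional convergence of $\hat Q_M$ is a separate problem; the paper does prove it (Lemma~\ref{lemma:WellDefinedBread}(ii)), but only under the additional bounded-support Assumption~\ref{assm:ExogenousRegression:BoundedSupport} and the sparseness Assumption~\ref{assm:ExogenousRegression:Sparseness}, neither of which is assumed in Theorem~\ref{theo:ConsistencyofBetaHat}. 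The paper's absorption trick is precisely what lets the consistency result go through under the weaker hypothesis list.

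On the moment issue you flag: you are right that the conditional $(1{+}\eta)$-moment bound on the \emph{product} $D^{u}x_{M,m}\varepsilon_{M,m}$ (or $x_{M,m}^{u}\varepsilon_{M,m}$) is not immediate from Assumption~\ref{assm:ExogenousRegression:ConditionalFiniteMoment1} on $\varepsilon$ alone, since Assumption~\ref{assm:ExogenousRegression:IdentificationCondition}(b) only bounds first moments of $x$. The paper passes over this quickly (asserting that these assumptions ``imply Assumption~3.1 of \citet{Kojevnikov_et_al-2021}''), so your concern is legitimate and applies to both approaches; it does not, however, distinguish your argument from the paper's.
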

\begin{proof}
	See Appendix \ref{appndx:ProofOfMainTheorems:ExogenousLinearRegression:ConsistencyOfBetaHat}. 
\end{proof}

Theorem \ref{theo:ConsistencyofBetaHat} establishes the consistency of $\hat{\beta}$ under the scenario where the number of sampling units $N$ goes to infinity. When Assumption \ref{assm:ExogenousRegression:Denseness} is dropped, the result continues to hold in terms of the number of active dyads $M$. 


\subsubsection{Asymptotic Normality}

Let $S_{M}^{} \coloneqq \sum_{m\in\mathcal{M}_{N}} Y_{M,m}^{}$, which is present in $\hat \beta$ in equation (\ref{eq:ExogenousRegression:ExpressionOfBetaHat}). Let $S_{M}^{u}$ be the $u$-th entry of $S_{M}$ for $u\in\{1,\ldots,K\}$ and denote the unconditional variance of $S_{M}^{u}$ by $\tau_{M}^{2} \coloneqq Var (S_{M}^{u} )$. Since $S_{M}^{u}$ is not a sum of independent variables, its variance cannot be simply expressed as a sum of the variances of $Y_{M,m}$. We thus need to explicitly take into account covariance between the random variables $\{Y_{M,m}^{u}\}_{m\in\mathcal{M}_{N}}$. We study the CLT for the normalized sum of $Y_{M,m}^{u}$, which is given by $\frac{S_{M}^{u}}{\tau_{M}}$.

Assumptions \ref{assm:ExogenousRegression:ConditionalFiniteMoment2} and \ref{assm:Kojevnikov_et_al-2021:Assm3.4} are written in terms of conditional expectations, whereas we are interested in the unconditional distribution of $\frac{S_{M}}{\tau_{M}}$. Assumption \ref{assm:AsymptRateOfVariance} below bridges the conditional and unconditional variances. The final assumption for the asymptotic normality result is a standard regularity condition guaranteeing that the asymptotic variance is well-defined. This assumption follows from both matrices in the expression being well-defined (e.g., $x_{M,k}$ having bounded support).\footnote{Further note that Theorem \ref{theo:ExogenousRegression:CLT} is proved under a weaker condition than Assumption \ref{assm:AsymptRateOfVariance}.}

\begin{assumption}[Growth Rates of Variances]\label{assm:AsymptRateOfVariance}
	$\frac{\sigma_{M}^{2}}{\tau_{M}^{2}} \stackrel{a.s.}{\rightarrow} 1$ as $N\rightarrow\infty$.
\end{assumption}

\begin{assumption}\label{assm:ExogenousRegression:BoundedSupport}
	(a) For all $N\geq 1$, $\{ x_{M, m} \}_{m\in\mathcal{M}_{N}}$ have uniformly bounded support.\\
	(b) $\lim_{N\rightarrow\infty} \Big(\frac{1}{M} \sum_{k\in\mathcal{M}} E \big[ x_{M,k} x_{M,k}' \big]\Big)$ is positive definite.\\
	(c)	$\lim_{N\rightarrow\infty} \frac{N}{M^{2}} \sum_{m\in\mathcal{M}_{N}^{}} \sum_{m'\in\mathcal{M}_{N}^{}} E\big[ \varepsilon_{M, m}\varepsilon_{M, m'} x_{M, m}x_{M, m'}' \big]$ exists with finite elements.
\end{assumption}


Under these assumptions, the asymptotic distribution of $\hat{\beta}$ is given by:

\begin{theorem}[Asymptotic Normality of $\hat \beta$]\label{theo:ExogenousRegression:CLT}
	Under Assumptions \ref{assm:ExogenousRegression:Denseness}-\ref{assm:ExogenousRegression:BoundedSupport}, \\ $\sqrt{N} \big( \hat{\beta}-\beta \big) \stackrel{d}{\rightarrow} \mathcal{N}(0, AVar(\hat{\beta}))$ as $N\rightarrow\infty$,
		where 
	{\footnotesize
	\begin{align}\label{eq:ExogenousRegression:AsymptoticVariance}
		AVar(\hat{\beta})
		= \lim_{N\rightarrow\infty} N \Big( \sum_{k\in\mathcal{M}_{N}^{}} E\big[ x_{M, k}x_{M, k}' \big]\Big)^{-1}
				\Big( \sum_{m\in\mathcal{M}_{N}^{}} \sum_{m'\in\mathcal{M}_{N}^{}} E\big[ \varepsilon_{M, m}\varepsilon_{M, m'} x_{M, m}x_{M, m'}' \big] \Big)  
				\Big( \sum_{k\in\mathcal{M}_{N}^{}} E\big[ x_{M, k}x_{M, k}' \big] \Big)^{-1},
	\end{align}
	} 
	which is positive semidefinite with finite elements.
\end{theorem}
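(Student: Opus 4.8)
The plan is to write $\sqrt{N}\,(\hat{\beta}-\beta) = \hat{Q}_M^{-1}\,\big(\tfrac{\sqrt{N}}{M}S_M\big)$, where $\hat{Q}_M \coloneqq \frac{1}{M}\sum_{j\in\mathcal{M}_N} x_{M,j}x_{M,j}'$ and $S_M \coloneqq \sum_{m\in\mathcal{M}_N} Y_{M,m}$ as in \eqref{eq:ExogenousRegression:ExpressionOfBetaHat}, analyze the two factors separately, and recombine via Slutsky's theorem. For the first factor, a law of large numbers for network-dependent arrays --- the same ingredient underlying Theorem~\ref{theo:ConsistencyofBetaHat}, with the requisite uniform integrability supplied by the bounded-support condition in Assumption~\ref{assm:ExogenousRegression:BoundedSupport}(a) --- together with Assumption~\ref{assm:ExogenousRegression:BoundedSupport}(b) yields $\hat{Q}_M \stackrel{p}{\rightarrow} Q \coloneqq \lim_{N}\frac{1}{M}\sum_{k\in\mathcal{M}_N} E[x_{M,k}x_{M,k}']$, which is positive definite, so $\hat{Q}_M^{-1}\stackrel{p}{\rightarrow} Q^{-1}$ by the continuous mapping theorem. (When the design is nonstochastic this step is immediate.)

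For the second factor the goal is $\tfrac{\sqrt{N}}{M}S_M \stackrel{d}{\rightarrow}\mathcal{N}(0,V)$ with $V \coloneqq \lim_{N}\frac{N}{M^2}\sum_{m\in\mathcal{M}_N}\sum_{m'\in\mathcal{M}_N} E[\varepsilon_{M,m}\varepsilon_{M,m'}x_{M,m}x_{M,m'}']$, which exists with finite elements by Assumption~\ref{assm:WellDefinedVariance}. I would establish this with the Cram\'er--Wold device. Fix $\lambda\in\mathbbm{R}^K$ and set $W_{M,m}\coloneqq \lambda'Y_{M,m} = (\lambda'x_{M,m})\,\varepsilon_{M,m}$, a scalar triangular array. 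Because a linear map is Lipschitz, $\{W_{M,m}\}$ inherits conditional $\psi$-dependence given $\{\mathcal{C}_M\}$ with the \emph{same} dependence coefficients $\{\theta_{M,s}\}$ (only the functions $\psi_{a,b}$ rescale), so the denseness/decay tradeoff governing the composite measure $c_M$ is unchanged; the strengthened moment conditions (Assumptions~\ref{assm:ExogenousRegression:ConditionalFiniteMoment2} and \ref{assm:Kojevnikov_et_al-2021:Assm3.4}) transfer to $W_{M,m}$ via Assumption~\ref{assm:ExogenousRegression:BoundedSupport}(a), and $E[\varepsilon_{M,m}\mid\mathcal{C}_M]=0$ by Assumption~\ref{assm:ExogenousRegression:IdentificationCondition}(d) makes $S_M$ conditionally mean zero, so no recentering is needed. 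The central limit theorem of \citet{Kojevnikov_et_al-2021}, applied conditionally on $\mathcal{C}_M$, then gives $(\sigma_M^{\mathrm{c}})^{-1}\lambda'S_M \stackrel{d}{\rightarrow}\mathcal{N}(0,1)$, where $(\sigma_M^{\mathrm{c}})^2\coloneqq Var(\lambda'S_M\mid\mathcal{C}_M)$. Assumption~\ref{assm:AsymptRateOfVariance} forces $\tfrac{N}{M^2}(\sigma_M^{\mathrm{c}})^2$ to share the nonrandom limit of $\tfrac{N}{M^2}Var(\lambda'S_M) = \lambda'\big(\tfrac{N}{M^2}\sum_m\sum_{m'}E[\varepsilon_{M,m}\varepsilon_{M,m'}x_{M,m}x_{M,m'}']\big)\lambda \to \lambda'V\lambda$; a bounded-convergence argument on conditional characteristic functions upgrades the conditional CLT to the unconditional statement $\tfrac{\sqrt{N}}{M}\lambda'S_M\stackrel{d}{\rightarrow}\mathcal{N}(0,\lambda'V\lambda)$, with the degenerate case $\lambda'V\lambda=0$ handled directly (the variance of $\tfrac{\sqrt N}{M}\lambda'S_M$ then vanishes). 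As $\lambda$ is arbitrary, Cram\'er--Wold gives $\tfrac{\sqrt{N}}{M}S_M\stackrel{d}{\rightarrow}\mathcal{N}(0,V)$.

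Combining the two factors by Slutsky's theorem yields $\sqrt{N}\,(\hat{\beta}-\beta)\stackrel{d}{\rightarrow}\mathcal{N}(0,Q^{-1}VQ^{-1})$, and it only remains to match $Q^{-1}VQ^{-1}$ with \eqref{eq:ExogenousRegression:AsymptoticVariance}: since $\big(\sum_k E[x_{M,k}x_{M,k}']\big)^{-1}=\tfrac{1}{M}\big(\tfrac{1}{M}\sum_k E[x_{M,k}x_{M,k}']\big)^{-1}$, absorbing the two $\tfrac{1}{M}$ factors and the leading $N$ into the double sum turns the right-hand side of \eqref{eq:ExogenousRegression:AsymptoticVariance} into exactly $Q^{-1}\,V\,Q^{-1}$. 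Positive semidefiniteness is then clear because the middle matrix equals $\tfrac{N}{M^2}Var(S_M)$ --- a covariance matrix, hence PSD --- conjugated by the symmetric matrix $Q^{-1}$; finiteness of its entries follows from Assumptions~\ref{assm:WellDefinedVariance} and \ref{assm:ExogenousRegression:BoundedSupport}(b). The main obstacle is the central limit theorem for $\tfrac{\sqrt{N}}{M}S_M$: one must verify in detail that the transformed scalar array $\{\lambda'Y_{M,m}\}$ meets \emph{every} hypothesis of the \citet{Kojevnikov_et_al-2021} CLT --- in particular the interplay between the decay of $\{\theta_{M,s}\}$ and the network-denseness measures $\Delta_M$ and $\delta_M^{\partial}$ entering $c_M$, alongside the higher-moment regularity --- and then carefully pass from their conditional-on-$\mathcal{C}_M$ conclusion to the unconditional limit law using Assumption~\ref{assm:AsymptRateOfVariance}. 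The remaining pieces --- the law of large numbers for $\hat{Q}_M$, the continuous-mapping and Slutsky steps, and the algebraic identification of the sandwich form --- are routine.
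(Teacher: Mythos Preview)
Your proposal is correct and follows essentially the same strategy as the paper's proof: decompose $\sqrt{N}(\hat\beta-\beta)$ into the bread factor $\hat Q_M^{-1}$ and the meat $\tfrac{\sqrt N}{M}S_M$, handle the bread via the analogue of Lemma~\ref{lemma:WellDefinedBread}, apply the \citet{Kojevnikov_et_al-2021} CLT conditionally, upgrade to the unconditional statement via Assumption~\ref{assm:AsymptRateOfVariance} and a dominated/bounded-convergence argument, and then invoke Cram\'er--Wold and Slutsky. The only noteworthy difference is cosmetic: the paper proves the scalar CLT coordinate-by-coordinate (for $S_M^u/\sigma_M$) and then appeals to Cram\'er--Wold, whereas you fix an arbitrary $\lambda$ first and run the scalar argument on $\lambda'S_M$---your ordering is arguably the more careful way to invoke Cram\'er--Wold, but the content is the same.
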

\begin{proof}
	See Appendix \ref{appndx:ProofOfMainTheorems:ExogenousLinearRegression:AsymptoticNormalityOfBetaHat}. 
\end{proof}



\subsection{Consistent Estimation of the Asymptotic Variance of $\hat{\beta}$ under Network Spillovers}\label{subsec:NetworkHACVarianceEstimation}

Our objective is to consistently estimate $AVar(\hat{\beta})$ defined in Theorem \ref{theo:ExogenousRegression:CLT}. As errors are mean zero, $Y_{M,m}$ is centered, i.e., $E\big[ Y_{M,m} \big]=0$ for each $m\in \mathcal{M}_{N}$. 

\subsubsection{The Estimator}

The proposed estimator is a type of kernel estimator. Let $b_{M}$ denote the bandwidth, or the lag truncation (its choice is described in Section \ref{sec:bandwidth} below) and $\omega:\overline{\mathbbm{R}}\rightarrow[-1,1]$ a kernel function such that $\omega(0)=1$, $\omega(z)=0$ whenever $|z|>1$, and $\omega(z)=\omega(-z)$ for all $z\in\overline{\mathbbm{R}}$. The feasible variance estimator of interest is
{\small
\begin{align}\label{eq:ExogenousRegression:NetworkHACEstimation:VarHatBetaHat}
	\widehat{Var}(\hat{\beta}) = \Big( \frac{1}{M}\sum_{k\in\mathcal{M}_{N}} x_{M,k}x_{M,k}' \Big)^{-1} 
			\Big( \frac{1}{M^{2}} \sum_{s\geq 0}\sum_{m\in\mathcal{M}_{N}} \sum_{m'\in\mathcal{M}_{N}^{\partial}(m;s)} \omega_{M}(s) \hat{Y}_{M,m}\hat{Y}_{M,m'}' \Big)
			\Big( \frac{1}{M}\sum_{k\in\mathcal{M}_{N}} x_{M,k}x_{M,k}' \Big)^{-1},
\end{align}}
with $\omega_{M}(s) \coloneqq \omega\big(\frac{s}{b_{M}}\big)$ for all $s \geq 0$ and $\hat{Y}_{M,m} \coloneqq x_{M, m} \hat{\varepsilon}_{M, m}$, where $\hat{\varepsilon}_{M, m}\coloneqq y_{M, m} - x_{M, m}' \hat{\beta}$.

\subsubsection{Choice of Lag Truncation, $b_M$:}\label{sec:bandwidth}

There are two approaches for the choice of the associated lag truncation parameter. First, the researcher may already know (or is willing to impose) the truncation, perhaps due to a theoretical/institutional motivation. For instance, \cite{Acemoglu_et_al-2015} set the lag to two in a related problem. Then, the thought exercise is that this choice will adapt as $M \to \infty$ according to the assumptions below. Alternatively, the researcher could use a data-driven choice. Assumption \ref{assm:Kojevnikov_et_al-2021:Assm4.1} (c) below suggests it should depend on both the sample size and the network topology, including the average degree of the dyadic network. One such selection rule is suggested in \citet{Kojevnikov_et_al-2021} based on their proofs: $b_M = 2 \log(M)/\log(\max(average~degree, 1.05))$. 

\subsection{Consistency of the Proposed Estimator}

The consistency of the variance estimator requires two sets of additional assumptions. The first set is Assumption 4.1 of \citet{Kojevnikov_et_al-2021}, but stated here in terms of the network over dyads.

\begin{assumption}[\citet{Kojevnikov_et_al-2021}, Assumption 4.1]\label{assm:Kojevnikov_et_al-2021:Assm4.1}
	There exists $p>4$ such that (a) $\sup_{N\geq 1} \max_{m\in\mathcal{M}_{N}} E\big[ | \varepsilon_{m} |^{p} \mid \mathcal{C}_{M} \big] < \infty$ $a.s.$; \\ (b) $\lim_{M\rightarrow\infty} \sum_{s\geq 1} |\omega_{M}(s)-1| \delta_{M}^{\partial}(s)\theta_{M,s}^{1-\frac{2}{p}} = 0$ $a.s.$; \\ (c) $\lim_{M\rightarrow \infty}\frac{1}{M} \sum_{s\geq 0} c_{M}(s,b_{M};2) \theta_{M,s}^{1-\frac{4}{p}} = 0$ $a.s.$, where 
	\begin{align*}
		c_{M}(s,r; k) \coloneqq \inf_{\alpha>1} \big(\Delta_{M}^{}(s,r; k\alpha)\big)^{\frac{1}{\alpha}} \Big(\delta_{M}^{\partial}\big(s; \frac{\alpha}{\alpha-1}\big)\Big)^{\frac{\alpha-1}{\alpha}}.
	\end{align*}
\end{assumption}

Assumption \ref{assm:Kojevnikov_et_al-2021:Assm4.1} (a) is a stronger counterpart to Assumption \ref{assm:ExogenousRegression:ConditionalFiniteMoment2}, as it requires that a higher-order (i.e., higher than fourth order) conditional moment be well-defined. Assumption (b) posits a tradeoff between the kernel function, the denseness of a network and the dependence coefficients. Specifically, the kernel function $\omega_{M}$ is required to converge to one sufficiently fast. \citet{Kojevnikov_et_al-2021} demonstrate primitive conditions under which this requirement is fulfilled (Proposition 4.2). Assumption (c) requires that the correlation coefficients decay much faster relative to the denseness of the network. This is satisfied in the suggested choice for $b_M$ above. 

Another set of conditions restricts the denseness of the network, ruling out the situation where the network becomes progressively dense: most notably, the case where every single individual unit is directly linked to every other individual. 

\begin{assumption}\label{assm:ExogenousRegression:Sparseness} \phantom{}
	(a) $\sup_{N\geq 1} \sum_{s\geq 0} \delta_{M}^{\partial}(s; 1)<\infty$; (b) $\lim_{M\rightarrow\infty}\frac{1}{M^{}} \sum_{s\geq 0} c_{M}(s,b_{M};2) = 0$.
\end{assumption}

The following theorem is the main theoretical contribution of this paper.

\begin{theorem}[Consistency of the Network-Robust Variance Estimator]\label{thm:ExogenousRegression:ConsistencyOfTheNetwork-HACVarianceEstimator}
Under the conditions for Theorem \ref{theo:ExogenousRegression:CLT}, and Assumptions \ref{assm:Kojevnikov_et_al-2021:Assm4.1} and \ref{assm:ExogenousRegression:Sparseness}, $\big\| N \widehat{Var}(\hat{\beta}) - Var(\hat{\beta}) \big\|_{F} \stackrel{p}{\rightarrow} 0$ as $N\rightarrow \infty$,
	where $\|\cdot\|_{F}$ indicates the Frobenius norm.
\end{theorem}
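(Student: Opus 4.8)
The plan is to exploit the common sandwich structure of $N\widehat{Var}(\hat\beta)$ in \eqref{eq:ExogenousRegression:NetworkHACEstimation:VarHatBetaHat} and of $Var(\hat\beta)$ in \eqref{eq:ExogenousRegression:NetworkHACEstimation:AsymptoticVariance}, treating the two ``bread'' factors and the ``meat'' factor separately. Write $\hat Q_M \coloneqq \frac{1}{M}\sum_{k\in\mathcal{M}_N} x_{M,k}x_{M,k}'$ and $Q_M \coloneqq \frac{1}{M}\sum_{k\in\mathcal{M}_N} E[x_{M,k}x_{M,k}']$, and let $\hat\Omega_M$ and $\Omega_M$ denote the middle factors in \eqref{eq:ExogenousRegression:NetworkHACEstimation:VarHatBetaHat} and \eqref{eq:ExogenousRegression:NetworkHACEstimation:AsymptoticVariance}, respectively. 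The bread factors are already controlled by the ingredients of Theorem \ref{theo:ExogenousRegression:CLT}: $\hat Q_M \stackrel{p}{\rightarrow} Q_\infty \coloneqq \lim_M Q_M$, which is positive definite by Assumption \ref{assm:ExogenousRegression:BoundedSupport}(b), so by the continuous mapping theorem $\hat Q_M^{-1}\stackrel{p}{\rightarrow} Q_\infty^{-1}$ and $Q_M^{-1}\rightarrow Q_\infty^{-1}$. Since $N\Omega_M=O(1)$ (using that the off-connected cross-expectations vanish, $N\Omega_M=\frac{N}{M^2}\sum_m\sum_{m'\in\mathcal{M}_N}E[\varepsilon_{M,m}\varepsilon_{M,m'}x_{M,m}x_{M,m'}']$, which converges by Assumption \ref{assm:WellDefinedVariance}), it then suffices by a Slutsky-type argument to prove $N(\hat\Omega_M-\Omega_M)\stackrel{p}{\rightarrow}0$; and as $N/M$ is bounded by Assumption \ref{assm:ExogenousRegression:Denseness}, this reduces to showing $\frac{1}{M}\sum_{s\ge0}\sum_{m}\sum_{m'\in\mathcal{M}_N^\partial(m;s)}\big(\omega_M(s)\hat Y_{M,m}\hat Y_{M,m'}'-E[Y_{M,m}Y_{M,m'}']\big)\stackrel{p}{\rightarrow}0$, which I establish entrywise (equivalently, after pre- and post-multiplying by an arbitrary $\lambda\in\mathbbm{R}^K$, via the Cram\'er--Wold device).

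\textbf{Removing the estimation error.} Using $\hat\varepsilon_{M,m}=\varepsilon_{M,m}-x_{M,m}'(\hat\beta-\beta)$, write $\hat Y_{M,m}=Y_{M,m}-x_{M,m}x_{M,m}'(\hat\beta-\beta)$ and expand $\hat Y_{M,m}\hat Y_{M,m'}'$ into $Y_{M,m}Y_{M,m'}'$ plus two cross terms (linear in $\hat\beta-\beta$) and one quadratic term. Each extra term is, up to a factor $\|\hat\beta-\beta\|$ or $\|\hat\beta-\beta\|^2$, dominated in norm by a sum of the type $\frac{1}{M}\sum_{s\ge0}\sum_m\sum_{m'\in\mathcal{M}_N^\partial(m;s)}|\omega_M(s)|\,\|x_{M,m}\|^{a}\|x_{M,m'}\|^{b}\,|\varepsilon_{M,m}|^{c}|\varepsilon_{M,m'}|^{d}$ with small nonnegative exponents. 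Because $|\omega_M(\cdot)|\le1$, the covariates have uniformly bounded support (Assumption \ref{assm:ExogenousRegression:BoundedSupport}(a)), the errors have a uniformly bounded conditional moment (Assumption \ref{assm:Kojevnikov_et_al-2021:Assm4.1}(a), applied through Markov's inequality), and $\sup_{N}\sum_{s\ge0}\delta_M^\partial(s;1)<\infty$ (Assumption \ref{assm:ExogenousRegression:Sparseness}(a), which bounds the number of ordered connected pairs summed over shells by $M\sum_{s}\delta_M^\partial(s;1)$), each such sum is $O_p(1)$ uniformly in $N$. Since $\hat\beta-\beta\stackrel{p}{\rightarrow}0$ by Theorem \ref{theo:ConsistencyofBetaHat}, the estimation-error contribution is $o_p(1)$, reducing the task to the infeasible statement $\frac{1}{M}\sum_{s\ge0}\sum_m\sum_{m'\in\mathcal{M}_N^\partial(m;s)}\big(\omega_M(s)Y_{M,m}^{u}Y_{M,m'}^{v}-E[Y_{M,m}^{u}Y_{M,m'}^{v}]\big)\stackrel{p}{\rightarrow}0$.

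\textbf{The infeasible meat.} This is the core, and it is precisely the HAC-estimation result of \citet{Kojevnikov_et_al-2021} (their Theorem 4.1 and the computations in its proof) transplanted from the network over individuals to the network over active dyads. I would verify their hypotheses in our notation: the higher-order conditional moment bound is Assumption \ref{assm:Kojevnikov_et_al-2021:Assm4.1}(a); the kernel-bias condition $\sum_{s\ge1}|\omega_M(s)-1|\delta_M^\partial(s)\theta_{M,s}^{1-2/p}\rightarrow0$ of Assumption \ref{assm:Kojevnikov_et_al-2021:Assm4.1}(b) kills the truncation bias $\frac{1}{M^2}\sum_{s\ge1}\sum_m\sum_{m'\in\mathcal{M}_N^\partial(m;s)}(\omega_M(s)-1)E[Y_{M,m}Y_{M,m'}']$ after bounding the shell-$s$ contribution through $\delta_M^\partial(s;1)$ and the $\psi$-dependence coefficient $\theta_{M,s}$ of Definition \ref{defn:Kojevnikov_et_al-2021:Definition2.2}; and the variance of the infeasible meat is driven to zero by the composite-density conditions $\frac1M\sum_{s\ge0}c_M(s,b_M;2)\theta_{M,s}^{1-4/p}\rightarrow0$ (Assumption \ref{assm:Kojevnikov_et_al-2021:Assm4.1}(c)) and $\frac1M\sum_{s\ge0}c_M(s,b_M;2)\rightarrow0$ (Assumption \ref{assm:ExogenousRegression:Sparseness}(b)), which bound the fourth-order conditional covariances of the products $Y_{M,m}^uY_{M,m'}^v$ across neighborhood shells. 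Since the $\psi$-dependence and the coefficients $\theta_{M,s}$ are conditional on $\mathcal{C}_M$, one first obtains convergence conditionally, $a.s.$, and then passes to unconditional convergence in probability, with Assumption \ref{assm:AsymptRateOfVariance} guaranteeing that the conditional and unconditional middle factors share the same rate, so that the limiting target is the unconditional $\Omega_M$ of \eqref{eq:ExogenousRegression:NetworkHACEstimation:AsymptoticVariance}. Reassembling the three pieces then gives $N\widehat{Var}(\hat\beta)-Var(\hat\beta)=\hat Q_M^{-1}(N\hat\Omega_M)\hat Q_M^{-1}-Q_M^{-1}(N\Omega_M)Q_M^{-1}\stackrel{p}{\rightarrow}0$ in operator norm, hence in the Frobenius norm since all matrices are $K\times K$.

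\textbf{Main obstacle.} The delicate step is the infeasible-meat bound: adapting \citet{Kojevnikov_et_al-2021}'s variance computation for the triple sum over neighborhood shells to the network over dyads requires carefully tracking fourth-order conditional covariances of the products $Y_{M,m}^uY_{M,m'}^v$ — which is exactly what the density measures $\Delta_M$, $\delta_M^\partial$ and the composite $c_M$ are built to control — while simultaneously handling the triangular-array structure and the randomness ($\mathcal{C}_M$-measurability) of the dependence coefficients $\theta_{M,s}$, so that every inequality is valid $a.s.$ before the final passage to unconditional convergence. By comparison, the bread-factor convergence and the estimation-error removal are routine, the latter relying only on the bounded support of the covariates and the boundedness of the total shell density in Assumption \ref{assm:ExogenousRegression:Sparseness}(a).
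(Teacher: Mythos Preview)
Your proposal is correct and follows essentially the same architecture as the paper's proof: the sandwich decomposition into bread and meat, the removal of estimation error via $\hat\varepsilon_m=\varepsilon_m-x_m'(\hat\beta-\beta)$ combined with Assumption \ref{assm:ExogenousRegression:Sparseness}(a), and the infeasible-meat step via Proposition 4.1 of \citet{Kojevnikov_et_al-2021}.

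Two small discrepancies are worth flagging. First, the bread convergence $\hat Q_M^{-1}\stackrel{p}{\to}Q_\infty^{-1}$ is not secured by the hypotheses of Theorem \ref{theo:ExogenousRegression:CLT} alone; the paper proves it as Lemma \ref{lemma:WellDefinedBread}(ii), whose variance bound for $\hat Q_M$ already uses Assumption \ref{assm:ExogenousRegression:Sparseness}(a). Second, and more substantively, your conditional-to-unconditional passage appeals to Assumption \ref{assm:AsymptRateOfVariance}, but that is a ratio condition $\sigma_M^2/\tau_M^2\to1$ on a single diagonal entry and does not directly deliver $\|V_{N,M}^c-V_{N,M}\|_F\stackrel{p}{\to}0$ for the full matrix. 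The paper instead handles this step by a direct second-moment bound: writing $E[(V^c_{a,b}-V_{a,b})^2]=Var(V^c_{a,b})\le E[(V^c_{a,b})^2]$, expanding into a quadruple sum over dyads, and controlling the cardinality via the counting lemma $|H_M(s,r)|\le 4Mc_M(s,r;2)$ together with Assumption \ref{assm:ExogenousRegression:Sparseness}(b); Assumption \ref{assm:AsymptRateOfVariance} plays no role in the variance-estimator proof.
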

\begin{proof}
	See Appendix \ref{appndx:subsec:ProofOfMainTheorems:ExogenousLinearRegression:ConsistencyOfVHat}. 
\end{proof}

Theorem \ref{thm:ExogenousRegression:ConsistencyOfTheNetwork-HACVarianceEstimator} establishes the consistency of our proposed variance estimator accounting for network spillovers across dyads in the sense of the Frobenius norm. 

\subsection{When to Use the Proposed Estimator and the Role of Decaying Spillover Effects}

It follows from Theorem \ref{thm:ExogenousRegression:ConsistencyOfTheNetwork-HACVarianceEstimator} that the dyadic-robust variance estimator (\ref{eq:ExogenousRegression:Example:Dyadic-RobustVarianceEstimator:Estimator_1}) is inconsistent for the true variance when the underlying network involves a non-negligible degree of far-away correlations, as suggested in the examples of the previous section.\footnote{If the network is such that there are only adjacent dyads (i.e., when equation (\ref{eq:ExogenousRegression:Example:DyadicRegression_c}) holds), then the result above implies consistency of this estimator for dyadic dependence. By comparison, Lemma 1 of \citet{Aronow_et_al-2015} and Proposition 3.1 and 3.2 of \citet{Tabord-Meehan-2019} also provide consistent variance estimators for the dyadic dependence case without higher-order network spillovers. However, these results and ours do not subsume one another. Indeed, their estimators can accommodate flexible dependence within clusters of dyads that share common units, while we assume that the network of spillovers is observed even if there are only adjacent connections.} Specifically, the following corollary states that the dyadic-robust variance estimator of \citet{Aronow_et_al-2015} may not necessarily be consistent when it is na\"ively applied to the network-regression model with non-zero correlations beyond direct neighbors. 

\begin{corollary}[Inconsistency of Dyadic-Robust Estimators with Network Spillovers]\label{coro:ExogenousRegression:NetworkHACEstimators:InconsistencyResult}
	Suppose that the assumptions required in Theorem \ref{thm:ExogenousRegression:ConsistencyOfTheNetwork-HACVarianceEstimator} hold. Assume, in addition, that
	\begin{align}\label{eq:ExogenousRegression:NetworkHACEstimators:InconsistencyResult}
		\inf_{N\geq 1} \frac{1}{M} \bigg\| \sum_{s\geq 2} \sum_{m\in\mathcal{M}_{N}} \sum_{m'\in\mathcal{M}_{N}^{\partial}(m;s)} E\big[ \varepsilon_{M, m} \varepsilon_{M, m'} x_{M, m} x_{M, m'}' \big] \bigg\|_{F} > 0.
	\end{align}
	Then, the dyadic-robust estimator (\ref{eq:ExogenousRegression:Example:Dyadic-RobustVarianceEstimator:Estimator_1}) applied to the network-regression model (\ref{eq:ExogenousRegression:PopulationNetworkRegression_a}) and (\ref{eq:ExogenousRegression:PopulationNetworkRegression_b}) is inconsistent.
\end{corollary}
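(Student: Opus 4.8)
The plan is to argue by contradiction, using Theorem~\ref{thm:ExogenousRegression:ConsistencyOfTheNetwork-HACVarianceEstimator} as the benchmark. Suppose the dyadic-robust estimator~\eqref{eq:ExogenousRegression:Example:Dyadic-RobustVarianceEstimator:Estimator_1} were consistent, i.e.\ $N\widehat{Var}^{DR}(\hat\beta)\stackrel{p}{\to}AVar(\hat\beta)$. Since $AVar(\hat\beta)=\lim_{N\to\infty}Var(\hat\beta)$ and Theorem~\ref{thm:ExogenousRegression:ConsistencyOfTheNetwork-HACVarianceEstimator} gives $N\widehat{Var}(\hat\beta)-Var(\hat\beta)\stackrel{p}{\to}0$ for the network-robust estimator~\eqref{eq:ExogenousRegression:NetworkHACEstimation:VarHatBetaHat}, this would force $N\widehat{Var}(\hat\beta)-N\widehat{Var}^{DR}(\hat\beta)\stackrel{p}{\to}0$. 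The argument then reduces to showing that, under~\eqref{eq:ExogenousRegression:NetworkHACEstimators:InconsistencyResult}, this difference does \emph{not} vanish. The driving observation is that adjacency on the dyadic network is exactly geodesic distance $0$ or $1$, so the two estimators share the same outer matrices $\hat Q_M^{-1}$ with $\hat Q_M\coloneqq\frac1M\sum_{k\in\mathcal{M}_N}x_{M,k}x_{M,k}'$, and their middle matrices differ only through pairs with $\rho_M(m,m')\ge 2$, up to the discrepancy between the weight $\omega_M(1)$ and the unit weight that~\eqref{eq:ExogenousRegression:Example:Dyadic-RobustVarianceEstimator:Estimator_1} attaches to distance-one pairs.

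Concretely, I would first establish the decomposition
\[
N\widehat{Var}(\hat\beta)-N\widehat{Var}^{DR}(\hat\beta)=\hat Q_M^{-1}\Bigl(\tfrac{N}{M^2}\sum_{s\ge 2}\sum_{m\in\mathcal{M}_N}\sum_{m'\in\mathcal{M}_N^{\partial}(m;s)}\omega_M(s)\,\hat Y_{M,m}\hat Y_{M,m'}'\Bigr)\hat Q_M^{-1}+o_p(1),
\]
where the $o_p(1)$ absorbs the distance-zero term (which drops out because $\omega_M(0)=1$) and the distance-one term, equal to $(\omega_M(1)-1)$ times a partial sum that is $O_p(1)$ by standard bounds (bounded regressor support, Assumption~\ref{assm:ExogenousRegression:Sparseness}(a), and consistency of $\hat\beta$), with $\omega_M(1)=\omega(1/b_M)\to 1$. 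Because $\hat Q_M\stackrel{p}{\to}Q$ with $Q$ positive definite by Assumption~\ref{assm:ExogenousRegression:BoundedSupport}(b), multiplying through by $\hat Q_M$ shows the contradiction hypothesis is equivalent to $\tfrac{N}{M^2}\sum_{s\ge 2}\sum_m\sum_{m'\in\mathcal{M}_N^{\partial}(m;s)}\omega_M(s)\hat Y_{M,m}\hat Y_{M,m'}'\stackrel{p}{\to}0$. I would then re-run, verbatim but with every pair-sum restricted to $s\ge 2$, the three ingredients behind Theorem~\ref{thm:ExogenousRegression:ConsistencyOfTheNetwork-HACVarianceEstimator}: (i) replacing $\hat\varepsilon_{M,m}$ by $\varepsilon_{M,m}$ via consistency of $\hat\beta$ (Theorem~\ref{theo:ConsistencyofBetaHat}) and Assumption~\ref{assm:ExogenousRegression:BoundedSupport}(a); (ii) an $L^1$ law of large numbers for the network-dependent quadratic array, controlled by Assumptions~\ref{assm:Kojevnikov_et_al-2021:Assm4.1} and~\ref{assm:ExogenousRegression:Sparseness}, passing from $Y_{M,m}Y_{M,m'}'$ to $E[Y_{M,m}Y_{M,m'}']$; and (iii) removal of the kernel weights using Assumption~\ref{assm:Kojevnikov_et_al-2021:Assm4.1}(b). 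Since every bound in the original proof is itself a sum over $s$, dropping the $s\in\{0,1\}$ summands only shrinks it, so each step carries over. The conclusion is that the displayed quantity differs by $o_p(1)$ from the deterministic matrix
\[
D_M\coloneqq\tfrac{N}{M^2}\sum_{s\ge 2}\sum_{m\in\mathcal{M}_N}\sum_{m'\in\mathcal{M}_N^{\partial}(m;s)}E\bigl[\varepsilon_{M,m}\varepsilon_{M,m'}x_{M,m}x_{M,m'}'\bigr],
\]
and hence that $D_M\to 0$.

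It remains to contradict $D_M\to 0$ using~\eqref{eq:ExogenousRegression:NetworkHACEstimators:InconsistencyResult}, and I expect this to be the crux. Condition~\eqref{eq:ExogenousRegression:NetworkHACEstimators:InconsistencyResult} bounds below, uniformly in $N$, the \emph{sum of the Frobenius norms} of the covariance blocks attached to distance-$\ge 2$ pairs; together with $M\ge cN$ (Assumption~\ref{assm:ExogenousRegression:Denseness}) this keeps the aggregate magnitude of the covariances discarded by the dyadic-robust estimator bounded away from zero. The gap to close is that $\|D_M\|_F$ is the norm of the \emph{(weighted, sandwiched) sum} of these blocks, which could in principle be smaller than the sum of their norms through cancellation across pairs; condition~\eqref{eq:ExogenousRegression:NetworkHACEstimators:InconsistencyResult} is precisely the hypothesis ruling this out, the positive- and negative-spillover regimes discussed just before the statement being the leading cases in which the omitted covariances share a sign and therefore accumulate rather than cancel, so that $\|D_M\|_F$ stays bounded away from zero. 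Granting this, $D_M\to 0$ is impossible, so no probability limit of $N\widehat{Var}^{DR}(\hat\beta)$ can equal $AVar(\hat\beta)$, and the dyadic-robust estimator is inconsistent. Everything apart from this last translation is bookkeeping together with a direct restriction to $s\ge 2$ of the machinery already developed for Theorem~\ref{thm:ExogenousRegression:ConsistencyOfTheNetwork-HACVarianceEstimator}.
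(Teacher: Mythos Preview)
Your strategy mirrors the paper's: argue by contradiction, use Theorem~\ref{thm:ExogenousRegression:ConsistencyOfTheNetwork-HACVarianceEstimator} to force $\|N\widehat{Var}(\hat\beta)-N\widehat{Var}^{DR}(\hat\beta)\|_F\stackrel{p}{\rightarrow}0$, observe that the two middle matrices agree on $s\in\{0,1\}$ so only $s\ge 2$ survives, replace $\hat\varepsilon$ by $\varepsilon$ via consistency of $\hat\beta$, and then appeal to~\eqref{eq:ExogenousRegression:NetworkHACEstimators:InconsistencyResult}. The paper sidesteps your kernel-weight bookkeeping by fixing, for the purpose of the corollary, the rectangular kernel with lag truncation equal to the longest path, so that $\omega_M(s)\equiv 1$ on all connected pairs and the two estimators differ \emph{exactly} by the $s\ge 2$ sum; your handling of a general kernel via $\omega_M(1)\rightarrow 1$ and your explicit re-run of the LLN step over $s\ge 2$ is more careful, but the paper's shortcut is legitimate because the corollary only asserts inconsistency and any single admissible kernel choice suffices to exhibit it.

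The gap you flag at the end is real, and your sentence ``condition~\eqref{eq:ExogenousRegression:NetworkHACEstimators:InconsistencyResult} is precisely the hypothesis ruling this out'' is not a proof: \eqref{eq:ExogenousRegression:NetworkHACEstimators:InconsistencyResult} bounds from below the \emph{sum of the Frobenius norms} $\tfrac1M\sum_{s\ge2}\sum_{m,m'}\|E[\varepsilon_m\varepsilon_{m'}x_mx_{m'}']\|_F$, whereas the contradiction requires a lower bound on the \emph{norm of the sum} $\|D_M\|_F$ (or, after the sandwich, on $\|Q^{-1}D_MQ^{-1}\|_F$), and cancellation across pairs is not excluded by~\eqref{eq:ExogenousRegression:NetworkHACEstimators:InconsistencyResult} alone. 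You should know that the paper's own proof passes over this same point: it computes $E[Q_{N,1}]=\tfrac1M\sum_{s\ge2}\sum_m\sum_{m'}E[\varepsilon_m\varepsilon_{m'}x_mx_{m'}']$ and simply asserts this ``never equals to zero due to the hypothesis~\eqref{eq:ExogenousRegression:NetworkHACEstimators:InconsistencyResult},'' then concludes the middle part does not converge in probability to zero. So you have correctly located a lacuna that is shared by the paper; as written, neither argument supplies the step from~\eqref{eq:ExogenousRegression:NetworkHACEstimators:InconsistencyResult} to $\|D_M\|_F\not\rightarrow 0$, and the heuristic appeal to sign-aligned spillovers is exactly that---a heuristic.
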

\begin{proof}
	See Appendix \ref{appndx:subsec:ProofOfMainTheorems:ExogenousLinearRegression:InConsistencyOfVHatDyad}. 
\end{proof}

The added condition \eqref{eq:ExogenousRegression:NetworkHACEstimators:InconsistencyResult} in Corollary \ref{coro:ExogenousRegression:NetworkHACEstimators:InconsistencyResult} pertains to both the network configuration of active dyads and the regression variables. It represents a setting where the spillovers from far-away neighbors are non-negligible even when $N$ is large. For instance, \eqref{eq:ExogenousRegression:NetworkHACEstimators:InconsistencyResult} can hold even if there are not many neighbors, as long as the covariances between the error terms are sufficiently large. This builds on \cite{pinto} that inference with dyadic data may be biased if one only partially accounts for such spillovers. On the other hand, if far-away neighbors in the network have only a negligible effect on the cross-sectional dependence, the dyadic-robust variance estimator remains a good approximation for the asymptotic variance of linear dyadic data models with network spillovers across dyads. These insights are investigated further in Section \ref{sec:MonteCarloSimulation} using numerical simulations. 

These observations extend to settings where the spillovers decay along the network (i.e., when the correlation along unobservables decreases with the geodesic distance among dyads). Indeed, our estimator already accounts for such decay through the indirect covariances in its expression (\ref{eq:ExogenousRegression:NetworkHACEstimation:VarHatBetaHat}). When such spillovers propagate and decay is not too high, then condition (\ref{eq:ExogenousRegression:NetworkHACEstimators:InconsistencyResult}) is satisfied, as such spillovers are non-neglible. On the other hand, if they decay at a very high rate (in the limit, a $100\%$ decay from adjacent to connected dyads), then our estimator will become very similar to the dyadic-robust variance estimator.

However, some researchers may be willing to tolerate some asymptotic bias to still implement the dyadic-robust estimator even with some asymptotic bias. In such cases, when should they prefer our proposed estimator? While a general answer is complex because the bias depends on both the strength of indirect spillovers and on the network configuration, we provide an intuitive criterion which is verified using: (i) a simple, yet informative, analytical example where spillovers decay exponentially with distance along the network (shown below), and (ii) extensive Monte Carlo simulations with decay and different network configurations (shown in Section \ref{sec:MonteCarloSimulation}). They illustrate that researchers should opt for the network-robust estimator above when they have a low tolerance for bias, when there are persistent indirect spillovers and/or when the network is not too sparse.

\begin{example}[Maximum Admissible Bias in the Dyadic-Robust Variance Estimator]\label{example:MaximumAdmissibleBiasInTheDyadic-RobustVarianceEstimator}
Suppose that spillovers decay exponentially with distance along the network: i.e., \\$E\big[ \varepsilon_{M, m} \varepsilon_{M, m'} x_{M, m} x_{M, m'} \big] = \gamma^{s}$, where $s$ is the geodesic distance between dyads $m$ and $m'$, $\gamma\in (0,1)$ and $S$ the longest path in the network. 

Let $B>0$ denote the maximum tolerance for condition (\ref{eq:ExogenousRegression:NetworkHACEstimators:InconsistencyResult}) that the researcher is willing to allow when using the dyadic-robust variance estimator (\ref{eq:ExogenousRegression:Example:Dyadic-RobustVarianceEstimator:Estimator_1}). Then, a sufficient condition for the researcher to prefer the proposed estimator (\ref{eq:ExogenousRegression:NetworkHACEstimation:VarHatBetaHat}) over (\ref{eq:ExogenousRegression:Example:Dyadic-RobustVarianceEstimator:Estimator_1}) is that the decay rate $\gamma$ is higher than a threshold $\bar{\gamma}$, where
	\begin{align*}
		\ln \bar{\gamma} = \frac{2}{S+2} \bigg\{ \ln B - \ln (S-1) - \frac{1}{S-1} \sum_{s\geq 2} \ln \delta_{M}^{\partial}(s) \bigg\}.
	\end{align*}
\end{example}
\begin{proof}
	See Appendix \ref{appndx:example_proof}. 
\end{proof}

When the tolerated bias is small ($B \to 0$), dependence is large enough and does not decay too fast (large $\gamma$), or the network is more dense, our approach is preferable because it provides a consistent estimator even with non-negligible spillovers. Since the network is observed, $S$ and the last term are estimable and can be used for such diagnostics. Appendix \ref{appdx:AdditionalResults:S2gamma0.2} provides further discussions, while the next section presents for our baseline results and discusses simulation exercises.

\newpage
\section{Monte Carlo Simulations}\label{sec:MonteCarloSimulation}


\subsection{Simulation Design}

We compare three types of variance estimators across different specifications and network configurations. We use the Eicker-Huber-White estimator as a benchmark,\footnote{It is used in \citet{bliss98} and \citet{mansfield00}, for instance.} the dyadic-robust estimator of \citet{Tabord-Meehan-2019} as a comparison accounting for the dyadic nature of the data (when inappropriately used in the presence of network spillovers), and our proposed estimator which is robust to network spillovers across dyads.

We first generate networks on which random variables are assigned. We follow \citet{Canen_et_al-2020b} among others by employing two models of random graph formations. They are referred to as Specifications 1 and 2. Specification 1 uses the \citepos{Barabasi_and_Albert-1999} model of preferential attachment, with the fixed number of edges $\nu \in\{1,2,3\}$ being established by each new node.\footnote{In generating the Barab\'asi-Albert random graphs, we follow \citet{Canen_et_al-2020b} by choosing the seed to be the Erd\"os-Renyi random graph with the number of nodes equal the smallest integer above $5\sqrt{N}$, where $N$ denotes the number of nodes.} Specification 2 is based on the Erd\"os-Renyi random graph \citep{Erdos_and_Renyi-1959, Erdos_and_Renyi-1960} with probability $p= \frac{\lambda}{N}$ for $N$ denoting the number of nodes and $\lambda\in\{1,2,3\}$ being a parameter that governs the probability relative to the node size. The summary statistics for the networks generated by Specification 1 and 2 are given in Appendix \ref{appdx:AdditionalResults:SummaryStatistics}. The maximum degree and the average degree increase monotonically as we increase the parameters in both specifications. The number of active edges (i.e., dyads) also increases with the sample size regardless of the specification. This reflects that each node tends to have more direct links as the network becomes denser. In our exercises, the number of indirectly linked dyads also increases with network denseness. However, this is due to our simulated networks being relatively sparse. In other settings, the number of indirect connections may decrease with network density.

For each of the randomly generated networks, the simulation data is generated from the following simple network-linear regression:
\begin{align*}
	y_{m} = x_{m} \beta + \varepsilon_{m}, \footnotemark
\end{align*}
\footnotetext{To simplify notation, we drop the $M$ subscript, making the triangular array structure implicit.} 
\noindent with $m\coloneqq d(i,j)$ representing the dyad between agent $i$ and $j$. The dyad-specific regressor $x_{m}$ is defined as $x_{m}\coloneqq |z_{i} - z_{j}|$, where both $z_{i}$ and $z_{j}$ are drawn independently from $\mathcal{N}(0,1)$. The regression coefficient is fixed to $\beta=1$ across specifications. 

The dyad-specific error term $\varepsilon_{m}$ is constructed to exhibit non-zero correlation with $\varepsilon_{m'}$ as long as dyads $m$ and $m'$ are connected (i.e., in the network terminology, there exists a path in the simulated network), while the strength of the correlation is assumed to decay as they are more distant. To that end, we draw $\varepsilon_{m} \coloneqq \sum_{m'} \gamma_{m,m'} \eta_{m,m'}$, where $\gamma_{m,m'}$ equals $\gamma^{s}$ if the distance between $m$ and $m'$ is $s$, and $0$ otherwise, for $\gamma\in[0,1]$\footnote{In this simulation, we focus on cases of positive spillovers, as negative spillovers can be analyzed analogously.} and $s\in\{1,\ldots,S\}$ with $S$ being the maximum geodesic distance that the spillover propagates to. Each $\eta_{m,m'}$ is drawn $i.i.d.$ from $\mathcal{N}(0,1)$. Hence, $\gamma$ controls the strength of spillover effects, representing their decay rate. If $\gamma=1$, then spillover effects are the same no matter how far the agents are apart, i.e., the spillover effects do not decay. If $\gamma = 0$, there are no spillover effects, so the dyadic-robust variance estimator should be consistent. The case of $S=2$ corresponds to a situation where up to friends of friends may matter for spillovers.

We consider three scenarios for each type of network. In the main text, we set $S=2$ and $\gamma=0.8$. The results for $S=2$ with $\gamma=0.2$ are given in Appendix \ref{appdx:AdditionalResults:S2gamma0.2}, and the ones for $S=1$ with $\gamma=0.8$ are in Appendix \ref{appdx:AdditionalResults:S1}. Throughout the experiments, we employ the mean-shifted (by one) rectangular kernel with the lag truncation equal to two for the sake of comparison.\footnote{These choices for the kernel and lag truncation correspond to placing unit weights on pairs of adjacent dyads and pairs of connected dyads of distance equal two, and zero otherwise. This combination is sufficient for us to accommodate all the correlations across the network since it is constructed to have $S=2$. This provides a transparent comparison to other estimators. However, it is rare for a researcher to have such information. Appendix \ref{appdx:AdditionalResults:SInfty} provides further exercises where $S>2$ (so spillovers reach further dyads) and where the lag-truncation is chosen adaptively by the rule recommended in Section \ref{subsec:NetworkHACVarianceEstimation}. Those results are discussed below.}

\subsection{Results}

In Table \ref{tbl:MonteCarloSimulation:Results:CPACI_S2gamma0.8woKernel} we present the coverage probability for $\beta$ and the average length of the confidence interval across simulations. To do so, we compute the $t$-statistic using the OLS estimator for $\beta$ and different variance estimators under a Normal distribution approximation.\footnote{It is well known that estimates of a variance-covariance matrix may be negative semidefinite when the sample size is very small. This occurs in four out of 5000 simulations when $N=500$. Rather than dropping such observations, we follow \citet{Cameron_et_al-2011} and augment the eigenvalues of the matrix by adding a small constant, say $0.005$, thereby obtaining a new variance estimate that is more conservative.} The finite-sample properties of the three variance estimators are further illustrated in Figure \ref{fig:MonteCarloSimulation:Results:Boxplots08} in Appendix \ref{appdx:AdditionalResults:S2gamma0.8}.

\begin{table}[ht!]
	\centering
	\caption{
	The empirical coverage probability and average length of confidence intervals for $\beta$ at $95\%$ nominal level: $S=2$, $\gamma=0.8$.}
	\begin{threeparttable}
		\begin{tabular}{lcccccccc} \toprule \toprule
							& $$ 		& \multicolumn{3}{c}{Specification 1} && \multicolumn{3}{c}{Specification 2} \\ \cmidrule{3-5}\cmidrule{7-9}
							& $N$ 	& $\nu=1$	& $\nu=2$	& $\nu=3$	& & $\lambda=1$ 	& $\lambda=2$ & $\lambda=3$ \\ \midrule
							&		& \multicolumn{6}{c}{Coverage Probability} 					\\ 
			Eicker-Huber-White	& 500	& 0.877 & 0.868 & 0.871 & & 0.891	& 0.870	& 0.875     \\ 
  							& 1000	& 0.880 & 0.873 & 0.873 & & 0.892	& 0.881	& 0.888     \\ 
  							& 5000	& 0.879 & 0.871 & 0.877 & & 0.890	& 0.882	& 0.880     \\
			Dyadic-robust 		& 500	& 0.922 & 0.898 & 0.894 & & 0.932	& 0.921	& 0.917     \\ 
  							& 1000	& 0.929 & 0.913 & 0.901 & & 0.937	& 0.927	& 0.924     \\ 
  							& 5000	& 0.934 & 0.912 & 0.909 & & 0.939	& 0.933	& 0.922     \\
			Network-robust 	& 500 	& 0.930 & 0.917 & 0.915 & & 0.937	& 0.937	& 0.941     \\ 
  							& 1000 	& 0.939 & 0.934 & 0.933 & & 0.946	& 0.945	& 0.948     \\ 
  							& 5000 	& 0.949 & 0.944 & 0.943 & & 0.947	& 0.948	& 0.948     \\ 
							\\
							&		& \multicolumn{6}{c}{Average Length of the Confidence Intervals}	\\ 
			Eicker-Huber-White	& 500	& 0.368 & 0.409 & 0.482 & & 0.287	& 0.285	& 0.296     \\ 															& 1000	& 0.266 & 0.302 & 0.331 & & 0.205	& 0.201	& 0.207     \\ 
							& 5000 	& 0.132 & 0.159 & 0.176 & & 0.092	& 0.090	& 0.094     \\
			Dyadic-robust 		& 500 	& 0.426 & 0.454 & 0.520 & & 0.328	& 0.329	& 0.337     \\ 
  							& 1000 	& 0.312 & 0.339 & 0.361 & & 0.236	& 0.232	& 0.237     \\ 
  							& 5000 	& 0.158 & 0.178 & 0.192 & & 0.106	& 0.104	& 0.108     \\ 
			Network-robust 	& 500 	& 0.441 & 0.493 & 0.568 & & 0.337	& 0.349	& 0.366     \\ 
  							& 1000 	& 0.326 & 0.373 & 0.408 & & 0.244	& 0.248	& 0.259     \\ 
  							& 5000 	& 0.167 & 0.199 & 0.222 & & 0.110	& 0.112	& 0.118     \\ \bottomrule
  		\end{tabular}
	%
		\begin{tablenotes}[para,flushleft]
			\item[] {\footnotesize {\sl Note}: The upper-half of the table displays the empirical coverage probability of the asymptotic confidence interval for $\beta$, and the lower-half showcases the average length of the estimated confidence intervals. As the sample size ($N$) increases, the empirical coverage probability for our estimator accounting for network spillovers approaches $0.95$, the correct nominal level. However, that is not the case for alternative estimators.}
		\end{tablenotes}
	\end{threeparttable}
	\label{tbl:MonteCarloSimulation:Results:CPACI_S2gamma0.8woKernel}
\end{table}

The results for the empirical coverage probabilities depend on two dimensions: the sample size ($N$) and the denseness of the underlying network (parametrized by $\nu$ and $\lambda$). The coverage probability for each estimator improves with the sample size. However, when spillovers are high ($\gamma = 0.8$), only our proposed network-robust variance estimator has coverage close to 95\%, consistent with Theorem \ref{thm:ExogenousRegression:ConsistencyOfTheNetwork-HACVarianceEstimator}. Meanwhile, in this set-up, both the Eicker-Huber-White and the dyadic-robust variance estimators perform poorly as the underlying network becomes denser, no matter which specification of the network is involved. For example, in Specification 1 with $\nu=3$ and the largest sample size ($N=5000$), the confidence intervals based on the Eicker-Huber-White and the dyadic-robust variance estimators do not cover the true parameter 615 and 455 times out of 5000 simulations (12.3\% and 9.1\%), respectively. On the other hand, the network-robust variance estimator is designed to capture higher-order correlations and, thus, its coverage remains stable across network configurations.

A similar conclusion is drawn from the average length of the confidence intervals. The confidence intervals for the Eicker-Huber-White and dyadic-robust variance estimators are typically shorter than those for our proposed estimator when $\gamma$ is large and $S=2$. This means the former undercovers the true parameter (in the presence of positive spillovers), confirming our theoretical results (Corollary \ref{coro:ExogenousRegression:NetworkHACEstimators:InconsistencyResult}). The difference is often around 10 -- 20\% of the length of the network-robust estimator and it increases with even denser networks (see Table \ref{tbl:MonteCarloSimulation:Results:CPACI_S2gamma0.8woKernelwHigherDensenessParameters} in Appendix).

However, as the magnitude of spillovers decreases (i.e. $\gamma$ tends to zero), higher-order spillovers are less pronounced, so that the biases from using the Eicker-Huber-White and dyadic-robust variance estimators disappear. This is shown in Table \ref{tbl:MonteCarloSimulation:Results:CPACI_S2gamma0.2woKernel} of Appendix \ref{appdx:AdditionalResults:S2gamma0.2} for the case of $S=2$ and $\gamma = 0.2$. When $S=1$, the dyadic-robust variance estimator coincides with our proposed estimator (i.e., there are no spillovers from non-adjacent links, or spillovers fully decay immediately). This is shown in Table \ref{tbl:MonteCarloSimulation:Results:CPACI_S1gamma0.8woKernel} of Appendix \ref{appdx:AdditionalResults:S1}. 

Finally, Appendix \ref{appdx:AdditionalResults:SInfty} shows that the results are robust to spillovers that can reach the most distantly connected neighbors ($S=\infty$) and to choosing the lag-truncation adaptively. Rather than assuming a given value, we set $b_M = 2 \log(M)/\log(\max(average~degree, 1.05))$ as suggested above. The results are qualitatively and quantitatively very similar: when the decay is not too high (i.e., spillovers propagate), the confidence interval associated with the dyadic-robust estimator will undercover the true parameter even in moderate sample sizes.

\section{Empirical Illustration: Legislative Voting in the European Parliament}\label{sec:EmpiricalIllustration}

We now turn to an empirical application to demonstrate the performance of our variance estimator with real-world data. In doing so, we revisit the work of \citet{Harmon_et_al-2019} on whether legislators who sit next to each other in Parliament tend to vote more alike on policy proposals. 

They focus on the European Parliament, whose Members (MEPs) are voted in through elections in each European Union (EU) member country every five years. The Parliament convenes once or twice a month, in either Brussels or Strasbourg, to debate and vote on a series of proposals. Once elected to the European Parliament (EP), these MEPs are organized into European Political Groups (EPGs), which aggregate similar ideological members/parties across countries. As \citet{Harmon_et_al-2019} describe, these EPGs function as parties for many of the traditional party-functions in other legislatures, including coordination on policy and policy votes. Most importantly, MEPs sit within their EPG groups in the chamber. However, within each EPG group, non-party leaders traditionally sit in alphabetical order by last name. See Figure \ref{fig:SeatingPlanAtTheEuropeanParliament} in Appendix \ref{appndx:EmpiricalIllustration:SeatingArrangement} for an example. Hence, to the extent that last names are exogenous (i.e., politicians are not selected or choose last names anticipating seating arrangements in the European Parliament), seating arrangements are used as a quasi-natural experiment.

\subsection{Data}

We adopt the same dataset used in \citet{Harmon_et_al-2019}, which collects the MEP-level data on votes cast in the EP. The dataset records what each MEP voted for (Yes or No), where she was seated, and a number of individual characteristics (e.g. country, age, education, gender, tenure, etc). Its sample period is the plenary sessions held in both Brussels and Strasbourg between October 2006 and November 2010. This corresponds to the sixth and seventh terms of the European Parliament. 

For our empirical illustration, we restrict the sample to the policies voted in Strasbourg during the seventh term and we focus on the seating pattern between July 14th -- July 16th, 2009 (which involved 116 different proposals being voted on). The resulting sample has 2,431,261 observations, which are split into 422 politicians forming 26,099 pairs (i.e., dyads) of MEPs over 116 proposals.\footnote{There are 334 pairs of adjacent dyads and 591 pairs of connected dyads.} Further information on the construction of our sample is detailed in Appendix \ref{appndx:EmpiricalIllustration:DataConstruction}. These restrictions keep the main set-up in \citet{Harmon_et_al-2019}, while allowing us to evaluate variance estimators with smaller sample sizes.

\subsection{Empirical Set-up}


Let $Agree_{d(i,j), t}$ be an indicator that takes one if MEP $i$ and $j$ cast the same vote on proposal $t$, and zero otherwise. Likewise, $SeatNeighbors_{d(i,j), t}$ is a binary variable that equals one if MEP $i$ and $j$ are seated next to each other when the vote for proposal $t$ is taken place, and zero otherwise. We view the seating arrangement as a network among the MEPs: i.e., two MEPs who are seated next to each other within the same political group are treated as an active dyad, which in turn constitutes a network over pairs of MEPs within each political group (see the note below Figure \ref{fig:SeatingPlanAtTheEuropeanParliament} for details).\footnote{This accommodates the row-by-EP-by-EPG clustering implemented by the authors. This is already an extension to \citet{Harmon_et_al-2019} which assume that only seating neighbors would have correlated error terms. One could extend this further by allowing connections across parties.}

We follow \citet{Harmon_et_al-2019} in assuming that such seating arrangements are exogenously determined, i.e., the adjacency relation is exogenously formed. Their main specification is a linear model:
\begin{align}\label{eq:EmpiricalApplication:EmpiricalSetup}
	Agree_{d(i,j), t} = \beta_{0} + \beta_{1} SeatNeighbors_{d(i,j), t} + \varepsilon_{d(i,j), t}.
\end{align}

The authors originally conducted inference using the estimator in \citet{Aronow_et_al-2015}, assuming that dyads $m = d(i,j)$ cannot be correlated with $m' = d(k,l)$ unless they share a common member (i.e., there is no correlation across errors if $i,j$ and $k,l$ do not include a common unit). We will now compare this approach to using the variance estimator introduced in Section \ref{subsec:NetworkHACVarianceEstimation}, which allows the error terms to exhibit non-zero correlations as long as they are connected on the network over dyads represented by the adjacency relation of seating arrangements in Parliament. In this analysis, we use the mean-shifted rectangular kernel with the lag truncation equal the longest path in the constructed network. This combination of the kernel density and the lag truncation enables us to accommodate all the possible correlations across connected dyads (i.e., pairs of MEPs), placing equal weight on each of them. This makes the comparisons across estimators more transparent.\footnote{In Appendix \ref{appndx:EmpiricalIllustration:FullResults}, we replicate this analysis with a different choice of kernel and setting the lag-truncation parameter following the criterion suggested above/in \citet{Kojevnikov_et_al-2021}. The results are very similar.}

Inspired by \citet{Harmon_et_al-2019}, we consider three specifications: (I) a simple linear regression model as given in (\ref{eq:EmpiricalApplication:EmpiricalSetup}); (II) the model (\ref{eq:EmpiricalApplication:EmpiricalSetup}) augmented with a flexible set of other demographic variables;\footnote{Following \citet{Harmon_et_al-2019}, we include indicators whether country of origins, quality of education, freshman status and gender, respectively, are the same, as well as differences in ages and tenures. See the note below Table \ref{tbl:EmpiricalApplication:Results:MainAnalysis:StrasbourgTerm7} for details.} and (III) the model (\ref{eq:EmpiricalApplication:EmpiricalSetup}) with both a flexible set of other demographic variables and day-specific fixed effects. When fixed effects are present in their original estimation, we estimate a within-difference model via OLS.

\subsection{Results}\label{sec:EmpiricalIllustration:Results}

The main results of our empirical analysis are summarized in Table \ref{tbl:EmpiricalApplication:Results:MainAnalysis:StrasbourgTerm7}. Panel A displays the parameter estimates for the three different specifications. This panel shows that our point-estimates are consistent with the original estimates of \citet{Harmon_et_al-2019} (columns 6 and 7 of Table 4), as they are close to 0.006 (their original results) and stable across specifications.\footnote{Note that our dependent variable is equal to one if two MEPs vote the same and zero otherwise, while \citet{Harmon_et_al-2019} code it as one if MEPs vote differently. Hence, to compare our estimates with theirs, the signs on the estimates of $SeatNeighbors$ must be flipped.} Hence, changes to point-estimates are not due to sample selection. The positive coefficient for {\sl SeatNeighbors} suggests that the MEPs sitting together tend to vote more similarly than those sitting apart, providing evidence in favor of their original hypothesis. The coefficients on the covariates (displayed in Panel C of Appendix Table \ref{tbl:EmpiricalApplication:Results:MainAnalysis:StrasbourgTerm7_2}) are also quantitatively and qualitatively similar to those in their original paper. For instance, our estimates for {\sl SameCountry} are 0.056, while their estimates are around 0.051, suggesting politicians from the same country are more likely to vote similarly on policies. 

Panel B shows the standard errors for the regression coefficient of {\sl SeatNeighbors} using different variance estimators. Building on Section \ref{sec:MonteCarloSimulation}, the panel compares three different types of heteroskedasticity robust estimators: namely, the Eicker-Huber-White, the dyadic-robust estimator (used in their original work), and our proposed network-robust estimator. We note that, due to the smaller sample size, the standard errors in our exercise are larger than the original authors'. Hence, the results in this section are not directly comparable to those in \citet{Harmon_et_al-2019}. Rather, we compare the different estimators within our own sample.

As foreshadowed in the Monte Carlo simulations, the Eicker-Huber-White estimates are the smallest, followed by the dyadic-robust estimates, which, in turn, are smaller than the network-robust estimates. In fact, for Specification (III), the Eicker-Huber-White estimate is roughly 73\% smaller than using the estimator accounting for network spillovers across dyads, while the dyadic-robust one is 22\% smaller. This fact entails two implications. First, our finding provides empirical evidence in support of the existence of \textit{indirect} positive spillovers among the MEPs: even distant connections may indirectly generate correlated behavior among politicians $i$ and $j$. Second, the use of alternative estimators not accounting for such spillovers undercovers the true parameter and may generate biased hypothesis testing about the regression coefficient of {\sl SeatNeighbors}. The difference in estimates appears quantitatively meaningful in this empirical example. 

\begin{table}[htbp]
	\centering
	\caption{Spillovers in Legislative Voting -- Main Analysis}
	\begin{threeparttable}
		\centering
		{\scriptsize
		\begin{tabular}{lccc} \toprule \toprule
									& Specification (I) 	& Specification (II) 	& Specification (III) \\ \midrule
			{\sl Panel A: Parameter Estimates}&&&					 \\
			\quad Seat neighbors		& 0.007 		& 0.006		& 0.006  \\ \\
			
			{\sl Panel B: Standard errors} &&& \\
			\quad Eicker-Huber-White			& 0.003 		& 0.003	& 0.003 \\
			\quad Dyadic-robust			& 0.008	 	& 0.008 	& 0.009 \\
			\quad Network-robust		& 0.010		& 0.010	& 0.011 \\ \bottomrule
			
			%
		\end{tabular}
		}
	
	\end{threeparttable}
		\parbox{6.2in}{\footnotesize Notes: Panel A displays the parameter estimates for the variable ``Seat Neighbors" for the three different specifications, and Panel B shows the standard errors for its regression coefficient using different variance estimators. Adjacency of MEPs is defined at the level of a row-by-EP-by-EPG. (See the note below Figure \ref{fig:SeatingPlanAtTheEuropeanParliament}.) Independent variables are as follows: {\sl Seat neighbors} is an indicator variable denoting whether both MEPs sit together; {\sl Same country} represents an indicator for whether both MEPs are from the same country; {\sl Same quality education} is an indicator showing whether both MEPs have the same quality of education background, measured by if both have the degree from top 500 universities; {\sl Same freshman status} encodes whether both MEPs are freshman or not; {\sl Age difference} is the difference in the MEPs' ages; and {\sl Tenure difference} measures the difference in the MEPs' tenures. A full description of the result is provided in Table \ref{tbl:EmpiricalApplication:Results:MainAnalysis:StrasbourgTerm7_2}.}
	\label{tbl:EmpiricalApplication:Results:MainAnalysis:StrasbourgTerm7}
\end{table}


\section{Conclusion}\label{sec:Conclusion}

With dyadic data, researchers typically assume that dyads are uncorrelated if they do not share a common unit, an assumption that is leveraged in inference. We showed that this assumption may be inappropriate in many models where interactions occur on a network: while data may be dyadic, the cross-sectional dependence may be much more complex and spillover beyond pairwise interactions. For instance, trade between countries may depend on trade between auxiliary partners and their unobservables. In political economy, whether a politician votes with a colleague may have spillovers from seating neighbors beyond one's own immediate ones. We verified this using both theoretical results and Monte Carlo simulations. We provided a new consistent variance estimator for parameters in a linear model with dyadic data with correct asymptotic coverage and good finite sample properties. 

To conclude, we clarify that our goal in this exercise is neither to criticize dyadic-robust variance estimators, which are a fundamental part of the empiricist's toolkit, nor to suggest our approach should always be used. Rather, we wish to draw attention that researchers should fully specify the cross-sectional dependence in their model. If the conventional assumption of dyadic dependence correctly specifies the environment in question, or when spillovers beyond immediate neighbors might be negligible, then previous approaches suffice. However, as we have discussed above, existing applications may apply the latter method even if it is seemingly inappropriate to their setting. This includes situations where such network spillovers may be present or persistent (even with decay). In such scenarios, our estimator provides a possible solution. Those choices, though, must be guided by the application that empiricists face. Hence, building on \cite{poast}, we recommend researchers to continue to fully specify their model, including full specification of their covariance structure, thereby clarifying what type of inference procedure is most appropriate for their environment.

\singlespacing
\setlength\bibsep{0pt}
\subsection*{Acknowledgments}
We thank Aimee Chin, Hugo Jales, Taisuke Otsu, Pablo Pinto, V\'{i}tor Possebom, Kevin Song, Bent E. S\o rensen, and seminar participants at the University of Houston, the 2022 Texas Econometrics Camp, the 2022 European Winter Meeting and the 2022 Asia Meeting of the Econometric Society in East and South-East Asia for valuable comments and suggestions. 

\subsection*{Data Availability Statement}
Replication code and data for this article will be made available publicly online upon conditional acceptance.

\subsection*{Competing Interests Statement}
Competing interests: The authors declare none.
\bibliographystyle{chicago}
\bibliography{ListOfReferences}

\clearpage

\onehalfspacing


\appendix
\begin{center}
\section*{\large Supplementary Material (Online-Only Publication) for: \\ ``Inference in Linear Dyadic Data Models \\ with Network Spillovers" \\ \medskip Nathan Canen and Ko Sugiura}
\end{center}

\section{Mathematical Set-up}\label{appdx:MathmaticalSetup}

This section lays out the mathematical set-up of our model in more detail, heavily drawing from \citet{Kojevnikov_et_al-2021}. We conclude with a discussion of the related statistical literature.

We first define a collection of pairs of sets of dyads. For any positive integers $a$, $b$ and $s$, define
\begin{align*}
	\mathcal{P}_{M}(a,b; s) \coloneqq \{ (A,B): A,B\subset\mathcal{M}_{N}, |A|=a, |B|=b , \rho_{M}(A,B) \geq s \},
\end{align*}
where
\begin{align}\label{eq:ExogenousRegression:DistanceBetweenTwoSets}
	\rho_{M}(A,B) \coloneqq \min_{m\in A} \min_{m'\in B} \rho_{M}(m,m'),
\end{align}
with $\rho_{M}(m,m')$ denoting the geodesic distance between dyads $m$ and $m'$, i.e., the smallest number of adjacent dyads between dyads $m$ and $m'$. In words, the set $\mathcal{P}_{M}(a,b; s)$ collects all two distinct sets of active dyads whose sizes are $a$ and $b$ and that have no dyads in common. 

Next we consider a collection of bounded Lipschitz functions. Define
\begin{align*}
	\mathcal{L}_{K} \coloneqq \{ \mathcal{L}_{K,c}: c\in\mathbbm{N} \},
\end{align*}
where
\begin{align*}
	\mathcal{L}_{K,c} \coloneqq \{ f:\mathbbm{R}^{K\times c}\rightarrow\mathbbm{R}: \|f\|_{\infty} < \infty, \text{Lip}(f) < \infty \},
\end{align*}
with $\|\cdot\|_{\infty}$ representing the supremum norm and $\text{Lip}(f)$ being the Lipschitz constant.\footnote{It is immediate to see that $\mathbbm{R}$ is a normed space with respect to the Euclidean norm, while the $\mathbbm{R}^{K\times c}$ can be equipped with the norm $\rho_{c}(x,y) \coloneqq \sum_{\ell=1}^{c} \|x_{\ell}-y_{\ell}\|$ where $x, y \in \mathbbm{R}^{K\times c}$ and $\|z\|\coloneqq (z'z)^{\frac{1}{2}}$, thereby the Lipschitz constant is defined as $\text{Lip}(f)\coloneqq \min\{w\in\mathbbm{R}: |f(x)-f(y)| \leq w \rho_{c}(x,y) \ \forall x,y\in\mathbbm{R}^{K\times c} \}$.} In words, the set $\mathcal{L}_{K,c}$ collects all the bounded Lipschitz functions on $\mathbbm{R}^{K\times c}$ and the set $\mathcal{L}_{K}$ moreover gathers such sets with respect to $c\in\mathbbm{N}$.

Lastly, we write
\begin{align*}
	Y_{M,A} \coloneqq (Y_{M,m})_{m\in A},
\end{align*}
and $Y_{M,B}$ is analogously defined. Let $\{\mathcal{C}_{M}\}_{M\geq 1}$ denote a sequence of $\sigma$-algebras and be suppressed as $\{\mathcal{C}_{M}\}$. 

The network dependent random variables are characterized by the upper bound of their covariances, first defined in Definition 2.2 of \citet{Kojevnikov_et_al-2021}. 

\begin{definition}[Conditional $\psi$-Dependence given $\{\mathcal{C}_{M}\}$]
	A triangular array $\{ Y_{M,m}\in\mathbbm{R}^{K}: M\geq 1, m\in\{1,\ldots, M\}\}$ is called conditionally $\psi$-dependent  given $\{\mathcal{C}_{M}\}$, if for each $M\in\mathbbm{N}$, there exist a $\mathcal{C}_{M}$-measurable sequence $\theta_{M}\coloneqq \{\theta_{M,s}\}_{s\geq 0}$ with $\theta_{M,0}=1$, and a collection of nonrandom function $(\psi_{a,b})_{a,b\in\mathbbm{N}}$ where $\psi_{a,b}:\mathcal{L}_{K,a}\times\mathcal{L}_{K,b}\rightarrow [0.\infty)$, such that for all $(A,B)\in\mathcal{P}_{M}(a,b;s)$ with $s>0$ and all $f\in \mathcal{L}_{K,a}$ and $g\in \mathcal{L}_{K,b}$,
	\begin{align*}
		\big| Cov\big( f(Y_{M,A}), g(Y_{M,B}) \mid \mathcal{C}_{M} \big) \big| \leq \psi_{a,b}(f,g)\theta_{M,s} \ \ \ \ \ a.s.
	\end{align*}
\end{definition}

Intuitively, this definition states that the upper bound must be decomposed into two components. The first part $\psi_{a,b}(f,g)$ is deterministic and depends on nonlinear Lipschitz functions $f$ and $g$. The other component $\theta_{M,s}$ is stochastic and depends only on the distance of the random variables on the underlying network. The former, nonrandom component reflects the scaling of the random variables as well as that of the Lipschitz transformations, while the latter random part stands for the covariability of the two random variables. We call $\theta_{M,s}$ the dependence coefficient. We follow \citet{Kojevnikov_et_al-2021} in assuming boundedness for these two components.

\begin{assumption}[\citet{Kojevnikov_et_al-2021}, Assumption 2.1]\label{assm:Kojevnikov_et_al-2021:Assm2.1}
	The triangular array $\{ Y_{M,m}\in\mathbbm{R}^{K}: M\geq 1, m\in\{1,\ldots, M\}\}$ is conditionally $\psi$-dependent  given $\{\mathcal{C}\}$ with the dependence coefficients $\{\theta_{M,s}\}$ satisfying the following conditions: (a) there exists a constant $C>0$ such that $\psi_{a,b}(f,g) \leq C\times ab \big( \|f\|_{\infty} +\text{Lip}(f) \big) \big( \|g\|_{\infty} +\text{Lip}(g) \big)$; (b) $\sup_{M\geq 1}\max_{s\geq 1} \theta_{M,s} <\infty$ a.s.
\end{assumption}

Assumption \ref{assm:Kojevnikov_et_al-2021:Assm2.1} is maintained throughout the paper and employed to show asymptotic properties of our estimators such as the consistency and asymptotic normality, and the consistency of the network-robust variance estimator for dyadic data.

\subsection{Related Literature}

Our main insight in accommodating indirect spillovers is that we can rewrite the correlation structure among dyads as a dyadic network, where links denote whether they share a common member. As a result, this dyadic network describes how close/far certain dyads are from sharing members with other dyads. In doing so, the transformed problem is amenable to appropriate applications of recent developments in the statistics of random variables which are correlated along an (observable, exogenous) network. In particular, we apply asymptotic results for network-dependent random variables developed by \citet{Kojevnikov_et_al-2021}\footnote{\citet{Vainora-2020wp} provides another such theoretical contribution.} to an appropriately defined dyadic network, with assumptions imposed on the latter. \citet{Leung-2021wpb} and \citet{Leung-2022} also apply the framework of \citet{Kojevnikov_et_al-2021} to study, respectively, cluster-robust inference and causal inference for the case of individual-specific random variables. These papers focus on the correlation along a network over individuals, rather than over dyads. Meanwhile, \citet{Leung_and_Moon-2021wp} derive an asymptotic theory for dyadic variables in the context of networks, primarily for endogenous network formation models.
\section{Proofs of Main Theorems and Results}\label{appndx:ProofOfMainTheorems:ExogenousLinearRegression}

\subsection{Identification of $\beta$}\label{appndx:IdentificationOfBeta}

\begin{assumption}\label{assm:ExogenousRegression:IdentificationCondition}
	For each $N\in\mathbbm{N}_{}^{}$: \\
	(a) $\sup_{m\in\mathcal{M}_{N}} E\big[ | \varepsilon_{M, m} |^{2} \big]$ exists and is finite; \\
	(b) $\sup_{m\in\mathcal{M}_{N}} E\big[ \| x_{M, m} \|_{}^{} \big]$ exists and is finite; \\
	(c) $E\big[ x_{M, m} x_{M, m}' \big]$ exists with finite elements and positive definite for all $m\in\mathcal{M}_{N}$; \\
	(d) $E\big[ \varepsilon_{M, m}^{} \mid X_{M}^{} \big] = 0$ for all $m\in\mathcal{M}_{N}$.
\end{assumption}

Assumption \ref{assm:ExogenousRegression:IdentificationCondition} (a) and (b) are standard and jointly imply the finite existence of the second moment of $y_{M, m}$ for all $m\in\mathcal{M}_{N}$, which in turn implies the finite existence of the cross moment of $y_{M, m}$ and $x_{M, m}$ for all $m\in\mathcal{M}_{N}$. The third and fourth assumptions are also standard in the context of the linear regression models and require no multicolinearity and strict exogeneity, respectively. 

Identification of the linear parameter in equation (\ref{eq:ExogenousRegression:PopulationNetworkRegression_a}) follows from Assumption \ref{assm:ExogenousRegression:IdentificationCondition} (see Proposition \ref{prop:ExogenousRegression:Identification} in Appendix \ref{appndx:IdentificationOfBeta}).

\begin{proposition}[Identification]\label{prop:ExogenousRegression:Identification}
	Under Assumption \ref{assm:ExogenousRegression:IdentificationCondition}, the regression parameter $\beta$ in (\ref{eq:ExogenousRegression:PopulationNetworkRegression_a}) is identified.
\end{proposition}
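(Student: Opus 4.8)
The plan is to show that the population moment condition $E[x_{M,m}(y_{M,m} - x_{M,m}'b)] = 0$ has $b = \beta$ as its unique solution, for every $m \in \mathcal{M}_N$ and every $N$. First I would fix an arbitrary $N$ and an arbitrary active dyad $m$, and write out the population normal equation: multiplying equation (\ref{eq:ExogenousRegression:PopulationNetworkRegression_a}) by $x_{M,m}$ and taking expectations gives $E[x_{M,m} y_{M,m}] = E[x_{M,m} x_{M,m}']\beta + E[x_{M,m}\varepsilon_{M,m}]$. The plan is then to kill the last term using strict exogeneity: by Assumption \ref{assm:ExogenousRegression:IdentificationCondition}(d), $E[\varepsilon_{M,m} \mid X_M] = 0$, so by the law of iterated expectations $E[x_{M,m}\varepsilon_{M,m}] = E[x_{M,m} E[\varepsilon_{M,m}\mid X_M]] = 0$, since $x_{M,m}$ is a row of $X_M$ and hence $X_M$-measurable. (One should note here that all these expectations exist and are finite: Assumption \ref{assm:ExogenousRegression:IdentificationCondition}(a),(b) together with Cauchy--Schwarz give the finiteness of $E[\|x_{M,m}\varepsilon_{M,m}\|]$ and of $E[\|x_{M,m}y_{M,m}\|]$, and (c) gives finiteness of $E[x_{M,m}x_{M,m}']$.)

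Having established $E[x_{M,m} y_{M,m}] = E[x_{M,m}x_{M,m}']\beta$, the next step is uniqueness: suppose some $b \in \mathbbm{R}^K$ also satisfies $E[x_{M,m} y_{M,m}] = E[x_{M,m}x_{M,m}']b$. Subtracting gives $E[x_{M,m}x_{M,m}'](\beta - b) = 0$, and since $E[x_{M,m}x_{M,m}']$ is positive definite (hence invertible) by Assumption \ref{assm:ExogenousRegression:IdentificationCondition}(c), we conclude $\beta = b$. This identifies $\beta$ from the distribution of $(y_{M,m}, x_{M,m})$ alone, so no appeal to the cross-dyad dependence structure (\ref{eq:ExogenousRegression:PopulationNetworkRegression_b}) is needed.

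Honestly, there is no real obstacle here — this is the textbook OLS identification argument, and the only things worth being careful about are (i) confirming integrability so that the moment manipulations are legitimate, and (ii) stating clearly that $x_{M,m}$ is $X_M$-measurable so that the iterated-expectation step is valid. If one wanted to be even more explicit, one could alternatively stack the dyads and argue that $E[X_M' X_M]$ is positive definite (which follows from (c) applied dyad-by-dyad, or is assumed directly), but the dyad-by-dyad argument above is cleanest and suffices. I would present it in three short displayed steps: the normal equation, the vanishing of the error term, and the inversion.
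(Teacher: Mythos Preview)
Your proposal is correct and follows essentially the same route as the paper: premultiply by $x_{M,m}$, take expectations, use (d) to kill the cross term, and invoke (c) for invertibility. One small quibble: for the Cauchy--Schwarz integrability check you want (c) (which gives $E[\|x_{M,m}\|^2]<\infty$) rather than (b), since (b) only bounds the first moment of $x_{M,m}$.
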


\begin{proof}\label{proof:prop:ExogenousRegression:Identification}
	For each $m\in\mathcal{M}_{N}$, premultiply the model (\ref{eq:ExogenousRegression:PopulationNetworkRegression_a}) by $x_{M, m}$ to obtain
	\begin{align*}
		x_{M, m} y_{M, m}
		= x_{M, m} x_{M, m}' \beta
			+ x_{M, m} \varepsilon_{M, m} \hspace{10mm} \forall m\in\mathcal{M}_{N}.
	\end{align*}

	Taking the expectation with respect to $\{(x_{M, m}, y_{M, m}, \varepsilon_{M, m})\}_{m\in\mathcal{M}_{N}}$ implies:
	\begin{align*}
E\Big[ x_{M, m} y_{M, m} \Big]
		&= E\Big[x_{M, m} x_{M, m}' \Big] \beta
			+ E\Big[x_{M, m} \varepsilon_{M, m}\Big].
	\end{align*}
		The second term on the right hand side is equal to 0, due to Assumption \ref{assm:ExogenousRegression:IdentificationCondition} (d). Next, Assumption \ref{assm:ExogenousRegression:IdentificationCondition} (c) ensures existence of the inverse of the expectation term in the first term of the right hand side, ensuring identification. 
\end{proof}

\subsection{Consistency of $\hat{\beta}$}\label{appndx:ProofOfMainTheorems:ExogenousLinearRegression:ConsistencyOfBetaHat}
As usual, the Central Limit Theorem for a normalized sum requires us to have stronger conditions than what is required for consistency. Those stronger conditions were introduced in the main text as Assumptions \ref{assm:ExogenousRegression:ConditionalFiniteMoment2} and \ref{assm:Kojevnikov_et_al-2021:Assm3.4}. However, they are only required for Theorem \ref{theo:ExogenousRegression:CLT}. For the consistency proof (Theorem \ref{theo:ConsistencyofBetaHat}),we can replace those two assumptions by the following weaker conditions.

\begin{assumption}\label{assm:ExogenousRegression:ConditionalFiniteMoment1}
	There exists $\eta>0$ such that $\sup_{N\geq1}\max_{m\in\mathcal{M}_{N}} E\big[ | \varepsilon_{M, m} |^{1+\eta} \mid \mathcal{C}_{M} \big] < \infty$.
\end{assumption}

Assumption \ref{assm:ExogenousRegression:ConditionalFiniteMoment1} allows for the same interpretation as Assumption \ref{assm:ExogenousRegression:ConditionalFiniteMoment2}, i.e., the random error term $\varepsilon_{m}$ cannot be too large, conditional on a common component. This assumption, however, is less stringent than the previous one because it now requires the finiteness of a lower moment of $\varepsilon_{m}$. 

\begin{assumption}\label{assm:Kojevnikov_et_al-2021:Assm3.2}
	$\frac{1}{M}\sum_{s\geq 1} \delta_{M}^{\partial}(s; 1)\theta_{M,s} \stackrel{a.s.}{\rightarrow} 0$ as $M\rightarrow \infty$.
\end{assumption}

Similar to Assumption \ref{assm:Kojevnikov_et_al-2021:Assm3.4}, this assumption binds the covariance of the random variables, the dependence reflected in the dependence coefficients, and the underlying network. That is, given $\sigma_{M}$ growing at least at the same rate of $M$, the composite of the density of the network and the magnitude of the correlations of the random variables must decay fast enough.

\medskip

\noindent
\textbf{Proof of Theorem \ref{theo:ConsistencyofBetaHat}:}
	From (\ref{eq:ExogenousRegression:OLSEstimator_2}), (\ref{eq:ExogenousRegression:DefinitonOfY}) and (\ref{eq:ExogenousRegression:ExpressionOfBetaHat}), we can write
	\begin{align*}
		\hat{\beta} - \beta
		&= \Big( \sum_{j\in\mathcal{M}_{N}^{}} x_{M, j}x_{M, j}' \Big)^{-1}
		\sum_{m\in\mathcal{M}_{N}^{}} x_{M, m} \varepsilon_{M, m}  \displaybreak[0]\\
		&= \Big( \frac{1}{M} \sum_{j\in\mathcal{M}_{N}^{}} x_{M, j}x_{M, j}' \Big)^{-1} \frac{1}{M} \sum_{m\in\mathcal{M}_{N}} Y_{M,m} \\
		&= \frac{1}{M} \sum_{m\in\mathcal{M}_{N}} \Big( \frac{1}{M} \sum_{j\in\mathcal{M}_{N}^{}} x_{M, j}x_{M, j}' \Big)^{-1} Y_{M,m}.
	\end{align*}
	Define $\tilde{Y}_{M,m} \coloneqq \Big( \frac{1}{M} \sum_{j\in\mathcal{M}_{N}^{}} x_{M, j}x_{M, j}' \Big)^{-1} Y_{M,m}$ and let $\tilde{Y}_{M,m}^{u}$ be the $u$-th entry of $\tilde{Y}_{M,m}$. That is,
	\begin{align*}
		\tilde{Y}_{M,m}^{u} 
		&= D^{u} Y_{M,m} \\
		&= D^{u} x_{M,m} \varepsilon_{M,m},
	\end{align*}
	where $D^{u}$ stands for the $u$-th row of the matrix $\big( \frac{1}{M} \sum_{j\in\mathcal{M}_{N}^{}} x_{M, j}x_{M, j}' \big)^{-1}$.
	Moreover, let $\hat{\beta}^{u}$ and $\beta^{u}$, respectively, denote the $u$-th entry of $\hat{\beta}$ and $\beta$, so that we can write
	\begin{align*}
		\hat{\beta}^{u} - \beta^{u}
		&= \frac{1}{M} \sum_{m\in\mathcal{M}_{N}} \tilde{Y}_{M,m}^{u},
	\end{align*}
	
	In light of Assumption \ref{assm:ExogenousRegression:IdentificationCondition} (d), it holds that for any $N>0$ and for each $m\in\mathcal{M}_{N}$
	\begin{align*}
		E\big[ \tilde{Y}_{M,m}^{u} \mid \mathcal{C}_{M} \big]
		&= D^{u} x_{M, m} \underbrace{E\big[ \varepsilon_{M, m} \mid \mathcal{C}_{M} \big]}_{0} \displaybreak[0]\\
		&= 0.
	\end{align*}
	By Theorem 3.1 of \citet{Kojevnikov_et_al-2021}, $\Big\| \frac{1}{M} \sum_{m\in\mathcal{M}_{N}} \Big( \tilde{Y}_{M,m}^{u} - \underbrace{E\big[ \tilde{Y}_{M,m}^{u} \mid \mathcal{C}_{M} \big]}_{0} \Big) \Big\|_{\mathcal{C}_{M},1} \stackrel{a.s.}{\rightarrow} 0$. Hence,
	\begin{align*}
		\Big\| \frac{1}{M} \sum_{m\in\mathcal{M}_{N}} \tilde{Y}_{M,m}^{u} \Big\|_{\mathcal{C}_{M},1}\stackrel{a.s.}{\rightarrow} 0  \qquad M\rightarrow \infty,
	\end{align*}
	so that
	\begin{align*}
		E\big[ \big| \hat{\beta}^{u} - \beta^{u} \big| \big]
		&= E\Big[ E\big[ |\hat{\beta}^{u} - \beta^{u} | \mid \mathcal{C}_{M}  \big] \Big] \displaybreak[0]\\
		&= E\Big[ \| \hat{\beta}^{u} - \beta^{u}\|_{\mathcal{C}_{M},1} \Big] \displaybreak[0]\\
		&= E\Big[ \Big\| \frac{1}{M} \sum_{m\in\mathcal{M}_{N}} \tilde{Y}_{M,m}^{u} \Big\|_{\mathcal{C}_{M},1} \Big] \displaybreak[0]\\
		&\rightarrow{} 0 \qquad M\rightarrow \infty,
	\end{align*}
	where the last implication is a consequence of the Dominated Convergence Theorem. In view of Assumption \ref{assm:ExogenousRegression:Denseness}, this is true also with respect to $N$ going to infinity.
	
	Since it holds by the Markov inequality that for any $c>0$
	\begin{align*}
		\mathrm{Pr}\big( | \hat{\beta}^{u} - \beta^{u} | > c\big) \leq \frac{E\big[ | \hat{\beta}^{u} - \beta^{u} | \big]}{c},
	\end{align*}
	it then follows that
	\begin{align*}
		\mathrm{Pr}\big( | \hat{\beta}^{u} - \beta^{u} | > c\big) \rightarrow 0,
	\end{align*}
	as $N\rightarrow\infty$.
	Hence we have
	\begin{align*}
		| \hat{\beta}^{u} - \beta^{u} | \stackrel{p}{\rightarrow} 0 \qquad as \quad N\rightarrow\infty.
	\end{align*}
	Finally, we can invoke the Cram\'er-Wold device to obtain
	\begin{align*}
		\| \hat{\beta} - \beta \|_{2} \stackrel{p}{\rightarrow} 0 \qquad as \quad N\rightarrow\infty,
	\end{align*}
	as desired. \hfill$\square$


\subsection{Lemma}\label{appndx:ProofOfMainTheorems:ExogenousLinearRegression:PreliminaryResultsLemma}

Here we establish a lemma that is used repeatedly throughout the subsequent proofs in this paper.

\begin{lemma}\label{lemma:WellDefinedBread}
	Define $A\coloneqq \lim_{N\rightarrow\infty} \frac{1}{M} \sum_{k\in\mathcal{M}} E\Big[x_{M,k} x_{M,k}'\Big]$ and assume that Assumptions \ref{assm:ExogenousRegression:Denseness} and \ref{assm:ExogenousRegression:BoundedSupport} hold.
	\begin{itemize}
		\item [(i)] $A^{-1}\coloneqq \lim_{N\rightarrow\infty} \Big(\frac{1}{M} \sum_{k\in\mathcal{M}} E \big[ x_{M,k} x_{M,k}' \big]\Big)^{-1}$ exists with finite elements and positive definite.
		\item[(ii)] Suppose, moreover, that Assumption \ref{assm:ExogenousRegression:Sparseness} holds. Then, $\Big\| \Big( \frac{1}{M} \sum_{k\in\mathcal{M}_{N}^{}} x_{M,k}x_{M,k}' \Big)^{-1}
				- \Big( \frac{1}{M} \sum_{k\in\mathcal{M}_{N}^{}} E\big[ x_{M,k}x_{M,k}' \big] \Big)^{-1} \Big\|_{F}
				\stackrel{p}{\rightarrow} 0$.
	\end{itemize}
\end{lemma}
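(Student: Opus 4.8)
The plan is to treat the two parts separately, with part (i) a deterministic statement about the limiting "bread" matrix and part (ii) a probabilistic convergence statement.

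For part (i), the limit $A = \lim_{N\to\infty}\frac{1}{M}\sum_{k\in\mathcal{M}}E[x_{M,k}x_{M,k}']$ exists and is positive definite by Assumption \ref{assm:ExogenousRegression:BoundedSupport}(b) directly (Assumption \ref{assm:ExogenousRegression:Denseness} is what lets us pass between $N\to\infty$ and $M\to\infty$). Since $A$ is a positive definite $K\times K$ matrix, it is invertible, its inverse is positive definite, and its entries are finite; so $A^{-1}$ exists with finite elements. The only mild subtlety is that inversion is a continuous map on the open set of positive definite matrices, so $\lim_{N\to\infty}\big(\frac{1}{M}\sum_k E[x_{M,k}x_{M,k}']\big)^{-1} = \big(\lim_{N\to\infty}\frac{1}{M}\sum_k E[x_{M,k}x_{M,k}']\big)^{-1} = A^{-1}$; this uses that the argument is eventually bounded away from singularity, which follows because its limit $A$ is positive definite. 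That finishes (i).

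For part (ii), the strategy is: (1) show $\frac{1}{M}\sum_{k\in\mathcal{M}_N}\big(x_{M,k}x_{M,k}' - E[x_{M,k}x_{M,k}']\big)\stackrel{p}{\to}0$ entrywise (equivalently in Frobenius norm), and (2) conclude for the inverses by continuity of matrix inversion. For step (1), fix entries $u,v\in\{1,\dots,K\}$ and set $W_{M,k} := x_{M,k}^u x_{M,k}^v - E[x_{M,k}^u x_{M,k}^v]$. This is a centered, conditionally (hence unconditionally) $\psi$-dependent triangular array along the dyadic network — note $x_{M,k}^u x_{M,k}^v$ is a bounded function of the dyad attributes, which are exactly the conditioning variables $\mathcal{C}_M$, so in fact $E[W_{M,k}\mid\mathcal{C}_M]$ is not zero but $W_{M,k}$ itself is $\mathcal{C}_M$-measurable; the cleanest route is therefore to apply a plain LLN for network-dependent arrays (Theorem 3.1 of \citet{Kojevnikov_et_al-2021}, unconditional version) using Assumption \ref{assm:ExogenousRegression:BoundedSupport}(a) for the moment bound and Assumption \ref{assm:ExogenousRegression:Sparseness}(a) (summability of $\delta_M^\partial(s;1)$) together with the boundedness of $\theta_{M,s}$ from Assumption \ref{assm:Kojevnikov_et_al-2021:Assm2.1}(b) to control the dependence/denseness tradeoff; the required condition $\frac{1}{M}\sum_{s\ge1}\delta_M^\partial(s;1)\theta_{M,s}\to0$ holds since the sum is bounded and is divided by $M\to\infty$. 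This gives $\frac{1}{M}\sum_k W_{M,k}\stackrel{p}{\to}0$, hence $\frac{1}{M}\sum_k x_{M,k}x_{M,k}' \stackrel{p}{\to} A$ in Frobenius norm. For step (2), since $A$ is positive definite and invertible, and inversion is continuous at $A$, the continuous mapping theorem yields $\big(\frac{1}{M}\sum_k x_{M,k}x_{M,k}'\big)^{-1}\stackrel{p}{\to}A^{-1}$; combined with part (i), which gives $\big(\frac{1}{M}\sum_k E[x_{M,k}x_{M,k}']\big)^{-1}\to A^{-1}$ (a nonrandom limit), the triangle inequality delivers $\big\|\big(\frac{1}{M}\sum_k x_{M,k}x_{M,k}'\big)^{-1} - \big(\frac{1}{M}\sum_k E[x_{M,k}x_{M,k}']\big)^{-1}\big\|_F\stackrel{p}{\to}0$, as claimed.

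The main obstacle is step (1): one must verify that the entrywise products $x_{M,k}^u x_{M,k}^v$ inherit a usable network-dependence structure and that the relevant denseness condition needed by the LLN of \citet{Kojevnikov_et_al-2021} is implied by the maintained assumptions — in particular that Assumption \ref{assm:ExogenousRegression:Sparseness}(a) plus the uniform bound on $\theta_{M,s}$ suffice to make $\frac{1}{M}\sum_{s\ge1}\delta_M^\partial(s;1)\theta_{M,s}\to0$, which is exactly the hypothesis of that LLN. Everything else (continuity of inversion, the triangle inequality, the Cramér–Wold / entrywise-to-Frobenius reduction) is routine. One should also be slightly careful about whether to invoke the conditional or unconditional LLN given that $x_{M,k}x_{M,k}'$ is $\mathcal{C}_M$-measurable; the unconditional version is the natural choice here and avoids a vacuous conditional statement.
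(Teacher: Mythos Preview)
Your overall architecture matches the paper's: split via the triangle inequality through $A^{-1}$, show $\frac{1}{M}\sum_k x_{M,k}x_{M,k}'\stackrel{p}{\to}A$ entrywise, then apply the continuous mapping theorem to pass to inverses. Part (i) is handled identically.

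The difference is in how you establish $\frac{1}{M}\sum_k x_{M,k}x_{M,k}'\stackrel{p}{\to}A$. The paper does \emph{not} invoke Theorem 3.1 of \citet{Kojevnikov_et_al-2021}; it proceeds by a direct second-moment bound. Entrywise, it writes
\[
Var\Big(\tfrac{1}{M}\sum_k x_{k,i}x_{k,j}\Big)=\tfrac{1}{M^2}\sum_{s\ge0}\sum_{m}\sum_{m'\in\mathcal{M}_N^\partial(m;s)}Cov\big(x_{m,i}x_{m,j},\,x_{m',i}x_{m',j}\big),
\]
bounds each covariance by a uniform constant $C_{0,2}$ using only Assumption \ref{assm:ExogenousRegression:BoundedSupport}(a), and gets $\le C_{0,2}\cdot\tfrac{1}{M}\sum_{s\ge0}\delta_M^\partial(s;1)\to0$ by Assumption \ref{assm:ExogenousRegression:Sparseness}(a), then applies Chebyshev. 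This elementary route sidesteps precisely the issue you flagged: since the $x$-products are $\mathcal{C}_M$-measurable, the conditional $\psi$-dependence apparatus is vacuous for them, and the coefficients $\theta_{M,s}$ you invoke from Assumption \ref{assm:Kojevnikov_et_al-2021:Assm2.1}(b) pertain to the array $Y_{M,m}=x_{M,m}\varepsilon_{M,m}$, not to $W_{M,k}$. Your argument can be patched by observing that bounded variables trivially satisfy unconditional $\psi$-dependence with constant (non-decaying) coefficients, after which the LLN hypothesis collapses to the same condition $\tfrac{1}{M}\sum_s\delta_M^\partial(s;1)\to0$; but the paper's direct variance computation is the cleaner path here and avoids importing $\theta_{M,s}$ from the wrong framework.
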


\begin{proof}
	\noindent\textbf{$(i)$} The fact that it is positive definite follows from Assumption \ref{assm:ExogenousRegression:BoundedSupport}. The fact that the elements are finite is proved by considering element-by-element convergence. Let $x_{k,i}$ denote the $i$-th element of $x_{M,k}$. Then the $(i,j)$ entry of $\frac{1}{M} \sum_{k\in\mathcal{M}} E \big[ x_{M,k} x_{M,k}' \big]$ is given by: $\frac{1}{M} \sum_{k\in\mathcal{M}} E \big[ x_{k,i} x_{k,j} \big]$.
	
	We write the $(i,j)$ entry of $A$ as $A_{i,j}$. 
	
	From Assumption \ref{assm:ExogenousRegression:BoundedSupport}, there exists a nonnegative finite constant $C_{0,1}$ such that 
	\begin{align*}
		C_{0,1} = \sup_{N\geq 1} \max_{m\in\mathcal{M}_{N}} E\big[ x_{m,i}x_{m,j} \big],
	\end{align*}
	so that
	\begin{align*}
		A_{i,j} &= \lim_{N\rightarrow\infty} \frac{1}{M} \sum_{k\in\mathcal{M}_{N}^{}} \underbrace{E\big[ x_{k,i}x_{k,j} \big]}_{\leq C_{0,1}} \\
		&\leq \lim_{N\rightarrow\infty} \frac{1}{M} \sum_{k\in\mathcal{M}_{N}^{}} C_{0,1} \\
		&= C_{0,1} \lim_{N\rightarrow\infty} \frac{1}{M} \underbrace{\sum_{k\in\mathcal{M}_{N}^{}} 1}_{M} \\
		&= C_{0,1}.
	\end{align*}
	Hence $A_{i,j}$ exists with being finite. By repeating the same argument for all $i,j=1,\ldots, K$, it holds that $A$ exists with finite elements.
	
	\noindent\textbf{$(ii)$}
	To begin with, observe that
	\begin{align*}
		&\Big\| \Big( \frac{1}{M} \sum_{k\in\mathcal{M}_{N}^{}} x_{M,k}x_{M,k}' \Big)^{-1}
		- \Big( \frac{1}{M} \sum_{k\in\mathcal{M}_{N}^{}} E\big[ x_{M,k}x_{M,k}' \big] \Big)^{-1} \Big\|_{F} \\
		&= \Big\| \Big( \frac{1}{M} \sum_{k\in\mathcal{M}_{N}^{}} x_{M,k}x_{M,k}' \Big)^{-1}
				- A^{-1} + A^{-1} - \Big( \frac{1}{M} \sum_{k\in\mathcal{M}_{N}^{}} E\big[ x_{M,k}x_{M,k}' \big] \Big)^{-1} \Big\|_{F} \\
		&\leq \Big\| \Big( \frac{1}{M} \sum_{k\in\mathcal{M}_{N}^{}} x_{M,k}x_{M,k}' \Big)^{-1}
				- A^{-1} \Big\|_{F} + \Big\| A^{-1} - \Big( \frac{1}{M} \sum_{k\in\mathcal{M}_{N}^{}} E\big[ x_{M,k}x_{M,k}' \big] \Big)^{-1} \Big\|_{F}.
	\end{align*}
Note that convergence of the second term follows from (i). Hence, we wish to prove that:
			\begin{align*}
				\Big\| \frac{1}{M} \sum_{k\in\mathcal{M}_{N}^{}} x_{M,k}x_{M,k}' - A^{} \Big\|_{F} \stackrel{p}{\rightarrow} 0
			\end{align*}
			
			\quad To do so, we follow a strategy employed in \citet{Aronow_et_al-2015} and \citet{Tabord-Meehan-2019}. 
			In light of (i), it remains to show
			\begin{align*}
				Var\Big( \frac{1}{M} \sum_{k\in\mathcal{M}_{N}^{}} x_{M,k}x_{M,k}' \Big) \rightarrow 0.
			\end{align*}
			As in (i), we consider the element-by-element convergence, using the same notation. The variance can be expressed as a sum of covariances:
			\begin{align*}
				Var\Big( \frac{1}{M} \sum_{k\in\mathcal{M}_{N}^{}} x_{k,i}x_{k,j} \Big)
				&= \frac{1}{M^{2}} \sum_{m\in\mathcal{M}_{N}} \sum_{m'\in\mathcal{M}_{N}} Cov\big(x_{m,i}x_{m,j}, x_{m',i}x_{m',j}\big) \\
				&= \frac{1}{M^{2}} \sum_{s\geq 0} \sum_{m\in\mathcal{M}_{N}} \sum_{m'\in\mathcal{M}_{N}^{\partial}(m;s)} Cov\big(x_{m,i}x_{m,j}, x_{m',i}x_{m',j}\big).
			\end{align*}
			Again from Assumption \ref{assm:ExogenousRegression:BoundedSupport}, there exists a nonnegative finite constant $C_{0,2}$ such that 
			\begin{align*}
				C_{0,2} = \sup_{N\geq 1} \max_{m,m'\in\mathcal{M}_{N}} Cov\left(x_{m,i}x_{m,j}, x_{m',i}x_{m',j}\right).
			\end{align*}
			Hence,
				\begin{align*}
				Var\Big( \frac{1}{M} \sum_{k\in\mathcal{M}_{N}^{}} x_{k,i}x_{k,j} \Big)
				&\leq \frac{1}{M^{2}} \sum_{s\geq 0} \sum_{m\in\mathcal{M}_{N}} \sum_{m'\in\mathcal{M}_{N}^{\partial}(m;s)} C_{0,2} \displaybreak[0]\\
				&= \frac{C_{0,2}}{M^{2}} \sum_{s\geq 0} \sum_{m\in\mathcal{M}_{N}} \sum_{m'\in\mathcal{M}_{N}^{\partial}(m;s)} 1 \displaybreak[0]\\
				&= \frac{C_{0,2}}{M^{2}} \sum_{s\geq 0} M \delta_{M}^{\partial}(s;1) \displaybreak[0]\\
				&= C_{0,2} \underbrace{\frac{1}{M^{}} \sum_{s\geq 0} \delta_{M}^{\partial}(s;1)}_{\rightarrow 0} \displaybreak[0]\\
				&\rightarrow 0,
			\end{align*}
			where the last implication is due to Assumption \ref{assm:ExogenousRegression:Sparseness}. By repeating the same argument for all $i,j=1,\ldots,K$, we obtain
			\begin{align*}
				Var\Big( \frac{1}{M} \sum_{k\in\mathcal{M}_{N}^{}} x_{M,k}x_{M,k}' \Big) \rightarrow 0.
			\end{align*}
			
			\quad Now, by the Chebyshev's inequality, we arrive at 
			\begin{align*}
				\Big\| \frac{1}{M} \sum_{k\in\mathcal{M}_{N}^{}} x_{M,k}x_{M,k}' - A^{} \Big\|_{F} \stackrel{p}{\rightarrow} 0.
			\end{align*}
			Furthermore, applying the Continuous Mapping Theorem yields 
			\begin{align*}
				\Big\| \Big( \frac{1}{M} \sum_{k\in\mathcal{M}_{N}^{}} x_{M,k}x_{M,k}' \Big)^{-1} - A^{-1} \Big\|_{F} \stackrel{p}{\rightarrow} 0,
			\end{align*}
			obtaining the result. Therefore,
	\begin{align*}
		\Big\| \Big( \frac{1}{M} \sum_{k\in\mathcal{M}_{N}^{}} x_{M,k}x_{M,k}' \Big)^{-1}
		- \Big( \frac{1}{M} \sum_{k\in\mathcal{M}_{N}^{}} E\big[ x_{M,k}x_{M,k}' \big] \Big)^{-1} \Big\|_{F}
		 \stackrel{p}{\rightarrow} 0,
	\end{align*}
	as desired. 
\end{proof}


\subsection{Asymptotic Normality of $\hat{\beta}$}\label{appndx:ProofOfMainTheorems:ExogenousLinearRegression:AsymptoticNormalityOfBetaHat}

In this subsection, we prove Theorem \ref{theo:ExogenousRegression:CLT} under a slightly milder condition than Assumption \ref{assm:AsymptRateOfVariance}.

\begin{assumption}[Growth Rates of Variances]\label{assm:AsymptRateOfVariance2}
	There exists a sequence of (possibly random) positive numbers, $\{\pi_{N,M}\}_{N>0}$, such that
	\begin{align*}
		\frac{\sigma_{M}^{2}}{\pi_{N,M}\tau_{M}^{2}} \stackrel{a.s.}{\rightarrow} 1 \quad\quad \text{as}\quad N\rightarrow\infty.
	\end{align*}
\end{assumption}

When $\pi_{N,M}=1$, this assumption simplifies to Assumption \ref{assm:AsymptRateOfVariance}, which is used for the results in the main text.

For our proof of the asymptotic distribution of $\hat \beta$, we require that its asymptotic variance is well-defined. The first assumption, Assumption \ref{assm:ExogenousRegression:BoundedSupport}(a)-(b), is necessary for one of the matrices in the expression to be well-defined.\footnote{As pointed out in \citet{Tabord-Meehan-2019}, the bounded support assumption can be relaxed by imposing an alternative condition on higher-order moments (boundedness of the 16th order moment of $x_{M, m}$, in our case).} Part (c) assures that the middle part of the asymptotic variance is non-trivial.

When Assumption \ref{assm:AsymptRateOfVariance} is replaced by Assumption \ref{assm:AsymptRateOfVariance2}, Assumption \ref{assm:ExogenousRegression:BoundedSupport} must also be modified accordingly.

\begin{assumption}\label{assm:WellDefinedVariance2}
	$\lim_{N\rightarrow\infty} \frac{N \pi_{N,M}}{M^{2}} \sum_{m\in\mathcal{M}_{N}^{}} \sum_{m'\in\mathcal{M}_{N}^{}} E\big[ \varepsilon_{M, m}\varepsilon_{M, m'} x_{M, m}x_{M, m'}' \big]$ exists with finite elements.
\end{assumption}

An important comparison of Assumption \ref{assm:WellDefinedVariance2} can be made to the variety of assumptions used in the literature.

\begin{remark}\label{rmk:ExpressionOfAsymptoticVariance}
	The requirement on the behavior of $AVar(\hat{\beta})$ mirrors Assumptions 2.4, 2.5 and 2.6 of \citet{Tabord-Meehan-2019}: Assumption \ref{assm:WellDefinedVariance2} boils down to his Assumption 2.4, if it is well-defined with $\pi_{N,M}=\frac{M}{N}$; it reduces to his Assumption 2.5, if it is compatible with $\pi_{N,M}=\frac{M}{N^{2}}$; and it coincides with Assumption 2.6, if it is maintained with $\pi_{N,M}=\frac{M}{N^{r+1}}$ for $r\in[0,1]$. Moreover, if $AVar(\hat{\beta})$ is well-defined for $\pi_{N,M}=1$, the expression (\ref{eq:ExogenousRegression:AsymptoticVariance}) simplifies to the assumption that appears in Lemma 1 of \citet{Aronow_et_al-2015}. 
\end{remark}

\noindent 
\textbf{Proof of Theorem \ref{theo:ExogenousRegression:CLT}:}
	From \eqref{eq:ExogenousRegression:ExpressionOfBetaHat},
	\begin{align*}
		\sqrt{N} (\hat{\beta} - \beta)
		&= \Big( \frac{1}{M} \sum_{j\in\mathcal{M}_{N}^{}} x_{M, j}x_{M, j}' \Big)^{-1}
			\frac{\sqrt{N}}{M} \sum_{m\in\mathcal{M}_{N}^{}} Y_{M, m} \\
		&= \Big( \frac{1}{M} \sum_{j\in\mathcal{M}_{N}^{}} x_{M, j}x_{M, j}' \Big)^{-1}
			\frac{\sqrt{N}}{M} S_{M}.
	\end{align*}
	Since $\big( \frac{1}{M} \sum_{j\in\mathcal{M}_{N}^{}} x_{M, j}x_{M, j}' \big)^{-1}$ converges to a well-defined limit (Lemma \ref{lemma:WellDefinedBread}), the asymptotic distribution of $\sqrt{N} (\hat{\beta} - \beta)$ is dictated by that of $\frac{\sqrt{N}}{M} S_{M}$.
	
	First of all, we prove 
	\begin{align*}
		\frac{S_{M}^{u}}{\sigma_{M}^{}} \stackrel{d}{\rightarrow} \mathcal{N}(0,1),
	\end{align*}
	as $N\rightarrow\infty$. 
Consider the scenario that  $N\rightarrow\infty$, in which Assumption \ref{assm:ExogenousRegression:Denseness} implies $M\rightarrow\infty$.
	Denote $\tilde{S}_{M}^{u}\coloneqq \frac{S_{M}^{u}}{\sigma_{M}}$.
	Let $X$ be the $M\times K$ matrix that records the observed dyad-specific characteristics as defined in Section \ref{subsubsec:Set-upAndIdentification}, but here the subscript $M$ is omitted for notational simplicity. The value that $X$ takes is denoted by $x$.
	
	Under Assumptions \ref{assm:ExogenousRegression:ConditionalFiniteMoment2} and \ref{assm:Kojevnikov_et_al-2021:Assm3.4}, it holds by Theorem 3.2 of \citet{Kojevnikov_et_al-2021} that for any $\epsilon>0$, there exists $M_{0}>0$ such that for each $M>M_{0}$ and for each $x\in\mathbbm{R}^{M\times K}$,
	\begin{align}\label{eq:Appendix:AsymptoticNormalityOfBetaHat0}
		\sup_{t\in\mathbbm{R}} \big| \mathrm{Pr}( \tilde{S}_{M}^{u}\leq t \mid X=x ) - \Phi(t) \big| < \epsilon,
	\end{align}
	where $\Phi(\cdot)$ is the CDF of a standard Normal distribution. Then, by the law of total probability, we have
	\begin{align}\label{eq:Appendix:AsymptoticNormalityOfBetaHat1}
		\big| \mathrm{Pr}( \tilde{S}_{M}^{u}\leq t ) - \Phi(t) \big|
		&= \Big| \int \mathrm{Pr}( \tilde{S}_{M}^{u}\leq t \mid X=x) dF_{X}(x) - \Phi(t) \Big| \displaybreak[0]\nonumber\\
		&= \Big| \int \mathrm{Pr}( \tilde{S}_{M}^{u}\leq t \mid X=x) - \Phi(t) dF_{X}(x) \Big| \displaybreak[0]\nonumber\\
		&\leq \int \big| \mathrm{Pr}( \tilde{S}_{M}^{u}\leq t \mid X=x) - \Phi(t) \big| dF_{X}(x) \displaybreak[0]\nonumber\\
		&\leq \int \sup_{t\in\mathbbm{R}} \big| \mathrm{Pr}( \tilde{S}_{M}^{u}\leq t \mid X=x) - \Phi(t) \big| dF_{X}(x),
	\end{align}
	where $F_{X}(\cdot)$ denotes the probability distribution function of $X$.
	Now pick arbitrarily $\epsilon>0$.
	Then there exists $M_{0}>0$ such that for each $M>M_{0}$
	\begin{align}\label{eq:Appendix:AsymptoticNormalityOfBetaHat2}
		\big| \mathrm{Pr}( \tilde{S}_{M}^{u}\leq t ) - \Phi(t) \big|
		&\leq \int \underbrace{\sup_{t\in\mathbbm{R}} \big| \mathrm{Pr}( \tilde{S}_{M}^{u}\leq t \mid X=x) - \Phi(t) \big|}_{< \epsilon} dF_{X}(x) \displaybreak[0]\nonumber\\
		&\leq \int  \epsilon dF_{X}(x) \displaybreak[0]\nonumber\\
		&\leq \epsilon,
	\end{align}
	where the first and second inequalities come from \eqref{eq:Appendix:AsymptoticNormalityOfBetaHat1} and \eqref{eq:Appendix:AsymptoticNormalityOfBetaHat0}, respectively.
	Since the right hand side of \eqref{eq:Appendix:AsymptoticNormalityOfBetaHat2} does not depend on $t$, we then have that for each $M>M_{0}$,
	\begin{align*}
		\sup_{t\in \mathbbm{R}} \big| \mathrm{Pr}( \tilde{S}_{M}^{u}\leq t ) - \Phi(t) \big|
		&\leq \epsilon,
	\end{align*}
	which implies
	\begin{align*}
		\sup_{t\in \mathbbm{R}} \big| \mathrm{Pr}( \tilde{S}_{M}^{u}\leq t ) - \Phi(t) \big|
		&\rightarrow 0 \qquad as \quad M\rightarrow\infty,
	\end{align*}
	We have then shown that
	\begin{align*}
		\sup_{t\in \mathbbm{R}} \big| \mathrm{Pr}( \tilde{S}_{M}^{u}\leq t ) - \Phi(t) \big|
		&\rightarrow 0 \qquad as \quad N\rightarrow\infty,
	\end{align*}
	from which we obtain
	\begin{align*}
		\frac{S_{M}^{u}}{\sigma_{M}^{}} \stackrel{d}{\rightarrow} \mathcal{N}(0,1) \qquad as \quad N\rightarrow \infty. 
	\end{align*}
	
	Next this can be combined with Assumption \ref{assm:AsymptRateOfVariance2} by using the Slutsky's Theorem, yielding that
	\begin{align*}
		\frac{S_{M}^{u}}{\tau_{M}\sqrt{\pi_{N,M}}} \stackrel{d}{\rightarrow} \mathcal{N}(0,1) \qquad as \quad N\rightarrow\infty.
	\end{align*}
	Moreover, applying the Cram\'er-Wold device gives
	\begin{align*}
		\frac{\tau_{M}^{-1}}{\sqrt{\pi_{N,M}}} S_{M} \stackrel{d}{\rightarrow} \mathcal{N}(0,I_{K}) \qquad as \quad N\rightarrow\infty,
	\end{align*}
	where $I_{K}$ is the $K\times K$ identity matrix and $\tau_{M}$ is understood as the variance-covariance matrix.\footnote{To save notation, we use the same $\tau_{M}$ to denote the case of one-dimensional parameter and the case of multiple-dimensional parameters.}
	
	Now notice that we have
	\begin{align*}
		\sqrt{N} (\hat{\beta} - \beta)
		&= \Big( \frac{1}{M} \sum_{j\in\mathcal{M}_{N}^{}} x_{M, j}x_{M, j}' \Big)^{-1}
			\frac{\sqrt{N}}{M} \tau_{M} \sqrt{\pi_{N,M}} \underbrace{\frac{\tau_{M}^{-1}}{\sqrt{\pi_{N,M}}} S_{M, m}}_{\stackrel{d}{\rightarrow} \mathcal{N}(0,I_{K})}.
	\end{align*}
	Hence we obtain
	\begin{align*}
		\sqrt{N} \big( \hat{\beta}-\beta \big) \stackrel{d}{\rightarrow} \mathcal{N}(0, AVar(\hat{\beta})) \qquad as \quad N\rightarrow\infty,
	\end{align*}
	where
	{\footnotesize
	\begin{align*}
		AVar(\hat{\beta})
		\coloneqq \lim_{N\rightarrow\infty} N\pi_{N,M}^{} \Big( \sum_{k\in\mathcal{M}_{N}^{}} E\big[ x_{M, k}x_{M, k}' \big] \Big)^{-1}
				\Big( \sum_{m\in\mathcal{M}_{N}^{}} \sum_{m'\in\mathcal{M}_{N}^{}} E\big[ \varepsilon_{M, m}\varepsilon_{M, m'} x_{M, m}x_{M, m'}' \big] \Big)
				\Big( \sum_{k\in\mathcal{M}_{N}^{}} E\big[ x_{M, k}x_{M, k}' \big] \Big)^{-1},
	\end{align*}}
	which is well-defined due to Lemma \ref{lemma:WellDefinedBread} (i) along with Assumption \ref{assm:WellDefinedVariance2}. When $\pi_{N,M}=1$, this is the result in the main text. \hfill$\square$ 

\subsection{Lemma}

In the proof of Theorem \ref{thm:ExogenousRegression:ConsistencyOfTheNetwork-HACVarianceEstimator}, we make use of the following lemma from \citet{Kojevnikov_et_al-2021}, p.903:
	
\begin{lemma}\label{lemma:Kojevnikov_et_al-2021:p903}
	Define
	\begin{align*}
		H_{M}(s,r) \coloneqq \{ (m,j,k,l)\in\mathcal{M}_{N}^{4}: j\in\mathcal{M}_{N}(m; r),\ l\in\mathcal{M}_{N}(k; r),\ \rho_{M}(\{m,j\}, \{k,l\}) = s \}.
	\end{align*}
	Then
	\begin{align*}
		| H_{M}(s, r) | \leq 4 M c_{M}(s, r; 2).
	\end{align*}
\end{lemma}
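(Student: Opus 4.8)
The plan is to treat the statement as a purely combinatorial bound on the number of ordered quadruples and to route it through a case split followed by H\"older's inequality, structured so as to reproduce exactly the two density measures $\Delta_M$ and $\delta_M^\partial$ out of which $c_M(s,r;2)$ is built. Throughout, write $E_s\coloneqq\{(m,k)\in\mathcal{M}_N^2:\rho_M(m,k)=s\}$ and, for each $m$, introduce the effective degree $g(m)\coloneqq\max_{m'\in\mathcal{M}_N^\partial(m;s)}|\mathcal{M}_N(m;r)\backslash\mathcal{M}_N(m';s-1)|$, which is precisely the quantity whose $2\alpha$-th power is averaged in $\Delta_M(s,r;2\alpha)$.

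First I would decompose $H_M(s,r)$ according to which of the four cross pairs realizes the set distance. Since every quadruple in $H_M(s,r)$ satisfies $\min\{\rho_M(m,k),\rho_M(m,l),\rho_M(j,k),\rho_M(j,l)\}=s$, at least one of these four distances equals $s$ while all four are $\ge s$; a union bound over the four possibilities bounds $|H_M(s,r)|$ by the sum of the four corresponding counts. The constraints defining $H_M$ are symmetric under $m\leftrightarrow j$, under $k\leftrightarrow l$, and under swapping the two pairs (each using only that $\rho_M$ is symmetric and that $j\in\mathcal{M}_N(m;r)\iff m\in\mathcal{M}_N(j;r)$), so relabeling the summation variables shows the four counts are equal; this produces the factor $4$. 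It then suffices to bound the representative case $\rho_M(m,k)=s$. Here $k\in\mathcal{M}_N^\partial(m;s)$ and, symmetrically, $m\in\mathcal{M}_N^\partial(k;s)$. Keeping only the inherited constraints $\rho_M(j,k)\ge s$ and $\rho_M(l,m)\ge s$ (dropping the remaining ones, which only enlarges the count), the admissible $j$ lie in $\mathcal{M}_N(m;r)\backslash\mathcal{M}_N(k;s-1)$ and the admissible $l$ lie in $\mathcal{M}_N(k;r)\backslash\mathcal{M}_N(m;s-1)$; these choices are independent once $m,k$ are fixed, so their number is at most $g(m)\,g(k)$, and summing over $(m,k)\in E_s$ gives the case count $\sum_{(m,k)\in E_s} g(m)g(k)$.

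The crux, and the step I expect to be the main obstacle, is that the two factors $g(m)$ and $g(k)$ are centered at \emph{different} vertices, so $\sum_{(m,k)\in E_s} g(m)g(k)$ is not immediately an average of a single $m$-indexed quantity of the form demanded by $\Delta_M$ and $\delta_M^\partial$. I would resolve this using the symmetry of $E_s$: by the AM-GM inequality $g(m)g(k)\le\tfrac12(g(m)^2+g(k)^2)$, and since $(m,k)\in E_s\iff(k,m)\in E_s$ the two resulting sums coincide after relabeling, yielding $\sum_{(m,k)\in E_s}g(m)g(k)\le\sum_m g(m)^2|\mathcal{M}_N^\partial(m;s)|$.

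Finally I would apply H\"older's inequality to $\tfrac1M\sum_m g(m)^2|\mathcal{M}_N^\partial(m;s)|$ with conjugate exponents $\alpha$ and $\tfrac{\alpha}{\alpha-1}$, recognizing $\tfrac1M\sum_m g(m)^{2\alpha}=\Delta_M(s,r;2\alpha)$ and $\tfrac1M\sum_m|\mathcal{M}_N^\partial(m;s)|^{\alpha/(\alpha-1)}=\delta_M^\partial(s;\tfrac{\alpha}{\alpha-1})$, so that the average is at most $\Delta_M(s,r;2\alpha)^{1/\alpha}\delta_M^\partial(s;\tfrac{\alpha}{\alpha-1})^{(\alpha-1)/\alpha}$; taking the infimum over $\alpha>1$ collapses this to $c_M(s,r;2)$ by definition. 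Multiplying back by $M$ and by the factor $4$ from the case split delivers $|H_M(s,r)|\le 4Mc_M(s,r;2)$, as claimed. The only points requiring care beyond bookkeeping are verifying that all four cross-pair cases are genuine symmetries of the constraint set and checking that the H\"older exponents $2\alpha$ and $\tfrac{\alpha}{\alpha-1}$ line up with the exponents appearing in the definition of $c_M$.
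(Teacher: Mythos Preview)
The paper does not actually prove this lemma; it simply states it and attributes it to \citet{Kojevnikov_et_al-2021}, p.~903. So there is no in-paper proof to compare against.

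Your argument is correct and self-contained. The four-way case split on which cross-pair realizes the minimum distance, combined with the observation that the constraint set is invariant under $m\leftrightarrow j$, $k\leftrightarrow l$, and swapping the two pairs, legitimately produces the factor $4$. In the representative case $\rho_M(m,k)=s$, your bound on the number of admissible $j$ by $|\mathcal{M}_N(m;r)\setminus\mathcal{M}_N(k;s-1)|\le g(m)$ (and symmetrically for $l$) is exactly what $\Delta_M$ is designed to capture, since $k\in\mathcal{M}_N^\partial(m;s)$. The AM-GM symmetrization step to pass from $\sum_{(m,k)\in E_s}g(m)g(k)$ to $\sum_m g(m)^2|\mathcal{M}_N^\partial(m;s)|$ is clean and uses only the symmetry of $E_s$. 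Finally, H\"older with exponents $\alpha$ and $\alpha/(\alpha-1)$ reproduces precisely the expression inside the infimum defining $c_M(s,r;2)$, and since the bound holds for every $\alpha>1$ it holds for the infimum. Each step checks out.
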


\subsection{Consistency of $\widehat{Var}(\hat{\beta})$}\label{appndx:subsec:ProofOfMainTheorems:ExogenousLinearRegression:ConsistencyOfVHat}

\textbf{Proof of Theorem \ref{thm:ExogenousRegression:ConsistencyOfTheNetwork-HACVarianceEstimator}:}
	Denote the variance of $\frac{S_{M}}{\sqrt{M}}$ as $V_{N,M}\coloneqq Var\big( \frac{S_{M}}{\sqrt{M}} \big)$. It can readily be shown that $V_{N,M}$ takes the form of $V_{N,M} = \sum_{s\geq 0} \Omega_{N,M}(s)$, where 
	\begin{align*}
		\Omega_{N,M}(s) \coloneqq \frac{1}{M} \sum_{m\in \mathcal{M}_{N}} \sum_{j\in \mathcal{M}_{N}^{\partial}(m;s)} E\big[ Y_{M,m} Y_{M,j}' \big].
	\end{align*}
	Following \citet{Kojevnikov_et_al-2021}, we define the kernel heteroskedasticity and autocorrelation consistent (HAC) estimator of $V_{N,M}$ as $\hat{V}_{N,M} \coloneqq \sum_{s\geq 0} \omega_{M}(s) \hat{\Omega}_{N,M}(s)$,
	where $\omega_{M}(s) \coloneqq \omega\big(\frac{s}{b_{M}}\big)$ and 
	\begin{align*}
		\hat{\Omega}_{N,M}(s) \coloneqq \frac{1}{M} \sum_{m\in \mathcal{M}_{N}} \sum_{j\in \mathcal{M}_{N}^{\partial}(m;s)} \hat{Y}_{M,m} \hat{Y}_{M,j}'.
	\end{align*}
	Moreover, we define an empirical analogue of $V_{N,M}$, though infeasible, by $\tilde{V}_{N,M} \coloneqq \sum_{s\geq 0} \omega_{M}(s) \tilde{\Omega}_{N,M}(s)$, where 
	\begin{align*}
		\tilde{\Omega}_{N,M}(s) \coloneqq \frac{1}{M} \sum_{m\in \mathcal{M}_{N}} \sum_{j\in \mathcal{M}_{N}^{\partial}(m;s)} Y_{M,m} Y_{M,j}^{'}.
	\end{align*}
	Additionally, we denote a conditional version of $V_{N,M}$ by $V_{N,M}^{c}\coloneqq Var\big(\frac{S_{M}}{\sqrt{M}} \mid \mathcal{C}_{M}\big)$, i.e., $V_{N,M}^{c} = \sum_{s\geq 0} \Omega_{N,M}^{c}(s)$, where 
	\begin{align*}
		\Omega_{N,M}^{c}(s) \coloneqq \frac{1}{M} \sum_{m\in \mathcal{M}_{N}} \sum_{j\in \mathcal{M}_{N}^{\partial}(m;s)} E\big[ Y_{M,m} Y_{M,j}^{'} \mid \mathcal{C}_{M} \big].
	\end{align*}
	Notice that since $E\big[Y_{M,m}\mid \mathcal{C}_{M} \big] = 0$ $a.s.$, it follows from the law of total variance that $V_{N,M} = E\big[ V_{N,M}^{c} \big]$.
	Notice furthermore that it holds that 
	\begin{align*}
		Var(\hat{\beta}) 
		= \frac{N}{M} \Big( \frac{1}{M} \sum_{k\in\mathcal{M}_{N}^{}} E\big[ x_{M, k}x_{M, k}' \big] \Big)^{-1} 
			V_{N,M}
			\Big( \frac{1}{M} \sum_{k\in\mathcal{M}_{N}^{}} E\big[ x_{M, k}x_{M, k}' \big] \Big)^{-1},
	\end{align*}
	and
	\begin{align*}
		\widehat{Var}(\hat{\beta}) 
		= \frac{1}{M} \Big( \frac{1}{M} \sum_{k\in\mathcal{M}_{N}^{}} x_{M, k}x_{M, k}' \Big)^{-1}
			\hat{V}_{N,M} 
			\Big( \frac{1}{M} \sum_{k\in\mathcal{M}_{N}^{}} x_{M, k}x_{M, k}' \Big)^{-1}.
	\end{align*}
	Since
	\begin{align*}
		\| N \widehat{Var}(\hat{\beta}) - Var(\hat{\beta}) \|_{F} 
		&= \frac{N}{M} 
			\Bigg\| \Big( \frac{1}{M} \sum_{k\in\mathcal{M}_{N}^{}} x_{M, k}x_{M, k}' \Big)^{-1}
				\hat{V}_{N,M} 
				\Big( \frac{1}{M} \sum_{k\in\mathcal{M}_{N}^{}} x_{M, k}x_{M, k}' \Big)^{-1} \displaybreak[0]\\
		&\phantom{=\frac{N}{M} \Bigg\| }	- \Big( \frac{1}{M} \sum_{k\in\mathcal{M}_{N}^{}} E\big[ x_{M, k}x_{M, k}' \big] \Big)^{-1} 
					V_{N,M}
					\Big( \frac{1}{M} \sum_{k\in\mathcal{M}_{N}^{}} E\big[ x_{M, k}x_{M, k}' \big] \Big)^{-1}
			\Bigg\|_{F}, 
	\end{align*}
	and $\frac{N}{M}$ is bounded due to Assumption \ref{assm:ExogenousRegression:Denseness}, it thus suffices to show that 
	\begin{itemize}
		\item[(i)] $\Big\| \big( \frac{1}{M} \sum_{k\in\mathcal{M}_{N}^{}} x_{M, k}x_{M, k}' \big)^{-1} - \big( \frac{1}{M} \sum_{k\in\mathcal{M}_{N}^{}} E\big[ x_{M, k}x_{M, k}' \big] \big)^{-1} \Big\|_{F} \stackrel{p}{\rightarrow} 0$;
		\item[(ii)] $\| \hat{V}_{N,M} - V_{N,M} \|_{F} \stackrel{p}{\rightarrow} 0$.
	\end{itemize}
	Part (i) is already shown in Lemma \ref{lemma:WellDefinedBread} (ii). Hence, it remains to prove Part (ii).
	
	To begin with, observe that by the technique of add and subtract as well as the triangular inequality,
	\begin{align*}
		\| \hat{V}_{N,M} - V_{N,M} \|_{F} 
		&= \| \hat{V}_{N,M} - \tilde{V}_{N,M} + \tilde{V}_{N,M} - V^{c}_{N,M} + V^{c}_{N,M} - V_{N,M} \|_{F} \displaybreak[0]\\
		&\leq \| \hat{V}_{N,M} - \tilde{V}_{N,M} \|_{F} + \| \tilde{V}_{N,M} - V^{c}_{N,M} \|_{F} + \| V^{c}_{N,M} - V_{N,M} \|_{F}.
	\end{align*}
	We thus aim to prove
	\begin{itemize}
		\item[(1)] $\| V^{c}_{N,M} - V_{N,M} \|_{F} \stackrel{p}{\rightarrow} 0$;
		\item[(2)] $\| \tilde{V}_{N,M} - V^{c}_{N,M} \|_{F} \stackrel{p}{\rightarrow} 0$;
		\item[(3)] $\| \hat{V}_{N,M} - \tilde{V}_{N,M} \|_{F} \stackrel{p}{\rightarrow} 0$.
	\end{itemize}
	We start with:
	\begin{itemize}
		\item[(1)] $\| V^{c}_{N,M} - V_{N,M} \|_{F} \stackrel{p}{\rightarrow} 0$:\\
			The proof proceeds in multiple steps:
			\begin{itemize}
				\item[(a)] $E\big[ \big\| V_{N,M}^{c} - V_{N,M} \big\|_{F}^{2} \big] \rightarrow 0$;
				\item[(b)] $\big\| V_{N,M}^{c} - V_{N,M} \big\|_{F} \stackrel{p}{\rightarrow} 0$.
			\end{itemize}
We begin with:
			\begin{itemize}
				\item[(a)] $E\big[ \big\| V_{N,M}^{c} - V_{N,M} \big\|_{F}^{2} \big] \rightarrow 0$:\\
					We prove this by showing the element-wise convergence. With a slight abuse of notation, we denote the $(a,b)$ entry of $V_{N,M}^{c}$ and $V_{N,M}$ as $V_{a,b}^{c}$ and $V_{a,b}$, respectively. Then it is enough to verify that 
					\begin{align*}
						E\big[ (V_{a,b}^{c} - V_{a,b})^{2} \big] \rightarrow 0.
					\end{align*} 
					Notice that $V_{a,b}^{c}$ and $V_{a,b}$ are given by 
					\begin{align*}
						V_{a,b}^{c} = \sum_{s\geq0} \frac{1}{M} \sum_{m\in \mathcal{M}_{N}} \sum_{j\in \mathcal{M}_{N}^{\partial}(m;s)} E\big[ Y_{m,a} Y_{j,b} \mid \mathcal{C}_{M} \big]
					\end{align*}
					and
					\begin{align*}
						V_{a,b}^{} = \sum_{s\geq0} \frac{1}{M} \sum_{m\in \mathcal{M}_{N}} \sum_{j\in \mathcal{M}_{N}^{\partial}(m;s)} E\big[ Y_{m,a} Y_{j,b} \big],
					\end{align*}
					where $Y_{m,a}$ and $Y_{m,b}$ stand for the $a$-th and $b$-th element of $Y_{M,m}$, respectively.
					Note moreover that $E[V_{a,b}^{c}]=V_{a,b}^{}$.
					Hence, we can write
					\begin{align*}
						E\big[ (V_{a,b}^{c} - V_{a,b})^{2} \big]
						&= Var(V_{a,b}^{c}) \displaybreak[0]\\
						&= E\big[ (V_{a,b}^{c})^{2} \big] - \big( V_{a,b} \big)^{2} \displaybreak[0]\\
						&\leq E\big[ (V_{a,b}^{c})^{2} \big].
					\end{align*}
					Observe that
					\begin{align*}
						E\big[ (V_{a,b}^{c})^{2} \big]
						&= E\Big[ \Big(\sum_{s\geq0} \frac{1}{M} \sum_{m\in \mathcal{M}_{N}} \sum_{j\in \mathcal{M}_{N}^{\partial}(m;s)} E\big[ Y_{m,a} Y_{j,b} \mid \mathcal{C}_{M} \big] \Big)^{2} \Big] \displaybreak[0]\\
						&= E\Big[ \frac{1}{M^{2}} \sum_{s\geq0} \sum_{m\in \mathcal{M}_{N}} \sum_{j\in \mathcal{M}_{N}^{\partial}(m;s)} \sum_{t\geq0} \sum_{k\in \mathcal{M}_{N}} \sum_{l\in \mathcal{M}_{N}^{\partial}(k;t)} E\big[ Y_{m,a} Y_{j,b} \mid \mathcal{C}_{M} \big] E\big[ Y_{k,a} Y_{l,b} \mid \mathcal{C}_{M} \big] \Big] \displaybreak[0]\\
						&= \frac{1}{M^{2}} \sum_{s\geq0} \sum_{m\in \mathcal{M}_{N}} \sum_{j\in \mathcal{M}_{N}^{\partial}(m;s)} \sum_{t\geq0} \sum_{k\in \mathcal{M}_{N}} \sum_{l\in \mathcal{M}_{N}^{\partial}(k;t)} E\Big[ E\big[ Y_{m,a} Y_{j,b} \mid \mathcal{C}_{M} \big] E\big[ Y_{k,a} Y_{l,b} \mid \mathcal{C}_{M} \big] \Big].
					\end{align*}
					
					\quad By the Cauchy-Schwartz inequality,
					\begin{align*}
						E\big[ \varepsilon_{m} \varepsilon_{j} \mid \mathcal{C}_{M} \big] 
						\leq \big( E\big[ \varepsilon_{m}^{2} \mid \mathcal{C}_{M} \big] \big)^{\frac{1}{2}}
							\big( E\big[ \varepsilon_{m}^{2} \mid \mathcal{C}_{M} \big] \big)^{\frac{1}{2}} ,
					\end{align*}
					it then follows from from Assumption \ref{assm:Kojevnikov_et_al-2021:Assm4.1} (a) that there exists an a.s.-bounded function $\bar{C}_{1}$ such that $E\big[ \varepsilon_{m} \varepsilon_{j} \mid \mathcal{C}_{M} \big] \leq \bar{C}_{1}\ a.s.$ Similarly, we have an a.s.-bounded function $\bar{C}_{2}$ such that $E\big[ \varepsilon_{k} \varepsilon_{l} \mid \mathcal{C}_{M} \big] \leq \bar{C}_{2}\ a.s.$ Then,
					\begin{align*}
						E\big[ Y_{m,a} Y_{j,b} \mid \mathcal{C}_{M} \big]
						&= E\big[ \varepsilon_{m}\varepsilon_{j} x_{m,a}x_{j,b} \mid \mathcal{C}_{M} \big] \displaybreak[0]\\
						&= x_{m,a}x_{j,b} \underbrace{E\big[ \varepsilon_{m}\varepsilon_{j} \mid \mathcal{C}_{M} \big]}_{\leq\bar{C}_{1}} \displaybreak[0]\\
						&\leq x_{m,a}x_{j,b}\bar{C}_{1} \quad a.s.,
					\end{align*}
					where $x_{m,a}$ represents the $a$-th element of $x_{M,m}$ and $x_{j,b}$ the $b$-th element of $x_{M,j}$.
					Analogously, one obtains $E\big[ Y_{k,a} Y_{l,b} \mid \mathcal{C}_{M} \big]\leq x_{k,a}x_{l,b} \bar{C}_{2}\ a.s.$ 
					Once again, through the multiple application of the Cauchy-Schwartz inequality, it follows that 
					\begin{align*}
						E\big[ E\big[ Y_{m,a} Y_{j,b} \mid \mathcal{C}_{M} \big] E\big[ Y_{k,a} Y_{l,b} \mid \mathcal{C}_{M} \big] \big]
						&\leq E\big[ \bar{C}_{1}\bar{C}_{2} x_{m,a}x_{j,b}x_{k,a}x_{l,b} \big] \displaybreak[0]\\
						&\leq \big( E\big[ (\bar{C}_{1}\bar{C}_{2})^{2}  \big] \big)^{\frac{1}{2}} \big( E\big[ (x_{m,a}x_{j,b}x_{k,a}x_{l,b})^{2} \big] \big)^{\frac{1}{2}} \displaybreak[0]\\
						&\leq \big( E\big[ (\bar{C}_{1})^{4}  \big] \big)^{\frac{1}{4}} \big( E\big[ (\bar{C}_{2})^{4}  \big] \big)^{\frac{1}{4}} \displaybreak[0]\\
						&\phantom{\leq\quad} \times \big( E\big[ x_{m,a}^{8} \big] \big)^{\frac{1}{8}} \big( E\big[ x_{j,b}^{8} \big] \big)^{\frac{1}{8}} 
							 \big( E\big[ x_{l,a}^{8} \big] \big)^{\frac{1}{8}} \big( E\big[ x_{l,b}^{8} \big] \big)^{\frac{1}{8}}.
					\end{align*}
					We note here that Assumption \ref{assm:ExogenousRegression:BoundedSupport} ensures that there exists a nonnegative finite constant $C_{m,a}$ such that $E\big[ x_{l,a}^{8} \big]<C_{m,a}$, with the same argument holding true for $x_{j,b}$, $x_{k,a}$ and $x_{l,b}$ as well. Hence,
					\begin{align*}
						E\big[ E\big[ Y_{m,a} Y_{j,b} \mid \mathcal{C}_{M} \big] E\big[ Y_{k,a} Y_{l,b} \mid \mathcal{C}_{M} \big] \big]
						&\leq \bar{C}_{},
					\end{align*}
					where $\bar{C}_{}$ is a nonnegative finite constant that is appropriately defined.
					
					\quad Substituting this into the inequality above, 
					\begin{align*}
						E\big[ (V_{a,b}^{c})^{2} \big]
						&\leq \frac{1}{M^{2}} \sum_{s\geq0} \sum_{m\in \mathcal{M}_{N}} \sum_{j\in \mathcal{M}_{N}^{\partial}(m;s)} \sum_{t\geq0} \sum_{k\in \mathcal{M}_{N}} \sum_{l\in \mathcal{M}_{N}^{\partial}(k;t)} \bar{C} \displaybreak[0]\\
						&= \frac{\bar{C}}{M^{2}} \sum_{s\geq0} \sum_{m\in \mathcal{M}_{N}} \sum_{j\in \mathcal{M}_{N}^{\partial}(m;s)} \sum_{t\geq0} \sum_{k\in \mathcal{M}_{N}} \sum_{l\in \mathcal{M}_{N}^{\partial}(k;t)} 1 \displaybreak[0]\\
						&= \frac{\bar{C}}{M^{2}} \sum_{s\geq0} \sum_{(m,j,k,l)\in H_{M}(s;b_{M})} 1 \displaybreak[0]\\
						&= \frac{\bar{C}}{M^{2}} \sum_{s\geq0} \underbrace{|H_{M}(s;b_{M})|}_{\leq 4 M c_{M}(s,b_{M}; 2)} \displaybreak[0]\\
						&\leq \frac{\bar{C}}{M^{2}} \sum_{s\geq0} 4 M c_{M}(s,b_{M}; 2) \displaybreak[0]\\
						&= 4\bar{C} \underbrace{\frac{1}{M} \sum_{s\geq0}c_{M}(s,b_{M}; 2)}_{\rightarrow 0} \displaybreak[0]\\
						&\rightarrow 0,
					\end{align*}
					where the second inequality comes from Lemma \ref{lemma:Kojevnikov_et_al-2021:p903}, and the last implication is due to Assumption \ref{assm:ExogenousRegression:Sparseness}. Therefore we have shown that
					\begin{align*}
						E\big[ (V_{a,b}^{c} - V_{a,b})^{2} \big] \rightarrow 0.
					\end{align*} 
					
					\quad By repeating the same argument for each $a,b=1,\ldots,K$, it follows that
					\begin{align*}
						E\big[ \|V_{N,M}^{c} - V_{M,M}\|_{F}^{2} \big] \rightarrow 0.
					\end{align*}

				\item[(b)] $\big\| V_{N,M}^{c} - V_{N,M} \big\|_{F} \stackrel{p}{\rightarrow} 0$:\\
					By the Chebyshev's inequality and the result of part (a), we complete part (1) as it follows that for any $\eta>0$,
					\begin{align*}
						\mathrm{Pr}(\big\| V_{N,M}^{c} - V_{N,M} \big\|_{F} > \eta)
						&< \frac{1}{\eta^{2}} \underbrace{E\Big[ \big\| V_{N,M}^{c} - V_{N,M} \big\|_{F}^{2} \Big]}_{\rightarrow 0} \displaybreak[0]\\
						&\rightarrow 0.
					\end{align*}
					
			\end{itemize}
		\item[(2)] $\| \tilde{V}_{N,M} - V^{c}_{N,M} \|_{F} \stackrel{p}{\rightarrow} 0$:\\
			This immediately follows from applying Proposition 4.1 of \citet{Kojevnikov_et_al-2021}\footnote{Notice that the definitions of $V_{N,M}$, $\hat{V}_{N,M}$, $\tilde{V}_{N,M}$ and $V_{N,M}^{c}$ are slightly different from those used in Proposition 4.1 of \citet{Kojevnikov_et_al-2021}.} and the Dominated Convergence Theorem in the Markov inequality: i.e., 
			\begin{align*}
				\mathrm{Pr}\big( \| \tilde{V}_{N,M} - V^{c}_{N,M} \|_{F} \geq \eta \big)
				&\leq \frac{1}{\eta} E\big[ \| \tilde{V}_{N,M} - V^{c}_{N,M} \|_{F} \big] \displaybreak[0]\\
				&= \frac{1}{\eta} E\big[ \underbrace{E\big[ \| \tilde{V}_{N,M} - V^{c}_{N,M} \|_{F} \mid \mathcal{C}_{M} \big]}_{\stackrel{a.s.}{\rightarrow} 0} \big] \displaybreak[0]\\
				&\rightarrow 0,
			\end{align*}
			for any $\eta>0$.
		\item[(3)] $\| \hat{V}_{N,M} - \tilde{V}_{N,M} \|_{F} \stackrel{p}{\rightarrow} 0$:\\
			First, 
			we have\footnote{To lighten the notational burden, we drop the $M$ subscript from $\{x_{M,m}\}_{m\in \mathcal{M}_{N}}$ and $\{\varepsilon_{M,m}\}_{m\in \mathcal{M}_{N}}$ in the rest of the proof.}
			{\footnotesize
			\begin{align*}
				\big\| \hat{V}_{N,M} - \tilde{V}_{N,M} \big\|_{F}
				&= \Bigg\| \sum_{s\geq 0} \omega_{M}(s) \frac{1}{M} \sum_{m\in \mathcal{M}_{N}} \sum_{j\in \mathcal{M}_{N}^{\partial}(m;s)} 
					\hat{\varepsilon}_{m} \hat{\varepsilon}_{j} x_{m} x_{j}' - \sum_{s\geq 0} \omega_{M}(s) \frac{1}{M} \sum_{m\in \mathcal{M}_{N}} \sum_{j\in \mathcal{M}_{N}^{\partial}(m;s)} \varepsilon_{m} \varepsilon_{j} x_{m} x_{j}' \Bigg\|_{F} \displaybreak[0]\\
				&= \Bigg\| \sum_{s\geq 0} \omega_{M}(s) \frac{1}{M} \sum_{m\in \mathcal{M}_{N}} \sum_{j\in \mathcal{M}_{N}^{\partial}(m;s)}   
					 \big( \hat{\varepsilon}_{m} \hat{\varepsilon}_{j} - \varepsilon_{m} \varepsilon_{j} \big) x_{m} x_{j}' \Bigg\|_{F} \displaybreak[0]\\
				&\leq \Bigg\| \sum_{s\geq 0} \underbrace{|\omega_{M}(s)|}_{\leq1} \frac{1}{M} \sum_{m\in \mathcal{M}_{N}} \sum_{j\in \mathcal{M}_{N}^{\partial}(m;s)}   
					 \big( \hat{\varepsilon}_{m} \hat{\varepsilon}_{j} - \varepsilon_{m} \varepsilon_{j} \big) x_{m} x_{j}' \Bigg\|_{F} \displaybreak[0]\\
				&\leq \Bigg\| \sum_{s\geq 0} \frac{1}{M} \sum_{m\in \mathcal{M}_{N}} \sum_{j\in \mathcal{M}_{N}^{\partial}(m;s)}   
					 \big( \hat{\varepsilon}_{m} \hat{\varepsilon}_{j} - \varepsilon_{m} \varepsilon_{j} \big) x_{m} x_{j}' \Bigg\|_{F} \displaybreak[0]\\
				&\leq \sum_{s\geq 0} \frac{1}{M} \sum_{m\in \mathcal{M}_{N}} \sum_{j\in \mathcal{M}_{N}^{\partial}(m;s)}   
					 \big| \hat{\varepsilon}_{m} \hat{\varepsilon}_{j} - \varepsilon_{m} \varepsilon_{j} \big| \big\| x_{m} x_{j}' \big\|_{F}.
 			\end{align*}
			}
			\quad Observe that, by definition, $\hat{\varepsilon}_{m}$ can be written as $\hat{\varepsilon}_{m} =  \varepsilon_{m} - x_{m}' (\hat{\beta} - \beta)$.
			Hence
			\begin{align*}
				\hat{\varepsilon}_{m} \hat{\varepsilon}_{j} - \varepsilon_{m} \varepsilon_{j}
				= - \varepsilon_{m} (\hat{\beta} - \beta)' x_{j}
					- x_{m}' (\hat{\beta} - \beta) \varepsilon_{j}
					+ x_{m}' (\hat{\beta} - \beta) (\hat{\beta} - \beta)' x_{j},
			\end{align*}
			so that by the triangular inequality,
			\begin{align*}
				\big| \hat{\varepsilon}_{m} \hat{\varepsilon}_{j} - \varepsilon_{m} \varepsilon_{j} \big| 
				&\leq \big\| \hat{\beta} - \beta \big\|_{2} \big\| x_{j} \big\|_{2} \big| \varepsilon_{m} \big| 
					+ \big\| \hat{\beta} - \beta \big\|_{2} \big\| x_{m} \big\|_{2} \big|\varepsilon_{j} \big|
					+ \big\| \hat{\beta} - \beta \big\|_{2}^{2} \big\| x_{m} \big\|_{2} \big\| x_{j} \big\|_{2},
			\end{align*}
			for each $m,j\in\mathcal{M}_{N}$.
			Hence $\| \hat{V}_{N,M} - \tilde{V}_{N,M} \|_{F}$ can be bounded as
			{\footnotesize
			\begin{align*}
				&\big\| \hat{V}_{N,M} - \tilde{V}_{N,M} \big\|_{F} \displaybreak[0]\\
				&\leq \big\| \hat{\beta} - \beta \big\|_{2} \frac{1}{M^{}} \sum_{s\geq 0} \sum_{m\in \mathcal{M}_{N}} \sum_{j\in \mathcal{M}_{N}^{\partial}(m;s)} \| x_{m} \|_{2} \| x_{j} \|_{2}^{2} | \varepsilon_{m} | 
					+ \big\| \hat{\beta} - \beta \big\|_{2} \frac{1}{M^{}} \sum_{s\geq 0} \sum_{m\in \mathcal{M}_{N}} \sum_{j\in \mathcal{M}_{N}^{\partial}(m;s)} \| x_{m} \|_{2}^{2} \| x_{j} \|_{2} |\varepsilon_{j} | \displaybreak[0]\\
				&\qquad	+ \big\| \hat{\beta} - \beta \big\|_{2}^{2} \frac{1}{M^{}} \sum_{s\geq 0} \sum_{m\in \mathcal{M}_{N}} \sum_{j\in \mathcal{M}_{N}^{\partial}(m;s)} \| x_{m} \|_{2}^{2} \| x_{j} \|_{2}^{2}.
			\end{align*}
			}
			
			\quad Denote
			\begin{align*}
				& R_{N,1} \coloneqq \frac{1}{M^{}} \sum_{s\geq 0} \sum_{m\in \mathcal{M}_{N}} \sum_{j\in \mathcal{M}_{N}^{\partial}(m;s)} \| x_{m} \|_{2} \| x_{j} \|_{2}^{2} | \varepsilon_{m} |; \displaybreak[0]\\
				& R_{N,2} \coloneqq \frac{1}{M^{}} \sum_{s\geq 0} \sum_{m\in \mathcal{M}_{N}} \sum_{j\in \mathcal{M}_{N}^{\partial}(m;s)} \allowbreak \| x_{m} \|_{2}^{2} \| x_{j} \|_{2} |\varepsilon_{j} |; \displaybreak[0]\\
				& R_{N,3} \coloneqq \frac{1}{M^{}} \sum_{s\geq 0} \sum_{m\in \mathcal{M}_{N}} \sum_{j\in \mathcal{M}_{N}^{\partial}(m;s)} \| x_{m} \|_{2}^{2} \| x_{j} \|_{2}^{2}.
			\end{align*}
			\quad Now, since by Theorem \ref{theo:ConsistencyofBetaHat}, $\| \hat{\beta} - \beta \|_{2}^{}\stackrel{p}{\rightarrow}0$,
			and the application of the Continuous Mapping Theorem yields $\| \hat{\beta} - \beta \|_{2}^{2}\stackrel{p}{\rightarrow}0$,
			it thus suffices to prove that each of $R_{N,1}$, $R_{N,2}$ and $R_{N,3}$ converges in probability to a finite number. In proving this, we follow a strategy employed in \citet{Aronow_et_al-2015} and \citet{Tabord-Meehan-2019}.
			
			\quad First let us study the expectation of $R_{N,1}$. By applying the Cauchy-Schwartz inequality repeatedly, we have that
			\begin{align*}
				E\big[ R_{N,1} \big] 
				\leq \frac{1}{M^{}} \sum_{s\geq 0} \sum_{m\in \mathcal{M}_{N}} \sum_{j\in \mathcal{M}_{N}^{\partial}(m;s)} \big( \big(E\big[ \| x_{m} \|_{2}^{2} \big]\big)^{\frac{1}{2}} \big(E\big[ \| x_{j} \|_{2}^{8} \big]\big)^{\frac{1}{2}} \big)^{\frac{1}{2}} \big( E\big[ E\big[ |\varepsilon_{m}|^{2} \mid \mathcal{C}_{M}\big]\big] \big)^{\frac{1}{2}}.
			\end{align*}
			Here, in light of Assumption \ref{assm:Kojevnikov_et_al-2021:Assm4.1}, there exists an a.s.-bounded function $C_{1}$ such that $C_{1} = \sup_{N\geq1} \max_{m\in\mathcal{M}_{N}} E\big[ | \varepsilon_{m} |^{2} \mid \mathcal{C}_{M} \big]$,
			and moreover by Assumption \ref{assm:ExogenousRegression:BoundedSupport}, there exists a nonnegative finite number $C_{2}>0$ such that $C_{2} = \sup_{N\geq1} \max_{m\in\mathcal{M}_{N}} E\big[ \| x_{m} \|_{2}^{8} \big]$.
			With a slight abuse of notation, we have for every $N>0$
			\begin{align*}
				E\left[ R_{N,1} \right]
				&\leq \frac{1}{M^{}} \sum_{s\geq 0} \sum_{m\in \mathcal{M}_{N}} \sum_{j\in \mathcal{M}_{N}^{\partial}(m;s)} C_{1}C_{2} \displaybreak[0]\\
				&= \frac{C_{1}C_{2}}{M^{}} \sum_{s\geq 0} \sum_{m\in \mathcal{M}_{N}} \sum_{j\in \mathcal{M}_{N}^{\partial}(m;s)} 1 \displaybreak[0]\\
				&= C_{1}C_{2} \sum_{s\geq 0} \underbrace{\frac{1}{M^{}} \sum_{m\in \mathcal{M}_{N}} |\mathcal{M}_{N}^{\partial}(m;s)|}_{\delta_{M}^{\partial}(s; 1)} \displaybreak[0]\\
				&= C_{1}C_{2} \underbrace{ \sum_{s\geq 0} \delta_{M}^{\partial}(s; 1)}_{< \infty} \displaybreak[0]\\
				&< C,
			\end{align*}
			for some constant $C\in(0,\infty)$, where the last inequality is because of Assumption \ref{assm:ExogenousRegression:Sparseness}.
			
			\quad Next let us study the variance of $R_{N,1}$. It suffices to show that $E\big[ R_{N,1}^{2} \big] \rightarrow 0$,
			By the Cauchy-Schwartz inequality, it holds that
			{\footnotesize
			\begin{align*}
				E\big[ R_{N,1}^{2} \big]
				&= E\Big[ \frac{1}{M^{2}} \sum_{s\geq 0} \sum_{m\in \mathcal{M}_{N}} \sum_{j\in \mathcal{M}_{N}^{\partial}(m;s)} \sum_{t\geq 0} \sum_{k\in \mathcal{M}_{N}} \sum_{l\in \mathcal{M}_{N}^{\partial}(k;t)} \| x_{m} \|_{2} \| x_{j} \|_{2}^{2} \| x_{k} \|_{2} \| x_{l} \|_{2}^{2} | \varepsilon_{m} | | \varepsilon_{k} |  \Big] \displaybreak[0]\\
				&\leq \frac{1}{M^{2}} \sum_{s\geq 0} \sum_{m\in \mathcal{M}_{N}} \sum_{j\in \mathcal{M}_{N}^{\partial}(m;s)} \sum_{t\geq 0} \sum_{k\in \mathcal{M}_{N}} \sum_{l\in \mathcal{M}_{N}^{\partial}(k;t)} \big( E\big[ \| x_{m} \|_{2}^{2} \| x_{j} \|_{2}^{4} \| x_{k} \|_{2}^{2} \| x_{l} \|_{2}^{4} \big] \big)^{\frac{1}{2}} \big( E\big[ | \varepsilon_{m} |^{2} | \varepsilon_{k} |^{2}  \big] \big)^{\frac{1}{2}}.
			\end{align*}
			}
			Here, by Assumption \ref{assm:ExogenousRegression:BoundedSupport} and the Cauchy-Schwartz inequality, there exists a nonnegative finite constant $C_{3}>0$ such that $C_{3} = \sup_{N\leq1} \max_{m,j,k,l \in\mathcal{M}_{N}} E\big[ \| x_{m} \|_{2}^{2} \| x_{j} \|_{2}^{4} \| x_{k} \|_{2}^{2} \| x_{l} \|_{2}^{4} \big]$.
			Then, with a slight abuse of notation in writing $C_{3}^{\frac{1}{2}}$ as $C_{3}$, we have
			\begin{align*}
				E\big[ R_{N,1}^{2} \big] 
				&= \frac{C_{3}}{M^{2}} \sum_{s\geq 0} \sum_{m\in \mathcal{M}_{N}} \allowbreak \sum_{j\in \mathcal{M}_{N}^{\partial}(m;s)} \sum_{t\geq 0} \sum_{k\in \mathcal{M}_{N}} \sum_{l\in \mathcal{M}_{N}^{\partial}(k;t)} \big( E\big[ | \varepsilon_{m} |^{2} | \varepsilon_{k} |^{2}  \big] \big)^{\frac{1}{2}} \displaybreak[0]\\
				&= \frac{C_{3}}{M^{2}} \sum_{s\geq 0} \sum_{(m,j,k,l)\in H_{M}(s;b_{M})} \allowbreak \big( E\big[ E\big[ | \varepsilon_{m} |^{2} | \varepsilon_{k} |^{2}  \mid \mathcal{C}_{M} \big]\big]\big)^{\frac{1}{2}}.
			\end{align*}	
			Corollary A.2 of \citet{Kojevnikov_et_al-2021} shows that there exists a nonnegative finite constant $C_{4}$ such that $E\big[ | \varepsilon_{m} |^{2} | \varepsilon_{k} |^{2} \mid \mathcal{C}_{M} \big] \leq C_{4} \bar{\theta} \theta_{M,s}^{1-\frac{4}{p}}$, where $\bar{\theta} \coloneqq \sup_{M\geq 1}\max_{s\geq 1} \theta_{M,s}$.
			Upon applying Lemma \ref{lemma:Kojevnikov_et_al-2021:p903} from the Appendix, we obtain 
			\begin{align*}
				E\big[ R_{N,1}^{2} \big] 
				\leq \frac{C_{3}C_{4}'}{M^{2}} \sum_{s\geq 0} \big( E\big[ \theta_{M,s}^{1-\frac{4}{p}} \big] \big)^{\frac{1}{2} } 4M c_{M}(s,b_{M};2) 
				= \frac{4C_{3}C_{4}''}{M^{}} \sum_{s\geq 0} c_{M}(s,b_{M};2) \rightarrow 0,
			\end{align*}
			where we apply Assumption \ref{assm:ExogenousRegression:Sparseness} for the last implication, and $C_{4}'$ and $C_{4}''$ are nonnegative finite constants defined appropriately.
			Hence we have shown that $R_{N,1}$ converges to a finite constant. 
			
			\quad The proof of $R_{N,2}$ is analogous.
			
			\quad It remains to show that $R_{N,3}$ converges in probability to a finite constant. Let us first study the expectation of $R_{N,3}$. Observe that 
			\begin{align*}
				E\big[ R_{N,3} \big] 
				= \frac{1}{M^{}} \sum_{s\geq 0} \sum_{m\in \mathcal{M}_{N}} \sum_{j\in \mathcal{M}_{N}^{\partial}(m;s)} E\big[ \| x_{m} \|_{2}^{2} \| x_{j} \|_{2}^{2} \big].
			\end{align*}
			By Assumption \ref{assm:ExogenousRegression:BoundedSupport}, there exists a nonnegative finite number $C_{5}>0$ such that $C_{5} = \sup_{N\geq 1} \max_{m\in\mathcal{M}_{N}} E\big[ \| x_{m} \|_{2}^{2} \| x_{j} \|_{2}^{2} \big]$.
			Hence for every $N>0$,
			\begin{align*}
				E\big[ R_{N,3} \big] 
				&= \frac{1}{M^{}} \sum_{s\geq 0} \sum_{m\in \mathcal{M}_{N}} \sum_{j\in \mathcal{M}_{N}^{\partial}(m;s)} \underbrace{E\Big[ \| x_{m} \|_{2}^{2} \| x_{j} \|_{2}^{2} \Big]}_{\leq C_{5}} \displaybreak[0]\\
				&\leq \frac{1}{M^{}} \sum_{s\geq 0} \sum_{m\in \mathcal{M}_{N}} \sum_{j\in \mathcal{M}_{N}^{\partial}(m;s)} C_{5} \displaybreak[0]\\
				&= \frac{C_{5}}{M^{}} \sum_{s\geq 0} \sum_{m\in \mathcal{M}_{N}} \sum_{j\in \mathcal{M}_{N}^{\partial}(m;s)} 1 \displaybreak[0]\\
				&= C_{5} \sum_{s\geq 0} \underbrace{\frac{1}{M} \sum_{m\in \mathcal{M}_{N}} |\mathcal{M}_{N}^{\partial}(m;s)|}_{\delta_{M}^{\partial}(s; 1)} \displaybreak[0]\\
				&= C_{5} \underbrace{\sum_{s\geq 0} \delta_{M}^{\partial}(s; 1)}_{< \infty} \displaybreak[0]\\
				&< C,
			\end{align*}
			where we apply Assumption \ref{assm:ExogenousRegression:Sparseness} in the last implication and a constant $C\in(0,\infty)$ is appropriately defined.
			
			\quad Next let us consider the variance of $R_{M,3}$:
			\begin{align*}
				E\big[ R_{N,3}^{2} \big] 
				&= \frac{1}{M^{2}} \sum_{s\geq 0} \sum_{m\in \mathcal{M}_{N}} \sum_{j\in \mathcal{M}_{N}^{\partial}(m;s)}  \sum_{t\geq 0} \sum_{k\in \mathcal{M}_{N}} \sum_{l\in \mathcal{M}_{N}^{\partial}(k;t)} E\big[ \| x_{m} \|_{2}^{2} \| x_{j} \|_{2}^{2} \| x_{k} \|_{2}^{2} \| x_{l} \|_{2}^{2} \big]. 
			\end{align*}
			Once again, Assumption \ref{assm:ExogenousRegression:BoundedSupport} and the Cauchy-Schwartz inequality imply that there exists a nonnegative finite number $C_{6}>0$ such that $C_{6} = \sup_{N\geq 1} \max_{m,j,k,l\in\mathcal{M}_{N}} \allowbreak E\big[ \| x_{m} \|_{2}^{2} \| x_{j} \|_{2}^{2} \| x_{k} \|_{2}^{2} \| x_{l} \|_{2}^{2} \big]$.
			Then by Lemma \ref{lemma:Kojevnikov_et_al-2021:p903}, 
			\begin{align*}
				E\big[ R_{N,3}^{2} \big] 
				\leq \frac{1}{M^{2}} \sum_{s\geq 0} \sum_{m\in \mathcal{M}_{N}} \sum_{j\in \mathcal{M}_{N}^{\partial}(m;s)}  \sum_{t\geq 0} \sum_{k\in \mathcal{M}_{N}} \sum_{l\in \mathcal{M}_{N}^{\partial}(k;t)} C_{6} 
				= \frac{4C_{6}}{M^{}} \sum_{s\geq 0} c_{M}(s,b_{M};2) \rightarrow 0,
			\end{align*}
			where the last implication is a consequence of Assumption \ref{assm:ExogenousRegression:Sparseness} (ii).
			
			\quad Therefore we have shown that
			{\footnotesize
			\begin{align*}
				&\big\| \hat{V}_{N,M} - \tilde{V}_{N,M} \big\|_{F} \displaybreak[0]\\
				&\leq \big\| \hat{\beta} - \beta \big\|_{2} \underbrace{ \frac{1}{M^{}} \sum_{s\geq 0} \sum_{m\in \mathcal{M}_{N}} \sum_{j\in \mathcal{M}_{N}^{\partial}(m;s)} \| x_{m} \|_{2} \| x_{j} \|_{2}^{2} | \varepsilon_{m} |}_{R_{M,1}} 
					+ \big\| \hat{\beta} - \beta \big\|_{2} \underbrace{ \frac{1}{M^{}} \sum_{s\geq 0} \sum_{m\in \mathcal{M}_{N}} \sum_{j\in \mathcal{M}_{N}^{\partial}(m;s)} \| x_{m} \|_{2}^{2} \| x_{j} \|_{2} |\varepsilon_{j} |}_{R_{M,2}} \displaybreak[0]\\
				&\qquad	+ \big\| \hat{\beta} - \beta \big\|_{2}^{2} \underbrace{\frac{1}{M^{}} \sum_{s\geq 0} \sum_{m\in \mathcal{M}_{N}} \sum_{j\in \mathcal{M}_{N}^{\partial}(m;s)} \| x_{m} \|_{2}^{2} \| x_{j} \|_{2}^{2}}_{R_{M,3}} \displaybreak[0]\\
				&= \underbrace{\big\| \hat{\beta} - \beta \big\|_{2}}_{\stackrel{p}{\rightarrow}0} \underbrace{R_{M,1}}_{< \infty} + \underbrace{\big\| \hat{\beta} - \beta \big\|_{2}}_{\stackrel{p}{\rightarrow}0} \underbrace{R_{M,2}}_{< \infty} + \underbrace{\big\| \hat{\beta} - \beta \big\|_{2}^{2}}_{\stackrel{p}{\rightarrow}0} \underbrace{R_{M,3}}_{< \infty} \stackrel{p}{\rightarrow}0,
			\end{align*}
			}
			which proves $\| \hat{V}_{N,M} - \tilde{V}_{N,M} \|_{F}\stackrel{p}{\rightarrow} 0$.

		\end{itemize}

		To sum up, combining parts (1), (2) and (3), we have 
		\begin{align*}
			\| \hat{V}_{N,M} - V_{N,M} \|_{F} 
			&\leq \underbrace{\| \hat{V}_{N,M} - \tilde{V}_{N,M} \|_{F}}_{\stackrel{p}{\rightarrow} 0} + \underbrace{\| \tilde{V}_{N,M} - V^{c}_{N,M} \|_{F}}_{\stackrel{p}{\rightarrow} 0} + \underbrace{\| V^{c}_{N,M} - V_{N,M} \|_{F}}_{\stackrel{p}{\rightarrow} 0} \displaybreak[0]\\
			&\stackrel{p}{\rightarrow} 0,
		\end{align*}
		which completes the proof.	\hfill$\square$

\subsection{Corollary \ref{coro:ExogenousRegression:NetworkHACEstimators:InconsistencyResult}}\label{appndx:subsec:ProofOfMainTheorems:ExogenousLinearRegression:InConsistencyOfVHatDyad}

\textbf{Proof of Corollary \ref{coro:ExogenousRegression:NetworkHACEstimators:InconsistencyResult}:}
	For simplicity we denote
	\begin{align*}
		\hat{V}_{N,M}^{Dyad}
		\coloneqq \Big( \sum_{m\in\mathcal{M}_{N}^{}} x_{m}x_{m}' \Big)^{-1}
				\Big( \sum_{m\in\mathcal{M}_{N}^{}} \sum_{m'\in\mathcal{M}_{N}^{}} \mathbbm{1}_{m,m'} \hat{\varepsilon}_{m}\hat{\varepsilon}_{m'} x_{m}x_{m'}' \Big)
				\Big( \sum_{m\in\mathcal{M}_{N}^{}} x_{m}x_{m}' \Big)^{-1},
	\end{align*}
	and
	\begin{align*}
		\hat{V}_{N,M}^{Network}
		\coloneqq \Big( \sum_{m\in\mathcal{M}_{N}^{}} x_{m}x_{m}' \Big)^{-1}
				\Big( \sum_{m\in\mathcal{M}_{N}^{}} \sum_{m'\in\mathcal{M}_{N}^{}} h_{m,m'} \hat{\varepsilon}_{m}\hat{\varepsilon}_{m'} x_{m}x_{m'}' \Big)
				\Big( \sum_{m\in\mathcal{M}_{N}^{}} x_{m}x_{m}' \Big)^{-1},\footnotemark
	\end{align*}
	where we choose the kernel function and the lag truncation parameter so that the weights become equal one for all active dyads: namely, we use the mean-shifted rectangular kernel with the lag truncation being the length of the longest path in the network.\footnotetext{For the sake of brevity, we suppress the $M$ from subscript throughout this proof.}
	Define moreover $\widetilde{Var}(\hat{\beta})$ to be the same variance as in the main text --- \\ $N \Big( \frac{1}{M}\sum_{k\in\mathcal{M}_{N}} E\big[x_{M,k}x_{M,k}' \big]\Big)^{-1} \Big( \frac{1}{M^{2}} \sum_{s\geq 0}\sum_{m\in\mathcal{M}_{N}} \sum_{m'\in\mathcal{M}_{N}^{\partial}(m;s)} E\big[Y_{M,m}Y_{M,m'}' \big] \Big) \Big( \frac{1}{M}\sum_{k\in\mathcal{M}_{N}} E\big[x_{M,k}x_{M,k}' \big]\Big)^{-1}$, but now applied to the network-regression model (\ref{eq:ExogenousRegression:PopulationNetworkRegression_a}) and (\ref{eq:ExogenousRegression:PopulationNetworkRegression_b}).
	By the triangular inequality,
	\begin{align*}
		\big\| N \hat{V}_{N,M}^{Network} - N \hat{V}_{N,M}^{Dyad} \big\|_{F}
		&= \big\| N \hat{V}_{N,M}^{Network} - \widetilde{Var}(\hat{\beta}) + \widetilde{Var}(\hat{\beta}) - N \hat{V}_{N,M}^{Dyad} \big\|_{F} \displaybreak[0]\\
		&\leq \big\| N \hat{V}_{N,M}^{Network} - \widetilde{Var}(\hat{\beta}) \big\|_{F} + \big\| \widetilde{Var}(\hat{\beta}) - N \hat{V}_{N,M}^{Dyad} \big\|_{F}.
	\end{align*}
	
	Since Theorem \ref{thm:ExogenousRegression:ConsistencyOfTheNetwork-HACVarianceEstimator} implies $\big\| N \hat{V}_{N,M}^{Network} - \widetilde{Var}(\hat{\beta}) \big\|_{F}\stackrel{p}{\rightarrow} 0$, then in the limit we are left with
	\begin{align}\label{eq:ExogenousRegression:NetworkHACEstimators:Corollary:InconsistencyResult}
		\big\| N \hat{V}_{N,M}^{Network} - N \hat{V}_{N,M}^{Dyad} \big\|_{F}
		&\leq \big\| \widetilde{Var}(\hat{\beta}) - N \hat{V}_{N,M}^{Dyad} \big\|_{F}.
	\end{align}
	Now we prove the statement by way of contradiction. Assume for the sake of contradiction that the dyadic-robust variance estimator $\hat{V}_{N,M}^{Dyad}$ is consistent, i.e., $\big\| \widetilde{Var}(\hat{\beta}) - N \hat{V}_{N,M}^{Dyad} \big\|_{F} \stackrel{p}{\rightarrow} 0$. This, combined with the inequality (\ref{eq:ExogenousRegression:NetworkHACEstimators:Corollary:InconsistencyResult}), implies $\big\| N \hat{V}_{N,M}^{Network} - N \hat{V}_{N,M}^{Dyad} \big\|_{F}\stackrel{p}{\rightarrow} 0$. Now, observe that
	{\footnotesize
	\begin{align*}
		&\big\| N \hat{V}_{N,M}^{Network} - N \hat{V}_{N,M}^{Dyad} \big\|_{F} \displaybreak[0]\nonumber\\
		&= \Big\| N \Big( \sum_{m\in\mathcal{M}_{N}^{}} x_{m}x_{m}' \Big)^{-1}
				\Big( \sum_{m\in\mathcal{M}_{N}^{}} \sum_{m'\in\mathcal{M}_{N}^{}} h_{m,m'} \hat{\varepsilon}_{m}\hat{\varepsilon}_{m'} x_{m}x_{m'}' \Big)
				\Big( \sum_{m\in\mathcal{M}_{N}^{}} x_{m}x_{m}' \Big)^{-1} \nonumber\\
		& \qquad  - N \Big( \sum_{m\in\mathcal{M}_{N}^{}} x_{m}x_{m}' \Big)^{-1}
				\Big( \sum_{m\in\mathcal{M}_{N}^{}} \sum_{m'\in\mathcal{M}_{N}^{}} \mathbbm{1}_{m,m'} \hat{\varepsilon}_{m}\hat{\varepsilon}_{m'} x_{m}x_{m'}' \Big)
				\Big( \sum_{m\in\mathcal{M}_{N}^{}} x_{m}x_{m}' \Big)^{-1} \Big\|_{F} \displaybreak[0]\nonumber\\
		&= \Big\| N \Big( \sum_{m\in\mathcal{M}_{N}^{}} x_{m}x_{m}' \Big)^{-1}
				\Big( \sum_{s\geq 0}\sum_{m\in\mathcal{M}_{N}^{}} \sum_{m'\in\mathcal{M}_{N}^{\partial}(m; s)} h_{m,m'} \hat{\varepsilon}_{m}\hat{\varepsilon}_{m'} x_{m}x_{m'}' \Big)
				\Big( \sum_{m\in\mathcal{M}_{N}^{}} x_{m}x_{m}' \Big)^{-1} \nonumber\\
		& \qquad  - N \Big( \sum_{m\in\mathcal{M}_{N}^{}} x_{m}x_{m}' \Big)^{-1}
				\Big( \sum_{s\geq 0}\sum_{m\in\mathcal{M}_{N}^{}} \sum_{m'\in\mathcal{M}_{N}^{\partial}(m; s)} \mathbbm{1}_{m,m'} \hat{\varepsilon}_{m}\hat{\varepsilon}_{m'} x_{m}x_{m'}' \Big)
				\Big( \sum_{m\in\mathcal{M}_{N}^{}} x_{m}x_{m}' \Big)^{-1} \Big\|_{F} \displaybreak[0]\nonumber\\
		&= \Big\| N \Big( \sum_{m\in\mathcal{M}_{N}^{}} x_{m}x_{m}' \Big)^{-1}
				\Big( \sum_{s=\{0,1\}}\sum_{m\in\mathcal{M}_{N}^{}} \sum_{m'\in\mathcal{M}_{N}^{\partial}(m; s)} \underbrace{\Big(h_{m,m'} -\mathbbm{1}_{m,m'}\Big)}_{0} \hat{\varepsilon}_{m}\hat{\varepsilon}_{m'} x_{m}x_{m'}'  \Big)
				\Big( \sum_{m\in\mathcal{M}_{N}^{}} x_{m}x_{m}' \Big)^{-1} \nonumber\\
		& \qquad + N \Big( \sum_{m\in\mathcal{M}_{N}^{}} x_{m}x_{m}' \Big)^{-1}
				\Big( \sum_{s\geq 2}\sum_{m\in\mathcal{M}_{N}^{}} \sum_{m'\in\mathcal{M}_{N}^{\partial}(m; s)} \Big( \underbrace{h_{m,m'}}_{1} -\underbrace{\mathbbm{1}_{m,m'}}_{0} \Big) \hat{\varepsilon}_{m}\hat{\varepsilon}_{m'} x_{m}x_{m'}' \Big)
				\Big( \sum_{m\in\mathcal{M}_{N}^{}} x_{m}x_{m}' \Big)^{-1} \Big\|_{F}  \displaybreak[0]\nonumber\\
		&= \Big\| N \Big( \sum_{m\in\mathcal{M}_{N}^{}} x_{m}x_{m}' \Big)^{-1}
				\Big( \sum_{s\geq 2}\sum_{m\in\mathcal{M}_{N}^{}} \sum_{m'\in\mathcal{M}_{N}^{\partial}(m; s)} \hat{\varepsilon}_{m}\hat{\varepsilon}_{m'} x_{m}x_{m'}' \Big)
				\Big( \sum_{m\in\mathcal{M}_{N}^{}} x_{m}x_{m}' \Big)^{-1} \Big\|_{F} \displaybreak[0]\nonumber\\
		&= \frac{N}{M} \Big\| \Big( \frac{1}{M} \sum_{m\in\mathcal{M}_{N}^{}} x_{m}x_{m}' \Big)^{-1}
				\Big( \frac{1}{M} \sum_{s\geq 2}\sum_{m\in\mathcal{M}_{N}^{}} \sum_{m'\in\mathcal{M}_{N}^{\partial}(m; s)} \hat{\varepsilon}_{m}\hat{\varepsilon}_{m'} x_{m}x_{m'}' \Big)
				\Big( \frac{1}{M} \sum_{m\in\mathcal{M}_{N}^{}} x_{m}x_{m}' \Big)^{-1} \Big\|_{F}, 
	\end{align*}
	}
We prove that the inside the Frobenius norm does not converge in probability to zero. 
	
	First it can immediately be shown, by Lemma \ref{lemma:WellDefinedBread} (ii), that the ``bread'' part $\big( \frac{1}{M} \sum_{m\in\mathcal{M}_{N}^{}} x_{m}x_{m}' \big)^{-1}$ converges to $\big( \frac{1}{M} \sum_{m\in\mathcal{M}_{N}^{}} E\big[ x_{m}x_{m}' \big] \big)^{-1}$.
	
	Next plugging the definition of $\hat{\varepsilon}$ into the middle part, we have
	{\footnotesize
	\begin{align*}
		&\frac{1}{M} \sum_{s\geq 2}\sum_{m\in\mathcal{M}_{N}^{}} \sum_{m'\in\mathcal{M}_{N}^{\partial}(m; s)} \hat{\varepsilon}_{m}\hat{\varepsilon}_{m'} x_{m}x_{m'}'  \displaybreak[0]\\
		&= \frac{1}{M} \sum_{s\geq 2}\sum_{m\in\mathcal{M}_{N}^{}} \sum_{m'\in\mathcal{M}_{N}^{\partial}(m; s)} \big\{\varepsilon_{m}+x_{m}'(\beta-\hat{\beta}) \big\} \big\{\varepsilon_{m'}+x_{m'}'(\beta-\hat{\beta}) \big\} x_{m}x_{m'}' \displaybreak[0]\\
		&= \frac{1}{M} \sum_{s\geq 2}\sum_{m\in\mathcal{M}_{N}^{}} \sum_{m'\in\mathcal{M}_{N}^{\partial}(m; s)}
			\varepsilon_{m}\varepsilon_{m'} x_{m}x_{m'}'
			+ \varepsilon_{m} (\beta-\hat{\beta}) x_{m'} x_{m}x_{m'}'
			+ x_{m}'(\beta-\hat{\beta})\varepsilon_{m'} x_{m}x_{m'}'
			+ x_{m}'(\beta-\hat{\beta})(\beta-\hat{\beta})' x_{m'} x_{m}x_{m'}' \displaybreak[0]\\
		&= \frac{1}{M} \sum_{s\geq 2}\sum_{m\in\mathcal{M}_{N}^{}} \sum_{m'\in\mathcal{M}_{N}^{\partial}(m; s)}
				\varepsilon_{m}\varepsilon_{m'} x_{m}x_{m'}'
			+ \frac{1}{M} \sum_{s\geq 2}\sum_{m\in\mathcal{M}_{N}^{}} \sum_{m'\in\mathcal{M}_{N}^{\partial}(m; s)}
				\varepsilon_{m} (\beta-\hat{\beta}) x_{m'} x_{m}x_{m'}' \displaybreak[0]\\
		&\quad + \frac{1}{M} \sum_{s\geq 2}\sum_{m\in\mathcal{M}_{N}^{}} \sum_{m'\in\mathcal{M}_{N}^{\partial}(m; s)}
				x_{m}'(\beta-\hat{\beta})\varepsilon_{m'} x_{m}x_{m'}'
			+ \frac{1}{M} \sum_{s\geq 2}\sum_{m\in\mathcal{M}_{N}^{}} \sum_{m'\in\mathcal{M}_{N}^{\partial}(m; s)}
				x_{m}'(\beta-\hat{\beta})(\beta-\hat{\beta})' x_{m'} x_{m}x_{m'}'.
	\end{align*}
	}
	Denote
	\begin{align*}
		& Q_{M,1} \coloneqq \frac{1}{M} \sum_{s\geq 2}\sum_{m\in\mathcal{M}_{N}^{}} \sum_{m'\in\mathcal{M}_{N}^{\partial}(m; s)}
							\varepsilon_{m}\varepsilon_{m'} x_{m}x_{m'}' \displaybreak[0]\\
		& Q_{M,2} \coloneqq \frac{1}{M} \sum_{s\geq 2}\sum_{m\in\mathcal{M}_{N}^{}} \sum_{m'\in\mathcal{M}_{N}^{\partial}(m; s)}
							\varepsilon_{m} (\beta-\hat{\beta}) x_{m'} x_{m}x_{m'}' \displaybreak[0]\\
		& Q_{M,3} \coloneqq \frac{1}{M} \sum_{s\geq 2}\sum_{m\in\mathcal{M}_{N}^{}} \sum_{m'\in\mathcal{M}_{N}^{\partial}(m; s)}
							x_{m}'(\beta-\hat{\beta})\varepsilon_{m'} x_{m}x_{m'}' \displaybreak[0]\\
		& Q_{M,4} \coloneqq \frac{1}{M} \sum_{s\geq 2}\sum_{m\in\mathcal{M}_{N}^{}} \sum_{m'\in\mathcal{M}_{N}^{\partial}(m; s)}
							x_{m}'(\beta-\hat{\beta})(\beta-\hat{\beta})' x_{m'} x_{m}x_{m'}'.
	\end{align*}
	
	From Theorem \ref{theo:ConsistencyofBetaHat}, it can be seen that $Q_{M,2}$, $Q_{M,3}$ and $Q_{M,4}$ either converge to zero or diverge as $N$ goes to infinity. When it comes to $Q_{M,1}$, observe that
	\begin{align*}
		E\Big[ Q_{M,1} \Big]
		&= E\Big[ \frac{1}{M} \sum_{s\geq 2}\sum_{m\in\mathcal{M}_{N}^{}} \sum_{m'\in\mathcal{M}_{N}^{\partial}(m; s)}
							\varepsilon_{m}\varepsilon_{m'} x_{m}x_{m'}' \Big] \displaybreak[0]\\
		&= \frac{1}{M} \sum_{s\geq 2}\sum_{m\in\mathcal{M}_{N}^{}} \sum_{m'\in\mathcal{M}_{N}^{\partial}(m; s)}
							E\Big[ \varepsilon_{m}\varepsilon_{m'} x_{m}x_{m'}' \Big],		
	\end{align*}
	which never equals to zero due to the hypothesis (\ref{eq:ExogenousRegression:NetworkHACEstimators:InconsistencyResult}) of this corollary. In either case, the middle part does not converge in probability to zero, meaning that $\big\| N \hat{V}_{N,M}^{Network} - N \hat{V}_{N,M}^{Dyad} \big\|_{F} \stackrel{p}{\rightarrow} 0$ is not true. This, however, contradicts the implication of the assumption that the dyadic-robust variance estimator is consistent. Hence, by means of contradiction, we conclude that the dyadic-robust variance estimator is not consistent, which completes the proof. \hfill$\square$

\subsection{Example \ref{example:MaximumAdmissibleBiasInTheDyadic-RobustVarianceEstimator}}\label{appndx:example_proof}

\textbf{Proof of Example \ref{example:MaximumAdmissibleBiasInTheDyadic-RobustVarianceEstimator}}


By the inequality of arithmetic and geometric means, the left-hand side of \eqref{eq:ExogenousRegression:NetworkHACEstimators:InconsistencyResult} can be bounded by
	\begin{align*}
		 \frac{1}{M} \sum_{s\geq 2} \sum_{m\in\mathcal{M}_{N}} \sum_{m'\in\mathcal{M}_{N}^{\partial}(m;s)} E\big[ \varepsilon_{M, m} \varepsilon_{M, m'} x_{M, m} x_{M, m'} \big]
		 &=  \sum_{s\geq 2} \gamma^{s} \delta_{M}^{\partial}(s) \\
		 &\geq (S-1) \bigg( \prod_{s\geq 2} \gamma^{s} \bigg)^{1/(S-1)} \bigg( \prod_{s\geq 2} \delta_{M}^{\partial}(s) \bigg)^{1/(S-1)},
	\end{align*}
	where $S\geq 2$ denotes the length of the longest path in the network. As the first and third terms in both estimators are the same, then using the proposed network-robust estimator will be desirable if the middle term (i.e., the left-hand side above) is larger than the tolerated threshold, $B$:
	\begin{align*}
		 (S-1) \bigg( \prod_{s\geq 2} \gamma^{s} \bigg)^{1/(S-1)} \bigg( \prod_{s\geq 2} \delta_{M}^{\partial}(s) \bigg)^{1/(S-1)}
		 > B.
	\end{align*}
	Passing logs on both sides yields the results. The lower bound is attained if $\gamma^{s} \delta_{M}^{\partial}(s)=\gamma^{s'} \delta_{M}^{\partial}(s')$ for all $s,s'=2,\ldots,S$.  Note that $S$ and the network densities $\{\delta_{M}^{\partial}(s)\}_{s\geq 2}$ can be estimated following the definition \eqref{eq:DefinitionsOfNetworkDensities}, as a (sample) network is observable.
	
	\bigskip
\section{Additional Monte Carlo Simulation Results}\label{appdx:AdditionalResults}


\subsection{Summary Statistics}\label{appdx:AdditionalResults:SummaryStatistics}

Table \ref{tbl:MonteCarloSimulation:DegreeCharacteristics:AverageMaxDegree_Nodes} shows summary statistics (i.e., the average and maximum degrees) of the networks across nodes that  are used in our simulation study. By construction, the maximum degree and the average degree monotonically increase in the parameters for both specifications. 

\begin{table}[H]
	\centering
	\caption{Summary Statistics of Networks among Nodes in the Simulations}
	\begin{threeparttable}
		\centering
		{\small
		\begin{tabular}{llccccccc} \toprule\toprule
		 		& 			& \multicolumn{3}{c}{Specification 1}	& &\multicolumn{3}{c}{Specification 2} \\ \cmidrule{3-5}\cmidrule{7-9} 
			$N$	&       		& $\nu=1$	& $\nu=2$ 	& $\nu=3$		& & $\lambda=1$	& $\lambda=2$	&	$\lambda=3$           \\ \midrule
			500 	& $d_{max}$	& 23 		& 40 		& 41 			& & 5			& 7 			& 8 \\ 
   				& $d_{ave}$	& 0.8020 	& 1.5780 	& 2.3540 		& & 0.4760		& 0.9680		& 1.4800 \\ \midrule
			1000	& $d_{max}$	& 26 		& 36 		& 47 			& & 4 			& 7 			& 8 \\ 
   				& $d_{ave}$	& 0.8590 	& 1.7000 	& 2.5410 		& & 0.4980 		& 0.9810 		& 1.5010 \\ \midrule
			5000 & $d_{max}$	& 53 		& 125 	& 130		& & 6 			& 9 			& 10 \\ 
   				& $d_{ave}$	& 0.9326 	& 1.8618 	& 2.7910 		& & 0.4952 		& 1.0016 		& 1.5114 \\ \bottomrule
			\end{tabular}
		}
		\begin{tablenotes}[para,flushleft]
			\item[] {\footnotesize {\sl Notes}: Observation units in this table are nodes (individuals) as usual in the literature. The maximum degree, $d_{max}$, means the maximum number of nodes that are adjacent to a node, and the average degree, $d_{ave}$, is the average number of nodes adjacent to each node of the network.}
		\end{tablenotes}
	\end{threeparttable}
	\label{tbl:MonteCarloSimulation:DegreeCharacteristics:AverageMaxDegree_Nodes}
\end{table}

Table \ref{tbl:MonteCarloSimulation:DegreeCharacteristics:AverageMaxDegree_Edges} reports the degree characteristics of the networks when viewed as networks over the active edges. The table provides the average degree, the maximum degree, and the number of active edges (i.e., dyads). 

\begin{table}[H]
	\centering
	\caption{Summary Statistics of Networks among Dyads in the Simulations}
	\begin{threeparttable}
		\centering
		{\small
		\begin{tabular}{llccccccc} \toprule \toprule
			$$ 	& 			& \multicolumn{3}{c}{Specification 1} & & \multicolumn{3}{c}{Specification 2} \\ \cmidrule{3-5}\cmidrule{7-9} 
			$N$ 	& 			& $\nu=1$	& $\nu=2$	& $\nu=3$ 	& & $\lambda=1$ & $\lambda=2$	& $\lambda=3$ \\ \midrule 
			500	& $d_{act}$	& 401	& 788 	& 1175 		& & 238 		& 484 	& 740 \\ 
   				& $d_{max}$	& 32 		& 45 		& 70 			& & 4 		& 9 		& 14 \\ 
   				& $d_{ave}$  	& 3.6858	& 6.0063 	& 9.1881 		& & 0.9580 	& 2.0248 	& 3.0027 \\ 
			\midrule 
			1000	& $d_{act}$	& 859	& 1699 	& 2540 		& & 498		& 981 	& 1501 \\ 
   				& $d_{max}$	& 35 		& 55 		& 76 			& & 5 		& 8 		& 10 \\
   				& $d_{ave}$	& 3.9581 	& 6.7810 	& 9.2047 		& & 1.0341 	& 1.9888 	& 2.9594 \\ 
				\midrule
			5000	& $d_{act}$	& 4663 	& 9305 	& 13952 		& & 2476 		& 5008 	& 7557 \\ 
   				& $d_{max}$	& 74 		& 161 	& 210 		& & 7 		& 12 		& 15 \\ 
   				& $d_{ave}$	& 5.2989 	& 9.5159 	& 12.8521 	& & 1.0137 	& 2.0228 	& 3.0341 \\ \bottomrule
			\end{tabular}
		}
		\begin{tablenotes}[para,flushleft]
			\item[] {\footnotesize {\sl Note}: Observation units in this table are active edges (dyads), which departs from the convention. Active edges are edges that are at work in the original network over the nodes. The number of active edges is denoted by $d_{act}$. The maximum degree, $d_{max}$, expresses the maximum number of edges that are adjacent to an edge, and the average degree, $d_{ave}$, is the average number of edges adjacent to each edge of the network. }
		\end{tablenotes}
	\end{threeparttable}
	\label{tbl:MonteCarloSimulation:DegreeCharacteristics:AverageMaxDegree_Edges}
\end{table}

\subsection{$S=2$ and $\gamma=0.8$}\label{appdx:AdditionalResults:S2gamma0.8}

In this section, we further discuss the results of the Monte Carlo simulations presented in the main text. The asymptotic behaviors of the three variance estimators are illustrated in Figure \ref{fig:MonteCarloSimulation:Results:Boxplots08}, where the horizontal axes represent the sample size and the vertical axes indicate the standard error of the regression coefficient. The boxplots show the 25th and 75th percentiles across simulations, as well as the median, with the whiskers indicating the bounds that are not considered as outliers. The whisker length is set to cover $\pm 2.7$ times the standard deviation of the standard-error estimates. The light-, medium- and dark-gray boxplots describe the distribution of the Eicker-Huber-White, the dyadic-robust and our proposed network-robust variance estimates across simulations, respectively. The diamonds indicate the empirical standard errors of the estimates of the regression coefficients, what \citet{Aronow_et_al-2015} call the true standard error. It is unsurprising that the empirical standard errors are the same across different variance estimators, as we use the same $\hat{\beta}$. The boxplots show that as the sample size increases, the variation of the network-robust variance estimator shrinks, reaching the empirical standard error (the diamonds). This is as expected since this estimator is consistent for the true variance (Theorem \ref{thm:ExogenousRegression:ConsistencyOfTheNetwork-HACVarianceEstimator}). The estimates appear to vary little for moderate sample sizes (e.g., $N=1000$). However, the other variance estimators (the light- and medium-gray boxplots) converge to lower values than the empirical standard errors (the diamonds), verifying their inconsistency in this environment with network spillovers, as shown by Corollary \ref{coro:ExogenousRegression:NetworkHACEstimators:InconsistencyResult}. As we make such spillovers very small (e.g., $\gamma = 0.2$ in Appendix \ref{appdx:AdditionalResults:S2gamma0.2}), all estimators have similar performance. This highlights the role of condition (\ref{eq:ExogenousRegression:NetworkHACEstimators:InconsistencyResult}): namely, the dyadic-robust variance estimator might perform satisfactorily well as long as higher-order correlations beyond immediate neighbors are negligible.

\setcounter{figure}{1}
\begin{figure}[htbp]
	\centering
	\caption{Boxplots of Standard Errors for Specifications 1 and 2 ($S=2$, $\gamma=0.8$)}
	\begin{threeparttable}
		\begin{tabular}{p{\textwidth}}
			\centering
			\includegraphics[keepaspectratio, scale=.70]{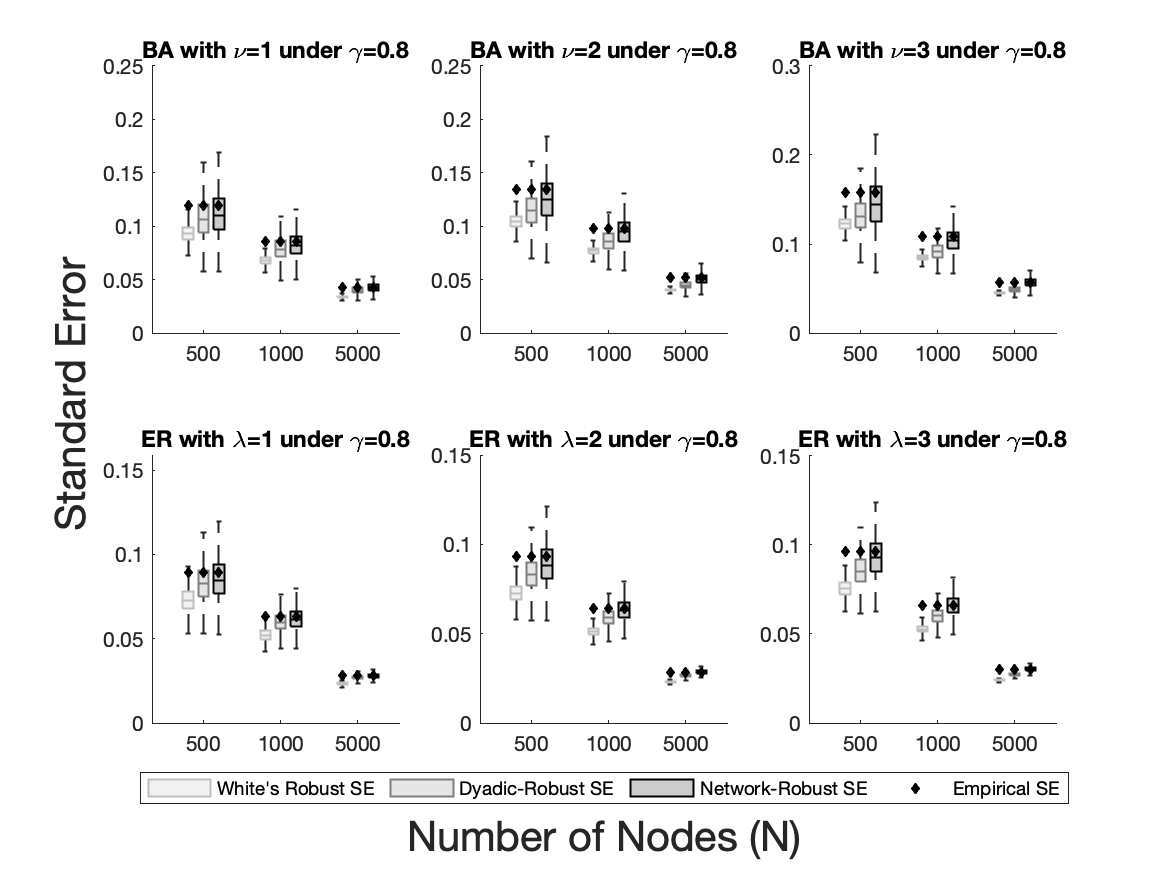}
		\end{tabular}

		\begin{tablenotes}[para,flushleft]
			\item[] {\footnotesize {\sl Note}: This figure shows boxplots describing the estimated standard errors and the empirical standard errors for various combinations of parameters under Specification 1 (Barab\'asi-Albert networks) and Specification 2 (Erd\"os-Renyi networks). The horizontal axis shows the number of nodes and the vertical axis represents the the standard error of the coefficient. The shaded boxes represent the 25th, 50th and 75th percentiles of estimated standard errors with the whiskers indicating the most extreme values that are not considered as outliers. The light-gray box illustrates the Eicker-Huber-White standard error, the medium-gray one the dyadic-robust standard error and the dark-gray one the network-robust standard error. The diamonds stand for the empirical standard error, defined as the standard deviation of the estimates of the regression coefficient. The estimator is considered as not covering the true standard error when the diamond is outside of the shaded area.}
		\end{tablenotes}
	\end{threeparttable}
	\label{fig:MonteCarloSimulation:Results:Boxplots08}
\end{figure}

Table \ref{tbl:MonteCarloSimulation:Results:MeanAndBiasOfTheStandardErrors:N=5000Gamma=0.8} describes the standard deviations of the estimated regression coefficients (what \citet{Aronow_et_al-2015} calls the true standard errors) and the means of the estimated standard errors for each variance estimator. The round brackets indicate the biases of each estimate relative to the true standard error in percentage ($\%$). For instance, the Eicker-Huber-White variance estimator and the dyadic-robust variance estimator, when applied to Specification 1 with $\nu=3$, underestimate the true standard error by 21.45\% and 14.14\%, respectively.


\begin{table}[htbp]
	\centering
	\caption{Means and Biases of the Standard Errors: $N=5000$, $S=2$, $\gamma=0.8$.} 
	\begin{threeparttable}
		{\small
		\begin{tabular}{lccccccc}\toprule \toprule 
 								& \multicolumn{3}{c}{Specification 1}  & & \multicolumn{3}{c}{Specification 2}  \\ \cmidrule{2-4} \cmidrule{6-8} 
 								& $\nu=1$	& $\nu=2$	& $\nu=3$	& & $\lambda=1$	& $\lambda=2$		& $\lambda=3$  \\ \midrule 
			True 					& 0.0430 	& 0.0518	& 0.0570	& & 0.0285		& 0.0283	&   0.0302     \\ \\ 
			
			Eicker-Huber-White 		& 0.0337	& 0.0404	& 0.0448 	& & 0.0234  		& 0.0229	& 0.0239     \\ 
			\qquad (Bias \%) 		& (-21.61) & (-21.92) & (-21.45)	& & (-17.91)   		& (-19.14)	& (-20.68)    \\ \\ 
			
			Dyadic-robust 			& 0.0403	& 0.0453 	& 0.0490 	& & 0.0270  		& 0.0266	& 0.0275     \\ 
			\qquad (Bias \%) 		& (-6.19)	& (-12.54) & (-14.14)	& & (-5.01)   		& (-6.12)	& (-8.93)    \\ \\ 

			Network-robust 		& 0.0425	& 0.0509 	& 0.0565 	& & 0.0280  		& 0.0285	& 0.0302     \\ 
			\qquad (Bias \%) 		& (-1.09)	& (-1.78) 	& (-0.92) 	& & (-1.70)   		& (0.58)	& (0.09)    \\ \bottomrule 
		\end{tabular}
		
		\begin{tablenotes}[para,flushleft]
			\item[] {\sl Note}: This table shows the standard deviations of the estimated regression coefficients (the true standard error) and the means of the estimated standard errors for each variance estimator with the round brackets indicating the biases relative to the true standard error in percentage ($\%$). To facilitate the comparison, the biases are rounded off to the second decimal places.
		\end{tablenotes}
		}
	\end{threeparttable}
	\label{tbl:MonteCarloSimulation:Results:MeanAndBiasOfTheStandardErrors:N=5000Gamma=0.8}
\end{table}


\subsection{$S=2$ and $\gamma=0.8$ with Higher Density Parameters}\label{appdx:AdditionalResults:S2gamma0.8WithHigherDensityParameters}


This subsection examines how an increase in the number of connected dyads affects the performance of the dyadic-robust variance estimator. Table \ref{tbl:MonteCarloSimulation:Results:CPACI_S2gamma0.8woKernelwHigherDensenessParameters} reports the results for the case of $S=2$ with $\gamma=0.8$, i.e., the same combination as the main text (Table \ref{tbl:MonteCarloSimulation:Results:CPACI_S2gamma0.8woKernel}), but for denser networks which set $\nu=4,5$ for Specification 1 and $\lambda=4,5$ for Specification 2. We find that, while our estimator performs well (with coverage close to the nominal level), the bias in the Eicker-Huber-White and dyadic-robust estimators variance estimators are present and increase as the network becomes denser. 

\begin{table}[htbp]
	\centering
	\caption{The empirical coverage probability and average length of confidence intervals for $\beta$ at $95\%$ nominal level: $S=2$, $\gamma=0.8$, higher denseness parameters.}
	\begin{threeparttable}
		\begin{tabular}{lcccccc} \toprule \toprule
							& $$ 		& \multicolumn{2}{c}{Specification 1} && \multicolumn{2}{c}{Specification 2} \\ \cmidrule{3-4}\cmidrule{6-7}
							& $N$ 	& $\nu=4$ & $\nu=5$ & & $\lambda=4$ & $\lambda=5$    \\ \midrule 
							&		& \multicolumn{5}{c}{Coverage Probability} \\ 
			Eicker-Huber-White	& 500 	& 0.8758	& 0.8688	& & 0.8744	& 0.8706     \\ 
  							& 1000 	& 0.8752 	& 0.8688 	& & 0.8694	& 0.8784     \\ 
  							& 5000 	& 0.8658 	& 0.8808 	& & 0.8694	& 0.8750     \\
			Dyadic-robust		& 500 	& 0.8912 	& 0.8808 	& & 0.9142	& 0.9058     \\ 
  							& 1000 	& 0.8936 	& 0.8852 	& & 0.9124	& 0.9160     \\ 
  							& 5000 	& 0.8940 	& 0.9020 	& & 0.9152	& 0.9176     \\
			Network-robust		& 500 	& 0.9084 	& 0.8928 	& & 0.9394	& 0.9368     \\ 
  							& 1000 	& 0.9246 	& 0.9190 	& & 0.9424	& 0.9494     \\ 
 							& 5000 	& 0.9404 	& 0.9436 	& & 0.9450	& 0.9512     \\ 
		\\					
							&		& \multicolumn{5}{c}{Average Length of the C.I.} \\ 
			Eicker-Huber-White	& 500 	& 0.5282	& 0.5577 	& & 0.3088	& 0.3230     \\ 
  							& 1000 	& 0.3964 	& 0.4155 	& & 0.2183	& 0.2323     \\ 
  							& 5000 	& 0.1944 	& 0.2132 	& & 0.0992	& 0.1045     \\
			Dyadic-robust 		& 500 	& 0.5580 	& 0.5841 	& & 0.3471	& 0.3601     \\ 
							& 1000 	& 0.4211 	& 0.4380 	& & 0.2465	& 0.2595     \\ 
  							& 5000 	& 0.2085 	& 0.2254 	& & 0.1124	& 0.1172     \\
			Network-robust		& 500 	& 0.6099 	& 0.6428 	& & 0.3825	& 0.4022     \\ 
  							& 1000 	& 0.4751 	& 0.4966 	& & 0.2743	& 0.2927     \\ 
  							& 5000 	& 0.2449 	& 0.2660 	& & 0.1259	& 0.1331     \\ \bottomrule
		\end{tabular}
		\begin{tablenotes}[para,flushleft]
			\item[] {\footnotesize {\sl Note}: The upper-half of the table displays the empirical coverage probability of the asymptotic confidence interval for $\beta_{}$, and the lower-half showcases the average length of the estimated confidence intervals. As the sample size ($N$) increases, the empirical coverage probability approaches $0.95$, the nominal level. This convergence is accompanied by the shrinking average length of confidence intervals.}
		\end{tablenotes}
	\end{threeparttable}
	\label{tbl:MonteCarloSimulation:Results:CPACI_S2gamma0.8woKernelwHigherDensenessParameters}
\end{table}


\subsection{$S=2$ and $\gamma=0.2$}\label{appdx:AdditionalResults:S2gamma0.2}


Table \ref{tbl:MonteCarloSimulation:Results:CPACI_S2gamma0.2woKernel} presents the empirical coverage probability and average length of confidence intervals for $\beta$ at $5\%$ nominal size when $S=2$ and $\gamma=0.2$. The associated boxplots are given in Figure \ref{fig:MonteCarloSimulation:Results:Boxplots_S2gamma0.2}. Since the magnitude of spillovers is now much smaller than the case of $\gamma=0.8$, there are only minor differences in performance between the network-robust variance estimator and the other two existing methods (namely, the Eicker-Huber-White and dyadic-robust variance estimators). In terms of convergence, the comparable performance of the dyadic-robust-variance estimator is evident in Figure \ref{fig:MonteCarloSimulation:Results:Boxplots_S2gamma0.2}. 

Comparing Table \ref{tbl:MonteCarloSimulation:Results:CPACI_S2gamma0.2woKernel} to Table \ref{tbl:MonteCarloSimulation:Results:CPACI_S2gamma0.8woKernel} highlights the impact of spillovers on the variance estimators. When the spillovers are substantially weak (e.g., $\gamma=0.2$), the dyadic-robust variance estimator can serve as a good substitute for the network-robust one. In the case of relatively high spillovers (e.g., $\gamma=0.8$), on the other hand, there are evident biases (around 4 percentage points for Specification 1 and 3 percentage points for Specification 2 when $N=5000$). Based on this comparison, we suggest that the network-robust variance estimator be used when the correlations are expected to be relatively strong.

\begin{table}[htbp]
	\centering
	\caption{The empirical coverage probability and average length of confidence intervals for $\beta$ at $95\%$ nominal level: $S=2$, $\gamma=0.2$.}
	\begin{threeparttable}
		\begin{tabular}{lcccccccc} \toprule \toprule
						& $$ 		& \multicolumn{3}{c}{Specification 1} && \multicolumn{3}{c}{Specification 2} \\ \cmidrule{3-5}\cmidrule{7-9}
								& $N$	& $\nu=1$	& $\nu=2$	& $\nu=3$	&& $\lambda=1$	& $\lambda=2$	& $\lambda=3$ \\ \midrule
								&		& \multicolumn{6}{c}{Coverage Probability} \\ 
		Eicker-Huber-White	& 500	& 0.9286 	& 0.9186 	& 0.9108 	& & 0.9434	& 0.9292	& 0.9308     \\ 
  						& 1000	& 0.9320 	& 0.9158 	& 0.9148 	& & 0.9338	& 0.9322	& 0.9322     \\ 
 						& 5000	& 0.9200 	& 0.9110 	& 0.9124 	& & 0.9434	& 0.9382	& 0.9308     \\
		Dyadic-robust 		& 500	& 0.9342 	& 0.9350 	& 0.9336 	& & 0.9454	& 0.9368	& 0.9422     \\ 
  						& 1000	& 0.9454 	& 0.9376 	& 0.9458 	& & 0.9398	& 0.9422	& 0.9486     \\ 
  						& 5000	& 0.9446 	& 0.9448 	& 0.9432 	& & 0.9486	& 0.9490	& 0.9472     \\
		Network-robust 	& 500	& 0.9284 	& 0.9246 	& 0.9162 	& & 0.9428	& 0.9360	& 0.9384     \\ 
  						& 1000	& 0.9414 	& 0.9294 	& 0.9370 	& & 0.9392	& 0.9410	& 0.9456     \\ 
  						& 5000	& 0.9454 	& 0.9476 	& 0.9418 	& & 0.9494	& 0.9492	& 0.9470     \\
  					\\
								&		& \multicolumn{6}{c}{Average Length of the Confidence Intervals} \\ 
		Eicker-Huber-White	& 500	& 0.1578	& 0.1214 	& 0.1092 	& & 0.1860	& 0.1360	& 0.1141     \\ 
  						& 1000	& 0.1088 	& 0.0846 	& 0.0743 	& & 0.1290	& 0.0955	& 0.0799     \\ 
  						& 5000	& 0.0486 	& 0.0388 	& 0.0346 	& & 0.0579	& 0.0423	& 0.0357     \\
		Dyadic-robust 		& 500	& 0.1648 	& 0.1316 	& 0.1213 	& & 0.1890	& 0.1410	& 0.1205     \\ 
  						& 1000	& 0.1158 	& 0.0931 	& 0.0833 	& & 0.1319	& 0.0994	& 0.0848     \\ 
  						& 5000	& 0.0532 	& 0.0439 	& 0.0398 	& & 0.0594	& 0.0443	& 0.0381     \\
		Network-robust 	& 500	& 0.1637 	& 0.1291 	& 0.1174 	& & 0.1885	& 0.1404	& 0.1196     \\ 
  						& 1000	& 0.1154 	& 0.0922 	& 0.0825 	& & 0.1318	& 0.0993	& 0.0848     \\ 
  						& 5000	& 0.0533 	& 0.0440	& 0.0401	& & 0.0594	& 0.0444	& 0.0382     \\ \bottomrule
		\end{tabular}
		\begin{tablenotes}[para,flushleft]
			\item[] {\footnotesize {\sl Note}: The upper-half of the table displays the empirical coverage probability of the asymptotic confidence interval for $\beta_{}$, and the lower-half showcases the average length of the estimated confidence intervals. One computational issue that plagues the Monte Carlo simulation is the potential lack of positive-semi-definiteness of the estimated variance-covariance matrix. In general, this problem prevails only when the sample size ($N$) is small. In our case, when $N=500$, four variance estimates out of five thousands take negative values. We deal with this issue by first applying the eigenvalue decomposition to the estimated variance-covariance matrix and then augmenting the diagonal matrix of eigenvalues by a small constant, followed by pre- and post-multiplications by the matrix of eigenvectors to obtain the updated estimate for the variance-covariance matrix. As the sample size ($N$) increases, the empirical coverage probability approaches $0.95$, the nominal level. This convergence is accompanied by the shrinking average length of confidence intervals.}
		\end{tablenotes}
	\end{threeparttable}
	\label{tbl:MonteCarloSimulation:Results:CPACI_S2gamma0.2woKernel}
\end{table}


\begin{figure}[htbp]
	\centering
	\caption{Boxplots of Standard Errors for Specifications 1 and 2 ($S=2$, $\gamma=0.2$)}
	\begin{threeparttable}
		\begin{tabular}{p{\textwidth}}
			\centering
			\includegraphics[keepaspectratio, scale=.70]{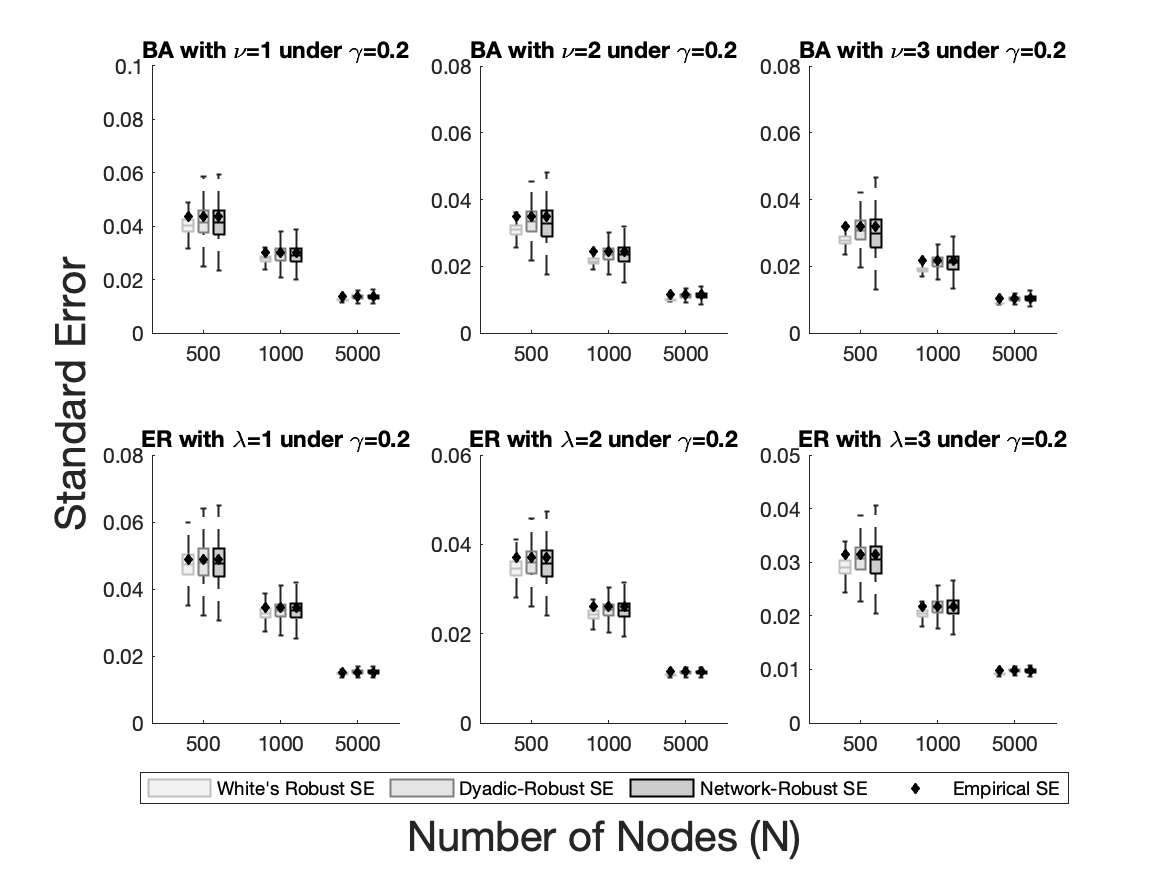}
		\end{tabular}

		\begin{tablenotes}[para,flushleft]
			\item[] 
			{\footnotesize {\sl Note}: This figure shows boxplots describing the estimated standard errors and the empirical standard errors for various combinations of parameters under Specification 1 (Barab\'asi-Albert networks) and Specification 2 (Erd\"os-Renyi networks). The horizontal axis shows the number of nodes and the vertical axis represents the the standard error of the coefficient. The shaded boxes represent the 25th, 50th and 75th percentiles of estimated standard errors with the whiskers indicating the most extreme values that are not considered as outliers. The light-gray box illustrates the Eicker-Huber-White standard error, the medium-gray one the dyadic-robust standard error and the dark-gray one the network-robust standard error. The diamonds stand for the empirical standard error, defined as the standard deviation of the estimates of the regression coefficient. This figure showcases the boxplots for the case when $\gamma=0.2$.}
		\end{tablenotes}
	\end{threeparttable}
	\label{fig:MonteCarloSimulation:Results:Boxplots_S2gamma0.2}
\end{figure}


\subsection{$S=1$}\label{appdx:AdditionalResults:S1}

For comparison purposes, this subsection explores the results for $S=1$. If $S=1$, there are no higher-order correlations beyond direct (adjacent) neighbors. Then, the network-robust variance estimator ought to coincide with the dyadic-robust variance estimator by definition, for any $\gamma$, as pointed out in Example \ref{ex:ExogenousRegression:Dyadic-RobustVarianceEstimator}. This is verified below for the case of $\gamma = 0.8$. Table \ref{tbl:MonteCarloSimulation:Results:CPACI_S1gamma0.8woKernel} shows the simulation results.


\begin{table}[htbp]
	\centering
	\caption{The empirical coverage probability and average length of confidence intervals for $\beta$ at $95\%$ nominal level: $S=1$, $\gamma=0.8$.}             
	\begin{threeparttable}
		\begin{tabular}{lcccccccc} \toprule\toprule
							& $$ 		& \multicolumn{3}{c}{Specification 1} && \multicolumn{3}{c}{Specification 2} \\ \cmidrule{3-5}\cmidrule{7-9}
							& $N$ 	& $\nu=1$	& $\nu=2$	& $\nu=3$	& & $\lambda=1$	& $\lambda=2$	& $\lambda=3$ \\ \midrule
							&		& \multicolumn{6}{c}{Coverage Probability}						\\ 
			Eicker-Huber-White 	& 500 	& 0.8804 	& 0.8676 	& 0.8734 	& & 0.8906	& 0.8768	& 0.8692     	\\ 
  							& 1000	& 0.8678 	& 0.8810 	& 0.8710 	& & 0.8984	& 0.8864	& 0.8856		\\ 
  							& 5000 	& 0.8752 	& 0.8652 	& 0.8742 	& & 0.8996	& 0.8910	& 0.8778     	\\
			Dyadic-robust 		& 500 	& 0.9292 	& 0.9304 	& 0.9384 	& & 0.9366	& 0.9416	& 0.9368     	\\ 
  							& 1000 	& 0.9364 	& 0.9426 	& 0.9432 	& & 0.9428	& 0.9454	& 0.9484     	\\ 
  							& 5000 	& 0.9474 	& 0.9414 	& 0.9498 	& & 0.9452	& 0.9518	& 0.9506     	\\
			Network-robust 	& 500 	& 0.9292 	& 0.9304 	& 0.9384 	& & 0.9366	& 0.9416	& 0.9368     	\\ 
  							& 1000 	& 0.9364 	& 0.9426 	& 0.9432 	& & 0.9428	& 0.9454	& 0.9484     	\\ 
  							& 5000 	& 0.9474	& 0.9414 	& 0.9498 	& & 0.9452	& 0.9518	& 0.9506     	\\
  			\\
							&		& \multicolumn{6}{c}{Average Length of the Confidence Intervals} 		\\ 
			Eicker-Huber-White 	& 500 	& 0.3282 	& 0.2901 	& 0.2881 	& & 0.2664	& 0.2377	& 0.2235     	\\ 
  							& 1000 	& 0.2321 	& 0.2088 	& 0.1964 	& & 0.1887	& 0.1665	& 0.1564     	\\ 
  							& 5000 	& 0.1131 	& 0.1042 	& 0.0980 	& & 0.0844	& 0.0742	& 0.0704     	\\
			Dyadic-robust 		& 500 	& 0.3934 	& 0.3591 	& 0.3603 	& & 0.3104	& 0.2888	& 0.2776     	\\ 
  							& 1000 	& 0.2853 	& 0.2625 	& 0.2500 	& & 0.2227	& 0.2037	& 0.1950     	\\ 
  							& 5000 	& 0.1428 	& 0.1330 	& 0.1259 	& & 0.0998	& 0.0913	& 0.0882     	\\
			Network-robust 	& 500 	& 0.3934 	& 0.3591 	& 0.3603 	& & 0.3104	& 0.2888	& 0.2776     	\\ 
  							& 1000 	& 0.2853 	& 0.2625 	& 0.2500 	& & 0.2227	& 0.2037	& 0.1950     	\\ 
  							& 5000 	& 0.1428	& 0.1330	& 0.1259	& & 0.0998	& 0.0913	& 0.0882     	\\ \bottomrule
		\end{tabular}
	
		\begin{tablenotes}[para,flushleft]
			\item[] {\footnotesize {\sl Note}: The upper-half of the table displays the empirical coverage probability of the asymptotic confidence interval for $\beta_{}$, and the lower-half showcases the average length of the estimated confidence intervals. As the sample size ($N$) increases, the empirical coverage probability approaches $0.95$, the nominal level. This convergence is accompanied by the shrinking average length of confidence intervals.}
		\end{tablenotes}
	\end{threeparttable}
	\label{tbl:MonteCarloSimulation:Results:CPACI_S1gamma0.8woKernel}
\end{table}

\subsection{$S=\infty$ with the Parzen kernel}\label{appdx:AdditionalResults:SInfty}

In this subsection, we investigate the consequences of adaptively choosing the value of the lag-truncation parameter following the rule outlined in the main text. To this end, we set $S=\infty$ (i.e., spillovers may propagate to all neighbors), with the magnitude of the spillovers controlled by $\gamma = 0.8$ (the same as in the main text). In this environment, the spillovers are never truncated while decaying as they propagate farther. With regards to estimation, we consider the Parzen kernel, letting the lag-truncation parameter be chosen on the basis of \citet{Kojevnikov_et_al-2021}. The simulation results are given in Table \ref{tbl:MonteCarloSimulation:Results:CPACI_SInftygamma0.8wKernel}, while the selected lag-truncation parameters are shown in Table \ref{tbl:MonteCarloSimulation:Results:Lag-TruncationParameters}. 

The empirical coverage probability based on the network-robust variance estimator approaches to 95\%, as expected. On the other hand, both the Eicker-Huber-White and dyadic-robust variance estimator understate the targeted nominal level, as claimed in the main text. It should be noted that these biases can become larger when the decay rate is slower. We focus on Specification 2, as it likely satisfies the assumptions above under $S=\infty$. After all, with $S=\infty$ and a very dense network, Assumption 3.4 is violated. 


\begin{table}[htbp]
	\centering
	\caption{The empirical coverage probability and average length of confidence intervals for $\beta$ at $95\%$ nominal level, Specification 2: $S=\infty$, $\gamma=0.8$, the Parzen kernel.}             
	\begin{threeparttable}
		\begin{tabular}{lcccc} \toprule\toprule
							& $N$ & $\lambda=1$	& $\lambda=2$	& $\lambda=3$ \\ \midrule
							&		& \multicolumn{3}{c}{Coverage Probability} \\ 
			Eicker-Huber-White	& 500  & 0.8884	& 0.8768	& 0.8718     \\ 
  							& 1000 	 & 0.8892	& 0.8832	& 0.8806     \\ 
  							& 5000 	 & 0.8966	& 0.8820	& 0.8806     \\
			Dyadic-robust 		& 500  & 0.9282	& 0.9150	& 0.8964     \\ 
  							& 1000	 & 0.9300	& 0.9214	& 0.9126     \\ 
  							& 5000 & 0.9384	& 0.9272	& 0.9118     \\
			Network-robust 	& 500 	 & 0.9366	& 0.9302	& 0.9180     \\ 
  							& 1000  & 0.9382	& 0.9386	& 0.9362     \\ 
  							& 5000  & 0.9480	& 0.9510	& 0.9462     \\
  			\\
							&		& \multicolumn{3}{c}{Average Length of C.I.} \\ 
			Eicker-Huber-White	& 500  & 0.2890	& 0.3085	& 0.3658     \\ 
  							& 1000 & 0.2103	& 0.2241	& 0.2570     \\ 
  							& 5000 	& 0.0933	& 0.0994	& 0.1188     \\
			Dyadic-robust & 500		 & 0.3293	& 0.3483	& 0.3966     \\ 
  							& 1000  & 0.2405	& 0.2526	& 0.2809     \\ 
  							& 5000 	& 0.1075	& 0.1127	& 0.1300     \\
			Network-robust		& 500 	& 0.3387	& 0.3705	& 0.4233     \\ 
  							& 1000 	 & 0.2502	& 0.2729	& 0.3070     \\ 
  							& 5000 	 & 0.1120	& 0.1233	& 0.1457     \\ \bottomrule
		\end{tabular}
	
		\begin{tablenotes}[para,flushleft]
			\item[] {\footnotesize {\sl Note}: The upper-half of the table displays the empirical coverage probability of the asymptotic confidence interval for $\beta_{}$, and the lower-half showcases the average length of the estimated confidence intervals. As the sample size ($N$) increases, the empirical coverage probability approaches $0.95$, the nominal level. This convergence is accompanied by the shrinking average length of confidence intervals.}
		\end{tablenotes}
	\end{threeparttable}
	\label{tbl:MonteCarloSimulation:Results:CPACI_SInftygamma0.8wKernel}
\end{table}


\begin{table}[htbp]
	\centering
	\caption{The lag-truncation parameters for Table \ref{tbl:MonteCarloSimulation:Results:CPACI_SInftygamma0.8wKernel} based on the \citepos{Kojevnikov_et_al-2021} rule.}             
	\begin{threeparttable}
		\begin{tabular}{lccc}\toprule \toprule 
  			$N$	  & $\lambda=1$	& $\lambda=2$	& $\lambda=3$  \\ \midrule
  			500	 & 224.3186		& 17.5262		& 12.0174  \\
  			1000	& 254.5841		& 20.0388		& 13.4822  \\ 
  			5000	&  320.3268		& 24.1851		& 16.0915  \\ \bottomrule 
		\end{tabular} 
		\end{threeparttable}
					\parbox{6.2in}{\footnotesize Note: This table displays the lag-truncation parameters $b_M$ for the simulations in Table \ref{tbl:MonteCarloSimulation:Results:CPACI_SInftygamma0.8wKernel}, selected using the rule: $b_M = 2 \log(M)/\log(\max(average~degree, 1.05))$, with $M$ denoting the number of active dyads.}
			\label{tbl:MonteCarloSimulation:Results:Lag-TruncationParameters}
\end{table}

\newpage

\section{Additional Information for the Empirical Illustration}\label{appndx:EmpiricalIllustration}


\subsection{Seating Arrangement at the European Parliament}\label{appndx:EmpiricalIllustration:SeatingArrangement}

Figure \ref{fig:SeatingPlanAtTheEuropeanParliament} exhibits an example of the seating arrangement at the European Parliament, and describes how we construct an adjacency relationship among MEPs within their EPG groups.

\begin{figure}[htbp]
	\caption{Seating Plan at the European Parliament: Strasbourg, September 14, 2009}
	\begin{threeparttable}
	\begin{tabular}{p{\textwidth}}
	\centering
	\includegraphics[keepaspectratio, scale=0.50]{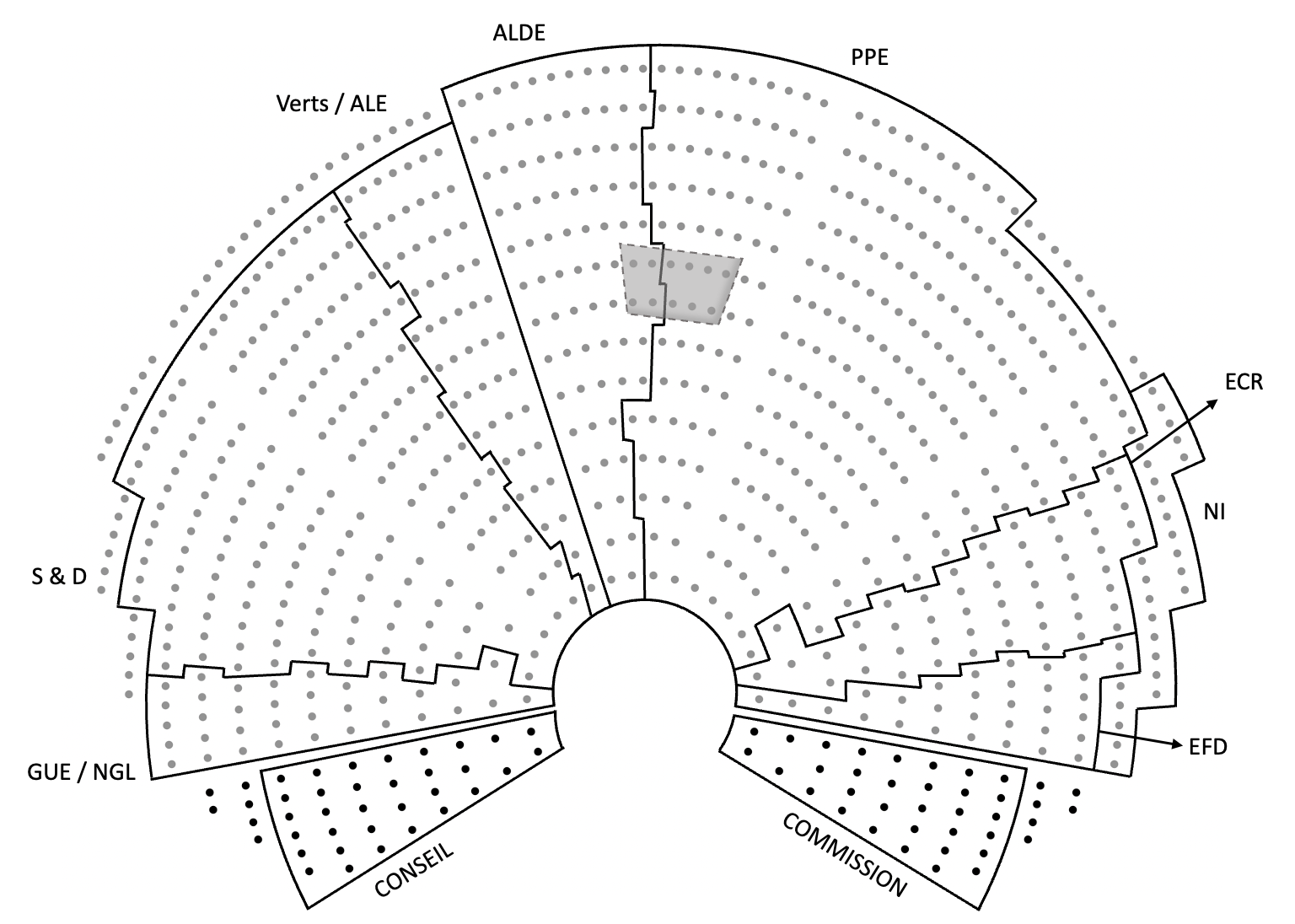}
	\centering
	\vspace{5mm}
	\scalebox{0.85}{
      	\begin{tabular}{lcccccccccccccc} \toprule \toprule
			& \multicolumn{5}{c}{ALDE} & \multicolumn{8}{c}{PPE} \\
		Row 10	& $\cdots$ & \multicolumn{2}{c}{$\vdots$} & & \multicolumn{1}{|c}{$\vdots$} & \multicolumn{2}{c}{$\vdots$} & \multicolumn{2}{c}{$\vdots$} & \multicolumn{2}{c}{$\vdots$} & \multicolumn{2}{c}{$\vdots$} & $\cdots$ \\ \cline{6-6}
			&  & & & & & \multicolumn{1}{|c}{\phantom{Seat}} & & & & & && & \\ 
		\multirow{2}{*}{Row 9}	& \multirow{2}{*}{$\cdots$} & \multicolumn{2}{c}{Seat 36:} & \multicolumn{2}{c}{Seat 37:} & \multicolumn{2}{|c}{Seat 38:} & \multicolumn{2}{c}{Seat 39:} & \multicolumn{2}{c}{Seat 40:} & \multicolumn{2}{c}{Seat 41:} & \multirow{2}{*}{$\cdots$} \\ 
			&  & \multicolumn{2}{c}{ALFANO} & \multicolumn{2}{c}{ALVARO} & \multicolumn{2}{|c}{CAVADA} & \multicolumn{2}{c}{COELHO} & \multicolumn{2}{c}{COLLINO} & \multicolumn{2}{c}{COMI} &   \\ 	
			&  & & & & & \multicolumn{1}{|c}{\phantom{Seat}} & & & & & && & \\ \cline{7-7}
			& 			&						& \multicolumn{2}{c}{} & \multicolumn{2}{c}{} & \multicolumn{2}{|c}{} & \multicolumn{2}{c}{} & \multicolumn{2}{c}{} & \\	
		\multirow{2}{*}{Row 8}	& \multicolumn{2}{c}{\multirow{2}{*}{$\cdots$}} & \multicolumn{2}{c}{Seat 32:} & \multicolumn{2}{c}{Seat 33:} & \multicolumn{2}{|c}{Seat 34:} & \multicolumn{2}{c}{Seat 35:} & \multicolumn{2}{c}{Seat 36:} & \multirow{2}{*}{$\cdots$} \\ 
						& 			&						& \multicolumn{2}{c}{OJULAND} & \multicolumn{2}{c}{OVIIR} & \multicolumn{2}{|c}{BACH} & \multicolumn{2}{c}{BALDASSARRE} & \multicolumn{2}{c}{BALZ} & \\
				&  & & & & & & \multicolumn{1}{|c}{\phantom{Seat}} & & & & && & \\ \cline{7-7}
		Row 7	& $\cdots$ & \multicolumn{2}{c}{$\vdots$} & \multicolumn{2}{c}{$\vdots$} & \multicolumn{2}{|c}{$\vdots$} & \multicolumn{2}{c}{$\vdots$} & \multicolumn{2}{c}{$\vdots$} & \multicolumn{2}{c}{$\vdots$}  & $\cdots$\\ 
		\bottomrule
	\end{tabular}
	}
	\end{tabular}
		\begin{tablenotes}[para,flushleft]
			\item[] {\footnotesize {\sl Note}: The upper panel illustrates a zoomed-out view of a seating plan for the European parliament in Strasbourg on September 14, 2009. gray circles are individual MEPs, while black circles embody members of conseil and commission. The associated party (EPG) is denoted at the top. The lower panel provides a zoomed-in view elaborating on the part of the upper panel marked by the dotted trapezoid shaded in gray. Alafano and Alvaro are treated as adjacent because they are sitting next to each other and belong to the same political party, i.e., ALE. Similarly, Ojuland and Oviir are considered to be adjacent. On the other hand, following the original authors, Alvaro and Cavada are not regarded as adjacent though they are seated together because they belong to different political parties, i.e., ALE and PPE, respectively. In terms of dyad-level adjacency, Cavada-Coelho and Coelho-Collino are adjacent dyads as they share Coelho, whereas Cavada-Coelho and Collino-Comi are not adjacent, but they are still connected as they have indirect paths to one another along the dyadic network.}
		\end{tablenotes}
	\end{threeparttable}
	
	\label{fig:SeatingPlanAtTheEuropeanParliament}
\end{figure}

\subsection{Summary Statistics of the Seating Arrangement}\label{appndx:EmpiricalIllustration:SummaryStastisticsOfTheSeatingArrangement}

Table \ref{tbl:EmpiricalIllustration:SummaryStastisticsOfSeatingArrangement} lists the summary statistics of the seating arrangement (for Strasbourg at term 7) when viewed as a network over pairs of MEPs. Its summary statistics are consistent with those from the Erd\"os-Renyi random network with $\lambda=1$ to $\lambda=3$ (see Table \ref{tbl:MonteCarloSimulation:DegreeCharacteristics:AverageMaxDegree_Edges}). This suggests that our empirical illustration should perform well with the Parzen kernel and bandwidth choice proposed in \citet{Kojevnikov_et_al-2021}.

\begin{table}[htbp]    
	\centering             
 	\caption{Summary Statistics of The Seating Arrangement: Strasbourg, Term 7}
	\begin{threeparttable}
		\centering
		\begin{tabular}{ccccc} \toprule \toprule   
			$d_{act}$	& $d_{max}$	& $d_{ave}$	& $e_{direct}$	& $e_{indirect}$	\\ \midrule 
			602		& 2			& 1.7076		& 514		& 3136			\\ \bottomrule 
		\end{tabular}
	\end{threeparttable}
			\parbox{6.2in}{\footnotesize See Table \ref{tbl:MonteCarloSimulation:DegreeCharacteristics:AverageMaxDegree_Edges} for the definition of the first three indicators. The last two represent the number of adjacent and connected dyads, respectively.}
	\label{tbl:EmpiricalIllustration:SummaryStastisticsOfSeatingArrangement}
\end{table} 

\subsection{Data Construction}\label{appndx:EmpiricalIllustration:DataConstruction}

Data construction for our empirical exercise in Section \ref{sec:EmpiricalIllustration} proceeds in multiple steps:
\begin{description}
	\item[Step 1:] Our subsample consists of the location of interest (i.e., Strasbourg) for the period of interest (i.e., Term 7). We select a further subset of the extracted data by seating arrangement (i.e., we focus on Pattern 1 for the present analysis - see Table \ref{tbl:EmpiricalApplication:Results:PatternsSeatingArrangements:StrasbourgTerm7}). 
	\item[Step 2:] Since our analysis is concerned with voting concordance, we follow the original authors in dropping entries with missing data or ``abstain'' in the variable ``vote.''\footnote{This amounts to assuming that those observations are missing completely at random (MCAR).} 
	\item[Step 3:] The resulting data still contains individuals belonging to ``Identity, Tradition and Sovereignty (ITS),'' one of the European Political Groups that dissolved in November 7, during the sixth term. We drop such MEPs from our analysis. 
	\item[Step 4:] The selected data is used to form the dyadic data registering the pair-of-MEPs-specific information. When pairing two MEPs, we follow \citet{Harmon_et_al-2019} in focusing on those pairs of MEPs, both of whom are
		\begin{itemize}
			\item[(i)] in the same EPG;
			\item[(ii)] from an alphabetically-seated EPG; and
			\item[(iii)] non-leaders at the time of voting.
		\end{itemize} 
	Our dyadic data consists of two types of variables: binary variables and numerical variables. The dyad-level binary (i.e., indicator) variables are defined to be one if the individual-level binary variables are the same, and zero otherwise. The dyadic-specific numerical variables in our analysis are the differences between the individual-level numerical variables, such as age and tenure. When calculating the differences in ages and tenures, we take the absolute values as we do not consider directional dyads, and we then rescale them into ten-year units. See the note below Table \ref{tbl:EmpiricalApplication:Results:MainAnalysis:StrasbourgTerm7} for details.
\end{description}

\begin{table}[htbp]
	\centering
	\caption{Patterns of Seating Arrangements: Strasbourg, Term 7}
	\begin{threeparttable}
		\centering
		{\small
		\begin{tabular}{lrcrc} \toprule \toprule
			Pattern	& \multicolumn{3}{c}{Date}			& Number of Proposals	\\ \midrule
			1		& 7/14/2009	& $\sim$	& 7/16/2009	& 116 \\ 
			2		& 8/18/2009	& $\sim$	& 8/21/2009	& 72 \\ 
			3		& 9/23/2009	& $\sim$	& 9/25/2009	& 114 \\ 
			4		& 10/13/2009	& $\sim$	& 10/16/2009	& 40 \\ 
			5		& 11/19/2009	& $\sim$	& 12/11/2009	& 94 \\ 
			6		& 1/5/2010	& $\sim$	& 1/8/2010 	& 79 \\ 
			7		& 3/17/2010	& $\sim$	& 3/19/2010	& 45 \\ 
			8		& 4/14/2010	& $\sim$	& 4/16/2010	& 120 \\ 
			9		& 5/5/2010	& $\sim$	& 5/7/2010	& 79 \\ 
			10		& 7/7/2010	& $\sim$	& 7/9/2010	& 34 \\ 
			11		& 7/21/2010	& $\sim$	& 7/22/2010	& 50 \\ 
			12		& 8/18/2010	& $\sim$	& 8/20/2010	& 118 \\ \bottomrule
		\end{tabular}
		}

		\begin{tablenotes}[para,flushleft]
			\item[] {\sl Note}: This table presents patterns of seating arrangements with the corresponding dates and the number of total observations for each pattern. Since voting may be taken place for multiple proposals within the same day, the total number proposals tends to be higher than that of days in a single pattern. For example, the first line indicates that 116 proposals were discussed and votes were cast over the three days (from the 14th of July, 2009 to the 16th of July, 2009).
		\end{tablenotes}
	\end{threeparttable}
	\label{tbl:EmpiricalApplication:Results:PatternsSeatingArrangements:StrasbourgTerm7}
\end{table}

\subsection{Full Results}\label{appndx:EmpiricalIllustration:FullResults}

Table \ref{tbl:EmpiricalApplication:Results:MainAnalysis:StrasbourgTerm7_2} reports the detailed result of the empirical illustration. As explained in Section \ref{sec:EmpiricalIllustration:Results}, Panel A reports the estimates of the parameter of interest, while Panel B lists the standard errors based on the different variance estimators. In particular, we carry out the estimation using both the network-robust variance estimator with the mean-shifted rectangular kernel, and the one with the Parzen kernel, generating the same estimates. Panel C collects the parameter estimates for other covariates accompanied by the standard errors obtained from our proposed variance estimator from equation (\ref{eq:ExogenousRegression:NetworkVarianceEstimator}), and indicates the presence or absence of day-level fixed effects.

\begin{table}[htbp]
	\centering
	\caption{Spillovers in Legislative Voting -- Main Analysis}
	\begin{threeparttable}
		\centering
		{\scriptsize
		\begin{tabular}{lccc} \toprule \toprule
									& Specification (I) 	& Specification (II) 	& Specification (III) \\ \midrule
			{\sl Panel A: Parameter estimates for Seat neighbors}&&&					 \\
			\quad Seat neighbors		& 0.0069 		& 0.0060		& 0.0060  \\ \\
			
			{\sl Panel B: Standard errors for Seat neighbors} &&& \\
			\quad Eicker-Huber-White			& 0.0031 		& 0.0030	& 0.0030 \\
			\quad Dyadic-robust			& 0.0075	 	& 0.0082 	& 0.0087 \\
			\quad Network-robust (with the rectangular kernel)    & 0.0095    & 0.0104    & 0.0112    \\ 
			\quad Network-robust (with the Parzen kernel)    & 0.0095    & 0.0104    & 0.0112    \\ \\
			
			{\sl Panel C: Parameter estimates for other covariates} &&& \\
			\quad Same country			&			& 0.0561		& 0.0562 \\
									& 			& (0.0008)		& (0.0008) \\
			\quad Same quality education 	&			& 0.0030		& 0.0028 \\
									& 			& (0.0007)		& (0.0007)	 \\
			\quad Same freshman status	&			& -0.0070		& -0.0070 \\
									&			& (0.0008)		& (0.0008) \\
			\quad Same gender			&			& 0.0004		& 0.0004 \\
									& 			& (0.0007)		& (0.0006) \\
			\quad Age difference			&			& 0.0007		& 0.0004 \\
									& 			& (0.0004)		& (0.0004) \\
			\quad Tenure difference		&			& -0.0149		& -0.0149 \\ 
									& 			& (0.0006)		& (0.0006) \\ \\
		
			\quad Day-level FE			& No			& No			& Yes \\ \\ \bottomrule
		\end{tabular}
		}
		\begin{tablenotes}[para,flushleft]
			\item[] {\footnotesize {\sl Note}: Panel A displays the parameter estimates for the three different specifications; Panel B shows the standard errors for the regression coefficient of {\sl SeatNeighbors} using different variance estimators; and Panel C collects the parameter estimates for other covariates accompanied by the standard errors obtained from our proposed variance estimator from equation (\ref{eq:ExogenousRegression:NetworkVarianceEstimator}), and indicates the presence or absence of day-level fixed effects. Adjacency of MEPs is defined at the level of a row-by-EP-by-EPG. (See the note below Figure \ref{fig:SeatingPlanAtTheEuropeanParliament}.) Independent variables are as follows: {\sl Seat neighbors} is an indicator variable denoting whether both MEPs sit together; {\sl Same country} represents an indicator for whether both MEPs are from the same country; {\sl Same quality education} is an indicator showing whether both MEPs have the same quality of education background, measured by if both have the degree from top 500 universities; {\sl Same freshman status} encodes whether both MEPs are freshman or not; {\sl Age difference} is the difference in the MEPs' ages; and {\sl Tenure difference} measures the difference in the MEPs' tenures.}
		\end{tablenotes}
	\end{threeparttable}
	\label{tbl:EmpiricalApplication:Results:MainAnalysis:StrasbourgTerm7_2}
\end{table}

\end{document}